\def\Lj{m_j+1}
\def\opDelta{\widehat{\Delta}}
\def\opB{\widehat{B}}
\def\opJ{J}
\def\opL{\widehat{L}}
\def\psidist{\widetilde{\psi}} 
\def\psidistex{\widetilde{\psi}}
\def\tr{{\rm Tr}}
\def\N{{\mathbb N}}
\def\R{{\mathbb R}}
\def\ut{{\underline{t}}}
\def\ux{{\underline{x}}}
\def\frH{{\mathfrak H}}
\def\cB{{\mathcal B}}
\def\cH{{\mathcal H}}
\def\cM{{\mathcal M}}
\def\cN{{\mathcal N}}
\def\1{{\bf 1}}
\def\eqnn{\begin{eqnarray*}}
\def\eeqnn{\end{eqnarray*}}
\def\eqn{\begin{eqnarray}}
\def\eeqn{\end{eqnarray}}
\def\prf{\begin{proof}}
\def\endprf{\end{proof}}
\theoremstyle{plain}
\newtheorem{theorem}{Theorem}[section]
\newtheorem{definition}[theorem]{Definition}
\newtheorem{proposition}[theorem]{Proposition}
\newtheorem{hypothesis}{Hypothesis}[section]
\newtheorem{lemma}[theorem]{Lemma}
\numberwithin{equation}{section}
\begin{document}

\parskip=8pt

\title[GP hierarchy and Quantum de Finetti]
{Unconditional uniqueness  for the cubic Gross-Pitaevskii hierarchy via quantum de Finetti}

\author[T. Chen]{Thomas Chen}
\address{T. Chen,  
Department of Mathematics, University of Texas at Austin.}
\email{tc@math.utexas.edu}

\author[C. Hainzl]{Christian Hainzl}
\address{C. Hainzl,
Fachbereich Mathematik,  Universit\"at T\"ubingen, Germany.}
\email{christian.hainzl@uni-tuebingen.de}

\author[N. Pavlovi\'{c}]{Nata\v{s}a Pavlovi\'{c}}
\address{N. Pavlovi\'{c},  
Department of Mathematics, University of Texas at Austin.}
\email{natasa@math.utexas.edu}

\author[R. Seiringer]{Robert Seiringer}
\address{R. Seiringer,
Institute of Science and Technology Austria (IST Austria).}
\email{robert.seiringer@ist.ac.at}

\maketitle

\begin{abstract}
We present a new, simpler proof of the  unconditional uniqueness of solutions to the cubic 
Gross-Pitaevskii hierarchy in $\R^3$. 
One of the main tools in our analysis is  the quantum de Finetti theorem.
Our uniqueness result is equivalent to the one established in the
celebrated works of Erd\"os, Schlein and Yau, \cite{esy1,esy2,esy3,esy4}.
\end{abstract}

\section{Introduction} 
\label{sec-mainres-1}

In this paper, we give a new proof of unconditional uniqueness of solutions to the cubic 
Gross-Pitaevskii (GP) hierarchy in $\R^3$.   
The cubic GP hierarchy is a system of infinitely many coupled linear PDE's
describing a Bose gas of infinitely many
particles, interacting via repulsive two-body delta interactions in the defocusing case, and 
via attractive two-body delta interactions in the focusing case.
In the defocusing case, it emerges from the $N\rightarrow\infty$ limit of the BBGKY hierarchy of marginal density matrices
for a bosonic $N$-particle Schr\"odinger system where the pair interaction potentials tend to a delta
distribution as $N\rightarrow\infty$.
Factorized solutions to  GP hierarchies are determined by solutions of the
corresponding cubic nonlinear Schr\"odinger (NLS) equation.
In this sense, the NLS is interpreted as the mean field description of an infinite system of interacting
bosons in the Gross-Pitaevskii limit.

The derivation of the nonlinear Hartree (NLH) equation from an interacting Bose gas
was first given by Hepp in \cite{he}. 
The BBGKY hierarchy was prominently used in the works of
Lanford for the study of classical mechanical systems in the infinite particle limit \cite{Lan-1,Lan-2}.
Subsequently, the first derivation of the NLH via the BBGKY hierarchy was given
by Spohn in \cite{sp}. More recently, this topic was revisited by Fr\"ohlich, Tsai and Yau in \cite{frtsya},
and in the last few years, Erd\"os, Schlein and Yau have
further developed the  BBGKY hierarchy approach to the derivation of 
the NLH and NLS in their landmark works \cite{esy1,esy2,esy3,esy4}, which initiated 
much of the current widespread interest in this research topic.
The proof strategy can be briefly summarized as follows.
We consider $N$ bosons in $\R^3$ 
described by the wave function 
$\Phi_{N}\in L_{sym}^2(\R^{3N})$, where $\Phi_{N}(x_1,\dots,x_N)$
is symmetric under permutation of particle variables,
and satisfies the  Schr\"odinger equation
\begin{equation}\label{ham1}
i\partial_{t}\Phi_{N} \, = \, H_{N}\Phi_{N} \,
\end{equation} 
with $N$-body Hamiltonian  
\begin{equation}\label{ham2}
	H_{N} \, = \, \sum_{j=1}^{N}(-\Delta_{x_{j}})+\frac1N\sum_{1\leq i<j \leq N}V_N(x_{i}-x_{j}) \,.
\end{equation}
The pair interaction potential has the form $V_N(x)=N^{3\beta}V(N^\beta x)$ with $\beta\in(0,1]$. 
$V$ is a sufficiently regular, rotationally symmetric pair interaction
potential (we will not specify $V$ in detail since the derivation of the NLS is not the topic of this paper). We assume that
$\int V(x)dx=1$ if $\beta<1$, and that the scattering length corresponding to $V$ has value 1 if $\beta=1$. 
Clearly, $V_N\rightharpoonup(\int V dx) \, \delta$ as $N\rightarrow\infty$.

For $k=1,\dots,N-1$, the
$k$-particle marginal density matrices 
are obtained from   
\eqn
    \gamma_{N}^{(k)}:=\tr_{k+1, k+2,...,N}|\Phi_{N}\rangle\langle\Phi_{N}| \,,
\eeqn
where $\tr_{k+1, k+2,...,N}$ denotes the partial trace with respect to the particle variables indexed
by $k+1, k+2,..., N$.  
It follows immediately that the property of {\em admissibility} holds,
\eqn\label{eq-BBGKYadm-1}
	\gamma^{(k)}_{N} \, = \, \tr_{k+1}(\gamma^{(k+1)}_{N})
	\; \; \; \; \;  , \; \; \; \; k\, = \, 1,\dots,N-1 \,,
\eeqn 
for $1\leq k\leq N-1$, and that 
$\tr (\gamma_{N}^{(k)})=\|\Phi_N\|_{L^{2} }^2=1$ for all $k<N$.

The $k$-particle marginals satisfy the $N$-particle BBGKY hierarchy 
\eqn\label{BBGKY}
	\lefteqn{
	i\partial_{t}\gamma_{N}^{(k)}(t,\ux_k;\ux_k')
	 \, = \, 
	-\sum_{j=1}^k(\Delta_{x_j}-\Delta_{x_j'})\gamma_{N}^{(k)}(t,\ux_k,\ux_k')
	}
	\nonumber\\
	&&
	+ \frac{1}{N}\sum_{1\leq i<j \leq k}[V_N(x_i-x_j)-V_N(x_i^{\prime}-x_{j}')]
	\gamma_{N}^{(k)}(t, \ux_{k};\ux_{k}') 
	\label{eq-bbgky-1}\\
	&&+\frac{N-k}{N}\sum_{i=1}^{k}\int dx_{k+1}[V_N(x_i-x_{k+1})-V_N(x_i^{\prime}-x_{k+1})]
	\nonumber\\  
	&&\hspace{6cm}
	\cdot\,\gamma_{N}^{(k+1)}(t, \ux_{k},x_{k+1};\ux_{k}',x_{k+1})\,.
	\nonumber
\eeqn  
In the limit as $N\rightarrow\infty$, solutions $\gamma_N^{(k)}$ to the BBGKY hierarchy 
tend to solutions $\gamma^{(k)}$ of the cubic, defocusing GP hierarchy, which we introduce below.
In \cite{esy1,esy2,esy3,esy4}, this is obtained from a weak-* limit in the trace class, for $\beta\in(0,1]$.
The case $\beta=1$ covered in  \cite{esy1,esy2,esy3,esy4} is a major achievement that is much harder than $\beta<1$. 
Strong convergence in a class of Hilbert-Schmidt type  
is obtained in \cite{CPBBGKY} for $\beta<\frac14$,
and subsequently in \cite{xch3,CheHol-2013} for $\beta<\frac23$.

The cubic GP hierarchy on $\R^3$  for an infinite sequence of 
bosonic
marginal density matrices $(\gamma^{(k)})_{k\in\N}$ is given by the infinite system of coupled linear PDE's
\begin{align} \label{eq-def-GP}
	i\partial_t \gamma^{(k)} &=\sum_{j=1}^k [-\Delta_{x_j},\gamma^{(k)}]   
	\, + \,   \lambda B_{k+1} \gamma^{(k+1)}   \,, \;\;k\in\N\,,
\end{align}
for suitable initial data $(\gamma^{(k)}(0))_{k\in\N}$,
where $\gamma^{(k)}(t;\ux_k;\ux_k')$ is fully symmetric under permutations separately of the
components of $\ux_k:=(x_1,\dots,x_k)$, and of the components of $\ux_k':=(x_1',\dots,x_k')$. 
The GP hierarchy is {\em defocusing} if $\lambda=1$, and {\em focusing} if $\lambda=-1$
(we are assuming the normalization condition $|\lambda|=1$ for simplicity).  
The interaction term for the $k$-particle marginal is defined by
\eqn \label{eq-def-b}
	B_{k+1}\gamma^{(k+1)}
	\, = \, B^+_{k+1}\gamma^{(k+1)}
        - B^-_{k+1}\gamma^{(k+1)} \, ,
\eeqn
where 
\eqn\label{eq-Bplus-GP-def-1}
    B^+_{k+1}\gamma^{(k+1)}
     = \sum_{j=1}^k B^+_{j;k+1 }\gamma^{(k+1)}
\eeqn
and 
\eqn 
    B^-_{k+1}\gamma^{(k+1)}
    = \sum_{j=1}^k B^-_{j;k+1 }\gamma^{(k+1)},
\eeqn                  
with 
\begin{align}\label{eq-Bplusmin-def-1-1}
    & \left(B^+_{j;k+1}\gamma^{(k+1)}\right)(t,x_1,\dots,x_k;x_1',\dots,x_k') \nonumber\\
    & \quad \quad = \int dx_{k+1}  dx_{k+1}'  \nonumber\\
    & \quad\quad\quad\quad 
	\delta(x_j-x_{k+1})\delta(x_j-x_{k+1}' )
        \gamma^{(k+1)}(t,x_1,\dots,x_{k+1};x_1',\dots,x_{k+1}') \nonumber\\
    & \quad \quad =  
        \gamma^{(k+1)}(t,x_1,\dots,x_j,\dots,x_k,x_j;x_1',\dots,x_k',x_j),
\end{align} 
and 
\begin{align}\label{eq-Bplusmin-def-1-2}
    & \left(B^-_{j;k+1}\gamma^{(k+1)}\right)(t,x_1,\dots,x_k;x_1',\dots,x_k') \nonumber\\
    & \quad \quad = \int dx_{k+1} dx_{k+1}'  \nonumber\\
    & \quad\quad\quad\quad 
	  \delta(x'_j-x_{k+1})\delta(x'_j-x_{k+1}' )
        \gamma^{(k+1)}(t,x_1,\dots,x_{k+1};x_1',\dots,x_{k+1}')\nonumber\\
    & \quad \quad =  
        \gamma^{(k+1)}(t,x_1,\dots,x_k,x_j';x_1',\dots,x_j',\dots,x_k',x_j') \,.
\end{align} 
We say that $B^+_{j;k+1}$ {\em contracts} the triple of variables $x_j,x_{k+1},x_{k+1}'$, and that
$B^-_{j;k+1}$ contracts the triple of variables $x_j',x_{k+1},x_{k+1}'$.

For $\alpha\geq0$, we define the spaces
\begin{align}\label{frakH-def-1}
 	\frH^\alpha:=\Big\{ \,(\gamma^{(k)})_{k\in\N} \, \Big| \, 
 	\tr(| S^{(k,\alpha)}  [\gamma^{(k)}] |) < M^{2k} \; \mbox{for some constant }M<\infty \, \Big\}
\end{align}
where  
\eqn\label{eq-Skalpha-def-1}
    S^{(k,\alpha)}[\gamma^{(k)}](\ux_k;\ux_{k}'):=\prod_{j=1}^k(1-\Delta_{x_j})^{\alpha/2}(1-\Delta_{x_j'})^{\alpha/2}
    \gamma^{(k)}(\ux_k;\ux_{k}')\,.
\eeqn

A {\em mild solution} in the space $L^\infty_{t\in[0,T)}\frH^1$, 
to the GP hierarchy with initial data $(\gamma^{(k)}(0))_{k\in\N} \in\frH^1$,  
is a solution of the integral equation
\eqn\label{eq-Duhamel-1}
    \gamma^{(k)}(t) = U^{(k)}(t)\gamma^{(k)}(0) + i \lambda \int_0^t  U^{(k)}(t-s) B_{k+1}\gamma^{(k+1)}(s) ds \,
    \;\;\;,\;\;\; k\in\N\,,
\eeqn
satisfying  
\eqn
    \sup_{t\in[0,T)}\tr(|S^{(k,1)} [\gamma^{(k)}(t)] |) < M^{2k}
\eeqn
for a finite constant $M$ independent of $k$.  
Here,  
\eqn
     U^{(k)}(t) := \prod_{\ell=1}^k e^{it(\Delta_{x_\ell}-\Delta_{x_\ell'})}
\eeqn
denotes the free $k$-particle propagator.
We note that 
\eqn\label{eq-UBcommut-1}
    e^{it(\Delta_{x_\ell}-\Delta_{x_\ell'})}B^\pm_{j;k+1}  
    = B^\pm_{j;k+1 } e^{it(\Delta_{x_\ell}-\Delta_{x_\ell'})} \;\;\;,\;\;\;\ell\not\in\{ j,k+1\} \,.
\eeqn
That is, any free propagator $e^{it(\Delta_{x_\ell}-\Delta_{x_\ell'})}$ commutes with $B^\pm_{j;k+1 }$ if
the variables $x_\ell$, $x_{\ell}'$ are not affected by  $B^\pm_{j;k+1 }$.

We remark that given factorized initial data,
\eqn
	\gamma^{(k)}_0(\ux_k;\ux_k') \, = \, \prod_{j=1}^k \phi_0(x_j) \, \overline{\phi_0(x_j')} \,,
\eeqn
the condition that $(\gamma^{(k)}(0))\in\frH^1$ is equivalent to 
\eqn\label{eq-trSkgamma-cond-1}
    \tr(|S^{(k,1)} [\gamma^{(k)}(0)] |) = \|\phi_0\|_{H^1}^{2k} < M^{2k} 
    \;\;\;,\;\;\;k\in\N\,,
\eeqn
that is, $\|\phi_0\|_{H^1} < M$ for some $M<\infty$.
Then, a solution to the GP hierarchy in $L^\infty_{t\in[0,T)}\frH^1$ having these initial data is given by
the sequence of {\em factorized} density matrices
\eqn
	\gamma^{(k)}(t;\ux_k;\ux_k') \,  =  \,
	\prod_{j=1}^{k} \phi_t(x_j) \, \overline{\phi_t(x'_j) }\,,
\eeqn
if the corresponding 1-particle wave function satisfies the cubic NLS 
\eqn\label{eq-NLS-def-1}
    i\partial_t\phi_t=-\Delta\phi_t+ \lambda|\phi_t|^2\phi_t \;\;\;,\;\;\;\phi_0\in H^1\,. 
\eeqn 
In this sense, the NLS is interpreted as the mean field description of an infinite system of interacting
bosons.
The  Cauchy problem \eqref{eq-NLS-def-1} is globally well-posed in the defocusing case $\lambda=1$,
and locally well-posed in the focusing case $\lambda=-1$, \cite{TaoBook}.
Solutions to \eqref{eq-NLS-def-1} conserve the $L^2$-mass $\|\phi_t\|_{L^2_x}=\|\phi_0\|_{L^2_x}$ and the energy,
\eqn
    E[\phi_t] = \frac12\| \nabla_x \phi_t \|_{L^2_x}^2 + \frac\lambda4\|\phi_t\|_{L^4_x}^4 = E[\phi_0] \,.
\eeqn
In particular,  $(\gamma^{(k)})_{k\in\N}\in L^\infty_{t\in[0,T)}\frH^1$ is equivalent to 
$\|\phi_t\|_{L^\infty_{t\in[0,T)}H^1}<M'$ for
some finite constant $M'$.

The {\em uniqueness of solutions} to the GP hierarchy in $L^\infty_{t\in[0,T)}\frH^1$ was 
established by Erd\"os, Schlein, and Yau in \cite{esy1,esy2,esy3,esy4}. 
This is a crucial, and very involved part of their program to derive
the cubic defocusing NLS as the mean field 
description of a bosonic $N$-body Schr\"odinger evolution, as $N\rightarrow\infty$.
Their uniqueness proof uses a sophisticated and extensive construction involving Feynman graph expansions,
and high dimensional singular integral estimates.
A key ingredient in their proof is a powerful combinatorial method  that resolves the problem of the
factorial growth of number of terms in iterated Duhamel expansions; we outline it in
Section \ref{ssec-KMsurvey-1}.

Subsequently,   Klainerman and Machedon \cite{KM} gave a much shorter proof of the uniqueness of solutions
to the GP hierarchy satisfying
\eqn
     \big\| \, \big[\prod_{j=1}^k|\nabla_{x_j}|\,|\nabla_{x_j'}| \big]\,  \gamma^{(k)} \, \big\|_{\rm HS} < c^k \;\;\;,\;\;\; k\in\N\,,
\eeqn
for  the Hilbert-Schmidt  norms
\eqn\label{eq-gamma-norm-def-1}
	\| \gamma^{(k)} \|_{\rm HS} &:=& \Big(\, \tr(|  \gamma^{(k)} |^2)\,\Big)^{1/2} 
	\nonumber\\
	&=&\Big( \, \int_{\R^{3k}\times\R^{3k}} |\gamma^{(k)}(\ux_k;\ux_k')|^2 d\ux_k d\ux_k' \, \Big)^{1/2} \,,
\eeqn
but conditional on the a priori assumption that
\eqn\label{eq-KMcondition-1}
    \Big\| \, \big\| \, \big[\prod_{j=1}^k|\nabla_{x_j}|\,|\nabla_{x_j'}| \big]\, B_{k+1}^\pm\gamma^{(k+1)} \, \big\|_{\rm HS} \,
    \, \Big\|_{L^1_{t\in[0,T)} } < C^k 
\eeqn
holds for some finite constants $c$, $C$ independent of $k$. 
We will refer to \eqref{eq-KMcondition-1} as the {\em Klainerman-Machedon condition}. 
Their approach uses techniques from the analysis of dispersive nonlinear PDEs,
together with the combinatorial method of Erd\"os, Schlein and Yau \cite{esy1,esy2,esy3,esy4}, 
which Klainerman and Machedon presented as the ``boardgame argument".
Starting with the work \cite{kiscst} for the cubic GP hierarchy on $\R^2$ and ${\mathbb T}^2$, 
the approach of Klainerman and Machedon was used by various authors for the derivation of the NLS from interacting Bose gases 
\cite{chpa,CPBBGKY,xch3,CheHol-2013,kiscst,CT1,zxie}.
The method of Klainerman and Machedon also 
inspired the analysis of the Cauchy problem for the GP hierarchy which was initiated in \cite{chpa2} and continued e.g. in \cite{GreSohSta-2012,CT1}.

The derivation of nonlinear dispersive PDEs,
such as the nonlinear Schr\"odinger (NLS) or nonlinear Hartree (NLH) equations, 
from many body quantum dynamics is a very active research topic,
and has been approached by many authors in a variety of ways;
see  \cite{esy1,esy2,esy3,esy4,ey,kiscst,rosc} and the references therein,
and also \cite{adgote,AmmariNier-2008,
AmmariNier-2011,anasig,frgrsc,frknpi,frknsc,grma,grmama,he,pick,pick2}.

This problem is closely related to the phenomenon of Bose-Einstein condensation (BEC)
in systems of interacting bosons, which was first experimentally verified in 1995,
\cite{anenmawico,dameandrdukuke}. For the mathematical study of BEC, we refer to 
\cite{ailisesoyn,lise,lisesoyn,liseyn, liseyn2} and the references therein. 


\section{Statement of main results}

The only currently available proof of unconditional uniqueness of solutions  in $L^\infty_{t\in[0,T)}\frH^1$
to the cubic GP hierarchy in $\R^3$ is 
given in the celebrated 
works of Erd\"os, Schlein, and Yau, \cite{esy1,esy2,esy3,esy4}, using an involved construction based on 
Feynman graph expansions and high-dimensional singular integral estimates.
The purpose of the paper at hand is to present a new, simpler proof. We note that  
this paper contains 
several extensive example calculations and detailed explanations of 
background material for the benefit of the reader,  
but the actual core of our proof, given in  in Sections 
\ref{sec-RecBounds-1} and \ref{sec-prooffinish-1}, is short.
We expect that our methods can be extended to solutions in
$L^\infty_{t\in[0,T)}\frH^s$ for some values $s<1$,  and to GP hierarchies in $\R^n$ with $n$ other
than 3.

\subsection{Prerequisites}
A key tool in our proof is the {\em quantum de Finetti} theorem, 
which is a quantum analogue of the Hewitt-Savage theorem in probability theory, \cite{HewittSavage}.
The strong version is due to Hudson-Moody, and Stormer, \cite{HudsonMoody,Stormer-69}, and
applies to  sequences of density matrices that are {\em admissible}, i.e.,  
\eqn
    \gamma^{(k)}=\tr_{k+1}(\gamma^{(k+1)})  \;\;\;\forall k\in\N \,,
\eeqn
similarly to \eqref{eq-BBGKYadm-1}.
We quote it in the formulation presented by Lewin, Nam and Rougerie in \cite{lnr} 
(\cite{HudsonMoody,Stormer-69} state it in the $C^*$-algebraic context).

\begin{theorem}\label{thm-strongDeFinetti-1}
(Strong Quantum de Finetti theorem, \cite{HudsonMoody,Stormer-69,lnr})
Let $\cH$ be any separable Hilbert space and let 
$\cH^k = \bigotimes_{sym}^k\cH$ denote the corresponding bosonic $k$-particle space. 
Let $\Gamma$ denote a collection of admissible 
bosonic density matrices on  $\cH$, i.e.,
$$
\Gamma = (\gamma^{(1)},\gamma^{(2)},\dots)
$$
with $\gamma^{(k)}$ a non-negative trace class operator on $\cH^k$, 
and $\gamma^{(k)}=\tr_{k+1} \gamma^{(k+1)}$, 
where $\tr_{k+1}$ denotes the partial trace over the $(k+1)$-th factor. 
Then, there exists a unique Borel probability measure $\mu$, 
supported on the unit sphere $S\subset\cH$, and invariant under multiplication of 
$\phi \in \cH$ by complex numbers of modulus one, such that 
\begin{equation}\label{gkdf}
    \gamma^{(k)} = \int d\mu(\phi)  (  | \phi  \rangle \langle \phi |  )^{\otimes k}
    \;\;\;,\;\;\;\forall k\in\N\,.
\end{equation} 
\end{theorem}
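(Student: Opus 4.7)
My plan is to follow the constructive strategy of Lewin, Nam and Rougerie \cite{lnr}, extracting $\mu$ as a weak limit of measures associated with the truncations $\gamma^{(N)}$, rather than invoking the operator-algebraic machinery underlying the original proofs of \cite{HudsonMoody,Stormer-69}. Using admissibility, I would first view each $\gamma^{(N)}$ as itself an $N$-body bosonic density matrix whose lower marginals reproduce $\gamma^{(k)}$ for all $k<N$. After discarding the trivial case and rescaling, we may assume $\tr\gamma^{(1)}=1$, whence $\tr\gamma^{(N)}=1$ for every $N$.

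Applying the finite-$N$ (``weak'') quantum de Finetti theorem to each $\gamma^{(N)}$, one obtains a Borel probability measure $\mu_N$ supported on the closed unit ball $B\subset\cH$ such that, for every fixed $k$,
\begin{equation*}
    \Big\| \gamma^{(k)} - \int d\mu_N(\phi)\,(|\phi\rangle\langle\phi|)^{\otimes k} \Big\|_1 \longrightarrow 0 \text{ as } N\to\infty,
\end{equation*}
where $\|\cdot\|_1$ denotes the trace norm. In finite dimension this is a quantitative bound of order $k^2/N$; for general separable $\cH$ one truncates by a net of finite-rank projections and passes to a diagonal subsequence. Since $\cH$ is separable, $B$ endowed with the weak topology is compact and metrizable, so $(\mu_N)$ is tight and admits a subsequence converging weakly to a probability measure $\mu$ on $B$. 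Passing to the limit yields $\gamma^{(k)} = \int d\mu(\phi)\,(|\phi\rangle\langle\phi|)^{\otimes k}$ for every $k\in\N$. Taking the trace in the case $k=1$ gives $\int\|\phi\|^2\,d\mu(\phi)=1$, which combined with $\|\phi\|\leq 1$ on $\supp\mu$ forces $\mu$ to be concentrated on the unit sphere.

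For uniqueness, if two measures $\mu,\nu$ produce the same sequence $(\gamma^{(k)})$, pairing against $(|\psi\rangle\langle\psi|)^{\otimes k}$ gives $\int|\langle\phi,\psi\rangle|^{2k}\,d(\mu-\nu)(\phi)=0$ for every $\psi\in\cH$ and every $k\in\N$; a polynomial-density argument on the sphere, combined with the $U(1)$-invariance of both measures (which removes the unavoidable phase ambiguity $\phi\leftrightarrow e^{i\theta}\phi$), then implies $\mu=\nu$.

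The main technical obstacle is the passage to the limit $N\to\infty$ in infinite dimension. The map $\phi\mapsto(|\phi\rangle\langle\phi|)^{\otimes k}$ is \emph{not} continuous from the weak topology of $B$ into the trace norm -- weak convergence of a sequence $\phi_n$ toward $\phi$ does not imply trace-norm convergence of the corresponding rank-one operators -- so one cannot naively integrate this kernel against weakly convergent measures. The resolution in \cite{lnr} is to restrict first to a finite-dimensional subspace (obtained from a projection $P_d$ of increasing rank), where continuity does hold and the finite-dimensional weak de Finetti theorem applies quantitatively, and then remove the cutoff by a diagonal limit using the uniform trace-class control on the $\gamma^{(k)}$.
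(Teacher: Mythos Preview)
The paper does not prove this theorem. Theorem~\ref{thm-strongDeFinetti-1} is quoted verbatim as a prerequisite from the literature, with attribution to \cite{HudsonMoody,Stormer-69,lnr}; it serves as a black-box input to the analysis of the GP hierarchy, and no argument for it appears anywhere in the text. There is therefore no ``paper's own proof'' against which to compare your proposal.

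That said, your sketch is a reasonable outline of the construction in \cite{lnr}, and you have correctly identified the main technical subtlety (discontinuity of $\phi\mapsto(|\phi\rangle\langle\phi|)^{\otimes k}$ for the weak topology in infinite dimension, handled by finite-rank truncation and a diagonal argument). One point worth tightening: you invoke the ``weak'' quantum de Finetti theorem to produce the approximating measures $\mu_N$, but in the paper's terminology the weak theorem (Theorem~\ref{thm-weakDeFinetti-1}) already presupposes an $N\to\infty$ limit of marginals and delivers the limiting measure directly. What you actually need at fixed $N$ is the \emph{quantitative finite-dimensional} de Finetti estimate (an $O(k^2/N)$ bound in trace norm, as in Christandl--K\"onig--Mitchison--Renner or the finite-dimensional part of \cite{lnr}), which is a distinct statement. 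Your parenthetical remark suggests you are aware of this, but the phrasing conflates the two.
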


The limiting hierarchies of marginal density matrices obtained, for each value of the
time variable $t$, via weak-* limits from
the BBGKY hierarchy of bosonic $N$-body 
Schr\"odinger systems as in \cite{esy1,esy2,esy3,esy4} do not necessarily satisfy admissibility.
A weak version of the quantum de Finetti theorem then still applies;  in the form quoted in 
Theorem \ref{thm-weakDeFinetti-1}, below, it was 
recently proven by Lewin, Nam and Rougerie \cite{lnr}. 
Previously, Ammari and Nier proved an equivalent result in
\cite{AmmariNier-2008,AmmariNier-2011} in the context of $N$-body boson 
systems as $N\rightarrow\infty$, with
less singular interactions than those considered for the GP hierarchy.

\begin{theorem}\label{thm-weakDeFinetti-1}
(Weak Quantum de Finetti theorem, \cite{lnr,AmmariNier-2008,AmmariNier-2011}) 
Let $\cH$ be any separable Hilbert space and let 
$\cH^k = \bigotimes_{sym}^k\cH$ denote the corresponding bosonic $k$-particle space. 
Assume that $\gamma_N^{(N)}$ is an arbitrary sequence of mixed states on $\cH^N$, $N\in\N$,
satisfying $\gamma_N^{(N)}\geq 0$ and $\tr_{\cH^N}(\gamma_N^{(N)})=1$, and assume
that its $k$-particle marginals have weak-* limits 
\eqn 
    \gamma^{(k)}_{N}:=\tr_{k+1,\cdots,N}(\gamma^{(N)}_N)
    \; \rightharpoonup^* \; \gamma^{(k)} \;\;\;\; (N\rightarrow\infty)\,,
\eeqn
in the trace class on $\cH^k$ for all $k\geq1$ (here, $\tr_{k+1,\cdots,N}(\gamma^{(N)}_N)$ 
denotes the partial trace in the $(k+1)$-st up to $N$-th component). 
Then, there exists a unique Borel probability measure $\mu$ on the unit ball 
$\cB\subset\cH$, and invariant under multiplication of 
$\phi \in \cH$ by complex numbers of modulus one, 
such that
\eqref{gkdf} holds
for all $k\geq0$. 
\end{theorem}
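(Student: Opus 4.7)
The plan is to reduce the weak statement to the strong quantum de Finetti theorem (Theorem \ref{thm-strongDeFinetti-1}) via a finite-dimensional localization scheme. The essential difficulty compared with the strong version is that weak-$*$ convergence need not preserve the trace, so some mass can escape at infinity; consequently the limiting tower $(\gamma^{(k)})_{k\in\N}$ is in general not admissible, and Theorem \ref{thm-strongDeFinetti-1} cannot be invoked directly. This is precisely why $\mu$ must be allowed to live on the whole unit ball $\cB$ rather than the unit sphere, with the defect $1-\|\phi\|^2$ encoding the lost mass.

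First I would fix a finite-rank orthogonal projection $P$ on $\cH$ and consider the localized marginals $\gamma_{N,P}^{(k)} := P^{\otimes k}\gamma_N^{(k)}P^{\otimes k}$. Since $P^{\otimes k}$ is compact, weak-$*$ convergence upgrades to trace-norm convergence $\gamma_{N,P}^{(k)}\to\gamma_P^{(k)} := P^{\otimes k}\gamma^{(k)}P^{\otimes k}$. Next I would apply the geometric localization identity for bosonic $N$-body states to decompose $\gamma_N^{(N)}$ according to how many particles lie in $\mathrm{Ran}(P)$; along a subsequence this produces an admissible symmetric tower on the (finite-dimensional) bosonic Fock space built from $P\cH$. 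Applying Theorem \ref{thm-strongDeFinetti-1} to that tower yields a unique Borel probability measure $\mu_P$, invariant under the $U(1)$-action $\phi\mapsto e^{i\theta}\phi$, on the unit ball of $P\cH$ such that
\[
\gamma_P^{(k)}=\int d\mu_P(\phi)\,(|\phi\rangle\langle\phi|)^{\otimes k} \qquad \forall\,k\in\N,
\]
with the ball (rather than the sphere) arising because the finite-dimensional de Finetti representation of the localized tower absorbs its variable total mass into a radial parameter.

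Next I would let $P_n$ be an increasing sequence of finite-rank projections with $P_n\to \1_\cH$ strongly. Uniqueness applied at $P_n\leq P_m$ forces the pushforward of $\mu_{P_m}$ under $\phi\mapsto P_n\phi$ to coincide with $\mu_{P_n}$, so $\{\mu_{P_n}\}$ is a consistent family. Since each $\mu_{P_n}$ is supported in $\cB$, which is metrizable and compact in the weak topology, Prokhorov-type tightness together with a Kolmogorov-style extension argument produces a unique Borel probability measure $\mu$ on $\cB$ whose pushforward under $\phi\mapsto P_n\phi$ equals $\mu_{P_n}$. Passing to the limit $n\to\infty$ in the integral representation, and testing against a compact operator $A$ on $\cH^k$ via dominated convergence, gives \eqref{gkdf}. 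Uniqueness of $\mu$ follows from the fact that the moment polynomials $\phi\mapsto \tr(A(|\phi\rangle\langle\phi|)^{\otimes k})$ separate points and are uniformly dense in $C(\cB)$ with the weak topology by a Stone-Weierstrass argument.

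The main obstacle is the geometric localization step: one must verify that the localized tower is genuinely admissible on the bosonic Fock space over $P\cH$, that symmetry and positivity are preserved under the decomposition, and that the weak-$*$ limits of $\gamma_N^{(k)}$ are faithfully captured by the localized limit — in particular, the mass that escapes outside $\mathrm{Ran}(P)$ must be encoded correctly in the radial part of $\mu_P$ for the pushforward consistency in Step 3 to hold. Once this mechanism is established, the remaining steps (finite-dimensional de Finetti, Prokhorov compactness, and Stone-Weierstrass uniqueness) are essentially standard.
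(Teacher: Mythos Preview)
The paper does not contain a proof of this theorem: it is quoted from the literature (Lewin--Nam--Rougerie \cite{lnr} and Ammari--Nier \cite{AmmariNier-2008,AmmariNier-2011}) and used as a black box. There is therefore no ``paper's own proof'' to compare against.

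That said, your sketch follows the Lewin--Nam--Rougerie strategy faithfully: localize to finite-rank projections, apply a finite-dimensional (or strong) de Finetti theorem to the localized tower, check consistency of the resulting measures under nested projections, and pass to the limit via a projective/tightness argument on the weakly compact unit ball. The identification of the main obstacle is also accurate: the geometric localization step (ensuring the localized tower is admissible on the Fock space over $P\cH$ and that the mass defect is correctly absorbed into the radial parameter of $\mu_P$) is where the real work lies, and is handled in \cite{lnr} via their Lemma on ``geometric localization in Fock space''. The Stone--Weierstrass uniqueness argument at the end is also the standard one. So your outline is correct in structure; to turn it into a complete proof you would need to supply the localization lemma in detail, which is the substantive content of \cite{lnr}.
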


In our case, we consider the Hilbert space $\cH=L^2(\R^3)$. 
For a detailed discussion of the strong and weak quantum de Finetti theorem, we refer to \cite{lnr},
where the notions of strong and weak quantum de Finetti were introduced.

\subsection{Main results}
Our main result is a new proof of the {\em unconditional}  uniqueness of solutions to the GP hierarchy
in $L^\infty_{t\in[0,T)}\frH^1$. 
We note that the property $(\gamma^{(k)})_{k\in\N}\in L^\infty_{t\in[0,T)}\frH^1$ implies that 
$\gamma^{(k)}(t)=\int d\mu_t(\phi)(|\phi\rangle\langle\phi|)^{\otimes k}$, 
where the measure $\mu_t$ has bounded
support in $H^1(\R^3)$; this is explained in Lemma \ref{lm-Chebyshev-1} below.

\begin{theorem}\label{thm-main-1}
Let $(\gamma^{(k)}(t))_{k\in\N}$ be a mild solution  in $L^\infty_{t\in[0,T)}\frH^1$ 
to the (de)focusing cubic GP hierarchy in $\R^3$ with initial data $(\gamma^{(k)}(0))_{k\in\N}\in\frH^1$, 
which is either admissible, or obtained
at each $t$ from a weak-* limit as described in  Theorem \ref{thm-weakDeFinetti-1}.  

Then, $(\gamma^{(k)})_{k\in\N}$ is the unique solution for the given initial data.

Moreover, assume that the initial data $(\gamma^{(k)}(0))_{k\in\N} \in\frH^1$ satisfy
\eqn 
    \gamma^{(k)}(0) = \int d\mu(\phi)(|\phi\rangle\langle\phi|)^{\otimes k} 
    \;\;\;,\;\;\;\forall k\in\N\,,
\eeqn  
(as guaranteed by Theorems \ref{thm-strongDeFinetti-1} and \ref{thm-weakDeFinetti-1})
where $\mu$ is a Borel probability measure
supported  either  on the unit sphere or on the unit ball in $L^2(\R^3)$,
and invariant under multiplication of 
$\phi \in \cH$ by complex numbers of modulus one.  
Then, 
\eqn 
    \gamma^{(k)}(t) = \int d\mu(\phi)(|S_t(\phi)\rangle\langle S_t(\phi)|)^{\otimes k} 
    \;\;\;,\;\;\;\forall k\in\N\,,
\eeqn
where $S_t:\phi\mapsto \phi_t$ is the flow map of the cubic (de)focusing NLS, for $t\in[0,T)$. That is, $\phi_t$
satisfies \eqref{eq-NLS-def-1} with initial data $\phi$.
\end{theorem}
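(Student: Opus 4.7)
My plan is to use the quantum de Finetti theorem to collapse the uniqueness problem for the infinite GP hierarchy onto the standard uniqueness theory of the cubic NLS, thereby bypassing both the Feynman-graph analysis of Erd\"os--Schlein--Yau and the boardgame recombination of Klainerman--Machedon. At each fixed $t\in[0,T)$, Theorem~\ref{thm-strongDeFinetti-1} (admissible case) or Theorem~\ref{thm-weakDeFinetti-1} (weak-$*$ case) produces a Borel probability measure $\mu_t$ on $L^2(\R^3)$ with
\[
    \gamma^{(k)}(t) \, = \, \int d\mu_t(\phi)\,(|\phi\rangle\langle\phi|)^{\otimes k}\,,
    \quad k\in\N\,.
\]
A Chebyshev-type argument, already foreshadowed in Lemma~\ref{lm-Chebyshev-1}, uses the $\frH^1$ bound to conclude that $\mu_t$ is supported in $\{\phi\in H^1(\R^3):\|\phi\|_{H^1}\le M\}$ uniformly in $t$, which provides all the quantitative control used below.

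To prove uniqueness, let $(\gamma_i^{(k)})_{k\in\N}$, $i=1,2$, be two mild solutions with identical initial data, and set $\nu^{(k)}:=\gamma_1^{(k)}-\gamma_2^{(k)}$, so $\nu^{(k)}(0)=0$. Iterating the Duhamel formula \eqref{eq-Duhamel-1} $r$ times kills the free-evolution term at every level, yielding
\[
    \nu^{(k)}(t) \, = \, (i\lambda)^r \! \int_0^t\!\!\cdots\!\!\int_0^{s_{r-1}}\!\!
    U^{(k)}(t\!-\!s_1) B_{k+1} \cdots U^{(k+r-1)}(s_{r-1}\!-\!s_r) B_{k+r}\,
    \nu^{(k+r)}(s_r)\,ds_r\cdots ds_1 \,.
\]
I would then insert the de Finetti representation of $\nu^{(k+r)}(s_r)$ and exploit the crucial structural identity
\[
    B_{k+1}\,(|\phi\rangle\langle\phi|)^{\otimes(k+1)} \, = \,
    \sum_{j=1}^k \bigl[|\phi(x_j)|^2 - |\phi(x_j')|^2\bigr]\,(|\phi\rangle\langle\phi|)^{\otimes k}\,,
\]
which shows that on factorized states $B_{k+1}$ is nothing but the density-matrix incarnation of the cubic NLS nonlinearity. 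Iterating this identity and estimating the resulting factors via the Sobolev embedding $H^1(\R^3)\hookrightarrow L^6$ and the uniform $H^1$-bound on $\supp\mu_t$ should produce recursive trace-class bounds on the $r$-fold composition whose combinatorial cost is at most geometric in $r$.

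Combined with the factor $t^r/r!$ from the time simplex, such bounds will force $\nu^{(k)}(t)\to 0$ as $r\to\infty$ for sufficiently small $t$, so $\nu^{(k)}\equiv 0$ on $[0,T_0]$; a standard bootstrap then extends this to all of $[0,T)$. For the second part of the theorem, I would observe that $t\mapsto \int d\mu(\phi)(|S_t\phi\rangle\langle S_t\phi|)^{\otimes k}$ is itself a mild solution in $L^\infty_t\frH^1$ with the prescribed initial data (since $S_t$ preserves the ball $\{\|\phi\|_{H^1}\le M\}$ on $[0,T)$, and factorized sequences solve the GP hierarchy precisely when $\phi$ solves \eqref{eq-NLS-def-1}), so the uniqueness just established identifies it with $(\gamma^{(k)}(t))$. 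The main obstacle is the recursive estimate in the preceding step: one must extract from the pure product-state structure enough symmetry to show that the $(k+r)!/k!$ naive summands produced by successive $B_{k+j+1}$'s collapse to a bound that is only exponential in $r$, so that the nested time integration closes the iteration without any Klainerman--Machedon boardgame reshuffling. The de Finetti representation is precisely what makes this collapse automatic, since every pure tensor power $(|\phi\rangle\langle\phi|)^{\otimes (k+r)}$ is manifestly symmetric and all contractions point back to a single one-particle function $\phi$.
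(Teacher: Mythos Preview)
Your proposal has a genuine gap at precisely the point you yourself flag as ``the main obstacle.'' The claim that the de Finetti factorization makes the collapse from $O((k+r)!/k!)$ summands to an $O(C^r)$ bound ``automatic'' is not correct, and the paper does \emph{not} avoid the Klainerman--Machedon boardgame argument --- on the contrary, it uses de Finetti \emph{together with} the boardgame. De Finetti is applied at the innermost time $t_r$, so that $\gamma^{(k+r)}(t_r)$ is an average of pure tensor powers $(|\phi\rangle\langle\phi|)^{\otimes(k+r)}$; this gives each individual summand $J^k(\rho;\,\cdot\,)$ a product structure (equation~\eqref{eq-gammak-prod-def-1}), which is indeed the new ingredient. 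But it does nothing to reduce the number of distinct contraction patterns $\rho$: after the first $B_{k+r}$ the factors are no longer identical (one position carries $|\phi|^2\phi$ instead of $\phi$), so the permutation symmetry you are counting on is already broken at the second step. The reduction to $O(2^{k+r})$ upper echelon classes is accomplished in the paper by Lemma~\ref{lm-echnumber-1} and Theorem~\ref{thm-Duhamel-iterate-bd-2}, exactly the boardgame mechanism you hoped to bypass.

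There is also a secondary gap: your plan to estimate the iterated contractions using only Sobolev $H^1(\R^3)\hookrightarrow L^6$, together with the $t^r/r!$ simplex volume, cannot close. After one contraction the distinguished factor is $\widetilde\psi=|\phi|^2\phi$, which lies in $L^2$ but not in $H^1$; a second pointwise contraction would require controlling products of more than three $L^6$ functions in $L^2$, which H\"older forbids in dimension three. The paper resolves this with Strichartz estimates (see \eqref{eq-basicStrichartz-1}--\eqref{eq-basicStrichartz-1-2}): each time integration contributes an $L^2_tL^6_x$ gain that allows the $L^2$ norm to be ``propagated'' along the distinguished branch of the tree down to $\widetilde\psi$, yielding one factor of $T^{1/2}$ per vertex (Lemma~\ref{lm-iterationH1L2-bd-1} and Proposition~\ref{prop-mainIjest-1}). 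No $1/r!$ from the simplex is ever used; after the boardgame, integration is over $D(\sigma,t)\subseteq[0,t]^r$, and smallness comes from $(CM^4T)^{r/2}$. Your argument for the second assertion of the theorem (that the NLS-pushforward solution is a mild $\frH^1$ solution, hence by uniqueness equals $(\gamma^{(k)}(t))$) is correct and matches the paper.
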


Our proof of uniqueness uses the  Erd\"os-Schlein-Yau combinatorial method \cite{esy1,esy2,esy3,esy4},
in boardgame form as presented by Klainerman-Machedon in \cite{KM}.
However, we do {\em not} use the Klainerman-Machedon
condition \eqref{eq-KMcondition-1}, but instead apply the quantum de Finetti theorems. 
The uniqueness established in Theorem \ref{thm-main-1}  is unconditional, and
equivalent to the one proven in the
celebrated works of Erd\"os, Schlein and Yau, \cite{esy1,esy2,esy3,esy4}.

\subsubsection{Uniqueness of strong solutions}

Our next result addresses the uniqueness of  strong solutions, and shows the 
strength of de Finetti for the GP hierarchy.
We consider   
strong solutions to the GP hierarchy
$\Gamma\in C^1([0,T),\frH^{1})$
with $\partial_t\Gamma\in C([0,T),\frH^{-1})$  in $\R^d$ where $d=1,2,3$
(see \cite{caz} for a discussion on the level of the NLS). 
The quantum de Finetti theorem can be used for a short direct proof of uniqueness in this case, 
along the lines of Spohn's argument for the Vlasov hierarchy
\cite{Spohn81}.

\begin{theorem}\label{thm-main-3}
Let $d\in\{1,2,3\}$, and $\Gamma=(\gamma^{(k)})$.
Let 
$\opDelta\Gamma := (\sum_{j=1}^k [\Delta_{x_j},\gamma^{(k)}]  )_{k\in\N}$
and $\opB\Gamma:=(B_{k+1}\gamma^{(k+1)})$.
Let $\Gamma(t)=(\gamma^{(k)}(t))$ be a strong solution of the GP hierarchy 
in $C^1([0,T),\frH^{1})$ on $\R^d$, which is either admissible, or obtained
at each $t$ from a weak-* limit as described in  Theorem \ref{thm-weakDeFinetti-1}. 
Then, 
\begin{equation}\label{bbgky}
    i\partial_t \Gamma(t) = \opL \Gamma(t) \; \in\;C([0,T),\frH^{-1})
    \;\;\;,\;\;\; \opL = -\opDelta+\lambda\opB
    \;\;\;,\;\;\;\lambda\in\{1,-1\}\,,
\end{equation}
and $\Gamma(t)$ is the unique solution for the given initial data.
\end{theorem}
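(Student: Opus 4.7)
The strategy is to mimic Spohn's treatment of the Vlasov hierarchy \cite{Spohn81}, adapted to the quantum setting via the de Finetti theorem: quantum de Finetti reduces the infinite coupled GP hierarchy to a single measure-valued flow on $H^1(\R^d)$ governed by the NLS, and $H^1$-well-posedness of NLS in dimensions $d\leq 3$ then forces uniqueness of the measure, hence of $\Gamma$. The equation $i\partial_t\Gamma=\opL\Gamma\in C([0,T),\frH^{-1})$ itself follows immediately from the $C^1([0,T),\frH^1)$ hypothesis together with the level-by-level boundedness $\opL:\frH^1\to\frH^{-1}$, where the contraction $\opB$ is controlled using the Sobolev embedding $H^1\hookrightarrow L^p$ valid for $d\leq 3$, and $\opDelta$ is handled trivially.

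For uniqueness, I would apply the quantum de Finetti theorem at each $t\in[0,T)$---Theorem \ref{thm-strongDeFinetti-1} in the admissible case and Theorem \ref{thm-weakDeFinetti-1} in the weak-$*$ limit case---to obtain a unique Borel probability measure $\mu_t$ on the unit sphere (resp.\ ball) of $L^2(\R^d)$ with
\begin{equation*}
\gamma^{(k)}(t)=\int d\mu_t(\phi)\,(|\phi\rangle\langle\phi|)^{\otimes k},\qquad k\in\N.
\end{equation*}
The uniform $\frH^1$-bound on $\Gamma(t)$, combined with a Chebyshev-type argument (cf.\ Lemma \ref{lm-Chebyshev-1}), localises $\supp(\mu_t)$ inside a common ball $\{\|\phi\|_{H^1}\leq M\}$. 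Since NLS is locally well-posed in $H^1(\R^d)$ for $d\leq 3$, the flow $S_t$ is defined on this ball; a direct calculation based on $B_{k+1}(|\phi\rangle\langle\phi|)^{\otimes(k+1)}=\sum_j[|\phi(x_j)|^2-|\phi(x_j')|^2](|\phi\rangle\langle\phi|)^{\otimes k}$ shows that each factorised profile $(|S_t\phi\rangle\langle S_t\phi|)^{\otimes k}$ solves the GP hierarchy, so
\begin{equation*}
\tilde\gamma^{(k)}(t):=\int d\mu_0(\phi)\,(|S_t\phi\rangle\langle S_t\phi|)^{\otimes k}
\end{equation*}
is a strong solution with the same initial data as $\Gamma$.

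The heart of the argument is to prove $\Gamma=\tilde\Gamma$. Testing the de Finetti representations against bounded self-adjoint operators $A$ on $\cH^k$ yields polynomial cylinder functions $\Phi_A(\phi):=\tr(A(|\phi\rangle\langle\phi|)^{\otimes k})$, and differentiating in $t$ via the hierarchy produces the weak Liouville identity
\begin{equation*}
\partial_t\int\Phi_A(\phi)\,d\mu_t(\phi)=\int D\Phi_A(\phi)\cdot V(\phi)\,d\mu_t(\phi)
\end{equation*}
for the NLS vector field $V(\phi):=-i(-\Delta\phi+\lambda|\phi|^2\phi)$. The same identity is satisfied by the pushforward $(S_t)_*\mu_0$, so both $\mu_t$ and $(S_t)_*\mu_0$ solve a common measure-valued ODE with the same initial condition; since the cylinder functions $\{\Phi_A\}$ separate Borel probability measures (this is precisely the uniqueness content of quantum de Finetti), this forces $\mu_t=(S_t)_*\mu_0$ and hence $\Gamma=\tilde\Gamma$. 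The principal obstacle is this final step, namely translating pointwise $H^1$-uniqueness of NLS into uniqueness of a measure-valued flow on an infinite-dimensional Hilbert space: it requires a careful choice of test functions, justification of interchanging $\partial_t$ with $\int d\mu_t$ (where the $C^1([0,T),\frH^1)$ regularity and the uniform $H^1$-bound on $\supp(\mu_t)$ are both essential), and---in $d=3$, where NLS is only locally well-posed---iteration on a succession of small time intervals.
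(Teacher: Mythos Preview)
Your overall strategy---de Finetti reduces $\Gamma(t)$ to a measure $\mu_t$, and $H^1$-wellposedness of NLS should force $\mu_t=(S_t)_*\mu_0$---is exactly the paper's, and the part about $\opL:\frH^1\to\frH^{-1}$ is fine. But your uniqueness argument has a real gap at precisely the step you flag as the principal obstacle, and the resolution you sketch does not close it.

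You show that both $\mu_t$ and $(S_t)_*\mu_0$ satisfy the Liouville identity
\[
\partial_t\int\Phi_A\,d\nu_t=\int D\Phi_A(\phi)\cdot V(\phi)\,d\nu_t(\phi),
\]
and then invoke that cylinder functions separate measures. But separation only tells you that equal \emph{moments} imply equal measures; it says nothing about uniqueness of \emph{solutions} to this infinite coupled system of ODEs. Since $V$ is cubic, the right-hand side for a degree-$k$ cylinder function is a degree-$(k+1)$ moment of $\nu_t$, so the system does not close: it is literally the GP hierarchy rewritten in moment form, and asserting its uniqueness is what you are trying to prove.

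The paper resolves this with Spohn's backward-flow trick, which you do not mention. Instead of comparing two solutions of a Liouville equation, one sets
\[
\tilde\Gamma(t):=\int d\mu_t(\phi)\,\Gamma_{S_{-t}(\phi)}
\]
and differentiates in $t$. The derivative hits $\mu_t$ (producing $-i\int\opL\Gamma_{S_{-t}(\phi)}\,d\mu_t$ via the GP hierarchy) and $S_{-t}$ (producing $+i\int\opL\Gamma_{S_{-t}(\phi)}\,d\mu_t$ via NLS), and these cancel, giving $\frac{d}{dt}\tilde\Gamma=0$ and hence $\tilde\Gamma(t)=\Gamma(0)$. The change of variables $\phi\mapsto S_t(\phi)$ rewrites $\tilde\Gamma(t)=\int d\mu_t(S_t(\phi))\,\Gamma_\phi$; \emph{now} the uniqueness clause of the de Finetti theorems applies legitimately---two representations of the \emph{same} fixed sequence of marginals---and yields $d\mu_t(S_t(\phi))=d\mu_0(\phi)$. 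The point is that pulling back by the NLS flow converts the measure-valued ODE into the trivial equation $\partial_t\tilde\Gamma=0$, for which uniqueness is free; your forward formulation leaves you with an equation whose uniqueness is equivalent to the theorem itself.
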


\section{Proof of Theorem \ref{thm-main-3}}
Let $\Gamma_\phi:=((|\phi\rangle\langle\phi|)^{\otimes k})_{k\in\N}$ for brevity.
We know that
$\int d\nu(\phi) \Gamma_{S_t(\phi)}$ solves (\ref{bbgky}), with $\Gamma_0 = \int d \nu(\phi) \Gamma_\phi$,  
if $S_t(\phi)$ is the flow map of the cubic (de)focusing NLS
\begin{equation}\label{nls}
i\partial_t \phi_t = -\Delta \phi_t +  \lambda|\phi_t|^2 \phi_t
\end{equation}
with initial data $\phi_0\in H^1(\R^d)$.

Since by assumption, $\Gamma(t)$  is either admissible, or obtained
at each $t$ from a weak-* limit as described in  Theorem \ref{thm-weakDeFinetti-1},
the de Finetti theorem implies that for every $t\geq 0$ there exists a Borel probability measure $\mu_t$ on $L^2(\R^d)$ such that
\begin{equation}\label{gkdf2}
    \Gamma(t) = \int d\mu_t(\phi)  \Gamma_\phi \,.
\end{equation}
Furthermore, we note that  for any $s\geq0$ and an arbitrary small $\eta>0$,
\eqn\label{eq-dernormbds-aux-1}
    \tr\big(\,\big| \, S^{(k+1,s-\frac d2-\eta)}[(\opB\Gamma)^{(k)}]\, \big|\,\big) 
    \leq C  \tr\big(\,\big| \, S^{(k+1,s)}[\Gamma^{(k+1)}]\, \big|\,\big)  \,.
\eeqn
This inequality was proven in \cite{CPHE} (it corresponds to a generalized Sobolev inequality for density matrices).
Moreover, it is evident that for any $s\geq0$,
\eqn\label{eq-dernormbds-aux-2}
    \tr\big(\,\big| \, S^{(k,s-2)}[(\opDelta\Gamma)^{(k)}]\, \big|\,\big) 
    \leq   \tr\big(\,\big| \, S^{(k,s)}[\Gamma^{(k)}]\, \big|\,\big)  
    \;\;\;,\;k\in\N\,.
\eeqn
Since $\frac d2+\eta\leq2$ for $d\leq3$ in \eqref{eq-dernormbds-aux-1}, we find that 
by setting $s=1$ in \eqref{eq-dernormbds-aux-1} and \eqref{eq-dernormbds-aux-2},
\eqn 
    \tr\big(\,\big| \, S^{(k,-1)}[ (\opL\Gamma)^{(k)} ]\, \big|\,\big) < C  
    \tr\big(\,\big| \, S^{(k,1)}[\Gamma^{(k)}]\, \big|\,\big) < C^k \;\;\;,\;k\in\N\,,
\eeqn
or in other words, $\opL\Gamma\in C([0,T),\frH^{-1})$. 
Here,   the second inequality follows from $\Gamma\in C^1([0,T),\frH^{1})$, see 
\eqref{eq-trSkgamma-cond-1}.
Since $\Gamma(t)$ is a strong solution of \eqref{bbgky} in $C^1([0,T),\frH^{1})$, 
we obtain that 
\eqn\label{eq-tildeGamma-aux-1-1}
    i\partial_t\Gamma^{(k)}(t) = 
    \lim_{h\rightarrow0}\frac1h \Big(\,\Gamma^{(k)}(t+h)-\Gamma^{(k)}(t)\,\Big)
    =(\opL \Gamma)^{(k)}(t) \,,
\eeqn  
hence $\partial_t\Gamma\in C([0,T),\frH^{-1})$. 
Then,  indeed,
\begin{equation}\label{diffmu}
    \frac d{dt} \int d\mu_t(\phi)  \Gamma_\phi =
    \lim_{h\rightarrow0}\frac1h\Big(\int  d\mu_{t+h}(\phi)\Gamma_\phi 
    -\int  d\mu_t(\phi)\Gamma_\phi \Big)=-i \int \opL \Gamma_\phi d\mu_t(\phi) 
\end{equation}
holds in  $C([0,T),\frH^{-1})$.
In this sense, $\mu_t$ is differentiable, with derivative given by the operator $\opL$.

In analogy to Spohn's argument in \cite{Spohn81}, we can show that the measure $\mu_t$ 
induces a flow on the unit ball which satisfies the GP-equation \eqref{nls}. 
To this end, we define 
\eqn
    \tilde \Gamma(t) := \int d\mu_t(\phi)  \Gamma_{S_{-t}(\phi)}.
\eeqn
Differentiating this with respect to $t$,  gives
\begin{align} 
    \frac d{dt}  \tilde \Gamma(t) &= \lim_{h\rightarrow0}\frac1h
    \Big(\, \int  d\mu_{t+h}(\phi)\Gamma_{S_{-t-h}(\phi)}-   \int  d\mu_{t}(\phi)\Gamma_{S_{-t}(\phi)}\, \Big)
    \nonumber\\
    &=
     \lim_{h\rightarrow0}\frac1h
    \Big(\, \int  d\mu_{t+h}(\phi)\Gamma_{S_{-t}(\phi)}-   \int  d\mu_{t}(\phi)\Gamma_{S_{-t}(\phi)}\, \Big)
     \label{eq-tildeGamma-aux-1-0}\\
    &\hspace{2cm}+ \lim_{h\rightarrow0}\frac1h
     \int  d\mu_{t+h}(\phi)\Big(\Gamma_{S_{-t-h}(\phi)}-  \Gamma_{S_{-t}(\phi)}\, \Big)
    \label{eq-tildeGamma-aux-1}\\
    &= 
     -i\int  d\mu_t  (\phi) \opL\Gamma_{S_{-t}(\phi)}
      + i\int  d\mu_t  (\phi) \opL \Gamma_{S_{-t}(\phi)} =0, \label{equ}
\end{align}
where we applied \eqref{diffmu} to \eqref{eq-tildeGamma-aux-1-0}
to get the first term in \eqref{equ}, and where we applied \eqref{eq-tildeGamma-aux-1-1} to
\eqref{eq-tildeGamma-aux-1} to get the second term in \eqref{equ}.   

Since the map $\phi \mapsto S_t(\phi)$ is a bijection on the intersection of the unit ball
of $L^2(\R^d)$ with $H^1(\R^d)$,
we obtain by a simple variable transformation
\eqn
     \int d\mu_t(\phi)  \Gamma_{S_{-t}(\phi)} = \int d\mu_t\big(S_t(\phi)\big)  \Gamma_\phi.
\eeqn
Because of \eqref{equ}, we find that
\eqn
    \tilde \Gamma(t) = \int d\mu_t\big(S_t(\phi)\big)  \Gamma_\phi
\eeqn
does not depend on $t$, and by the uniqueness parts of Theorems \ref{thm-strongDeFinetti-1}
and \ref{thm-weakDeFinetti-1}, we infer that   
 $$d\mu_t\big(S_t(\phi)\big)  = d\mu_0(\phi).$$
Hence, by variable transformation,
$$\Gamma(t) = \int d\mu_t(\phi)  \Gamma_\phi = \int d\mu_t\big(S_t(\phi)\big) \Gamma_{S_t(\phi)}
        = \int d\mu_0(\phi) \Gamma_{S_t(\phi)} .$$
By the uniqueness parts of Theorems \ref{thm-strongDeFinetti-1} and \ref{thm-weakDeFinetti-1}, 
$d\nu = d\mu_0$.  
\qed

\section{Proof of Theorem \ref{thm-main-1}}
   
The remainder of this paper is dedicated to the proof of Theorem \ref{thm-main-1}. 
As a preparation, we present three auxiliary combinatorial tools used for the organization of 
Duhamel expansion terms, in Sections  \ref{ssec-KMsurvey-1}, \ref{sec-treegraph-1}, and \ref{sec-disttree-1}.
For the convenience  of the reader, 
we will give a detailed survey of background material in these parts, and present several detailed example calculations. 
The core of our proof is contained in Sections 
\ref{sec-RecBounds-1} and \ref{sec-prooffinish-1}, and is short. 

\subsection{The Erd\"os-Schlein-Yau combinatorial method in  boardgame form}
\label{ssec-KMsurvey-1}
To prove   uniqueness, we will show that the trace norm of $\gamma^{(k)}$ (instead of
the Hilbert-Schmidt norm of $S^{(k,1)}[\gamma^{(k)}]$
as in \cite{KM}) is zero if the initial data is zero (see Section \ref{ssec-setupproof-1} for details). 
We will use the powerful combinatorial method of Erd\"os, Schlein and Yau
\cite{esy1,esy2,esy3,esy4}, which was presented in an elegant and accessible form by 
Klainerman and Machedon in \cite{KM} as a "boardgame argument".

To begin with, we consider the $r$-fold iterate of the Duhamel formula
\eqref{eq-Duhamel-1} for   $\gamma^{(k)}$, with initial data $\gamma^{(k)}_0=0$, 
for some arbitrary $r\in\N$,
\begin{align}
    \gamma^{(k)}(t ) \,&=\,(i\lambda)^r
    \int_{t\geq t_1\geq \cdots\geq t_r} dt_{1}\cdots dt_r  U^{(k)}(t-t_1)
    B_{ k+1}U^{(k+1)}(t_1-t_2)\cdots  
    \nonumber\\
    & \hspace{1cm}
    \cdots
    U^{(k+r-1)}(t_{r-1}-t_{r})B_{ k+r}
    \gamma^{ (k+r)}(t_r) 
    \nonumber\\
    &=:\,
    \int_{t\geq t_1\geq\cdots\geq t_r} dt_{1}\cdots dt_r J^k(\ut_r) 
    \;\;\;,\;\;\; \ut_r:=(t_1,\dots,t_r) \,.
    \label{eq-Duham-r-1-1} 
\end{align}
We will prove the following main lemma.

\begin{lemma}\label{lm-mainlm-trest-1}
Assume that $(\gamma^{(k)}(t))$ is a mild solution to the cubic GP hierarchy \eqref{eq-def-GP}
with initial data $\gamma^{(k)}(0)=0$ for all $k$, which is either admissible or obtained at each $t$
from a weak-* limit as in  Theorem \ref{thm-weakDeFinetti-1}.
Moreover, assume that 
\eqn\label{eq-Skgamma-cond-0}
    \sup_{t\in[0,T)}\tr(|S^{(k,1)}[\gamma^{(k)}(t)] |) < M^{2k}
    \;\;\; , \;\;\; k\in\N \,,
\eeqn
holds for some finite constant $M$ independent of $k$ and $t$.

Then, for $t\in[0,T)$, the estimate
\eqn 
    \tr\big( \, \big|\gamma^{(k)}(t)\, \big| \,\big)     
    \; < \; 2 \, M^{2k-2} \, (2 C M^4 T)^{(r+1)/2}   
\eeqn
holds. In particular, the right hand side converges to zero in the limit as $r\rightarrow\infty$
for $T<(2 C M^4)^{-1}$ (independent of $k$), for every $k\in\N$. 
\end{lemma}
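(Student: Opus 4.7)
The strategy is to iterate the Duhamel formula \eqref{eq-Duham-r-1-1} to order $r$, apply the Erd\"os-Schlein-Yau/Klainerman-Machedon boardgame reorganization to tame the factorial growth of summands, then use the quantum de Finetti theorem at the deepest time $t_r$ to reduce the estimate to a rank-one computation on factorized pure states. That computation is explicit and is controlled by Sobolev embedding in $\R^3$ together with the $H^1$-support of the de Finetti measure guaranteed by \eqref{eq-Skgamma-cond-0} (via Lemma~\ref{lm-Chebyshev-1}).

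\textbf{Carrying it out.} After expanding $B_{k+m}=\sum_{j=1}^{k+m-1}(B^+_{j;k+m}-B^-_{j;k+m})$, the $r$-fold Duhamel yields $\prod_{m=1}^r 2(k+m-1)$ branches, whose factorial count $(k+r-1)!/(k-1)!$ outruns the simplex volume $t^r/r!$. The boardgame argument, leveraging the full bosonic symmetry of $\gamma^{(k+r)}$ separately in primed and unprimed variables, reorganizes this into $\leq C^r$ inequivalent reduced contributions $J^k_\sigma(\ut_r)$, each integrated over a subregion $D_\sigma\subset[0,t]^r$. At this stage I would invoke Theorem \ref{thm-strongDeFinetti-1} (admissible case) or \ref{thm-weakDeFinetti-1} (weak-$*$ case) to write
\[
\gamma^{(k+r)}(t_r)=\int d\mu_{t_r}(\phi)\,(|\phi\rangle\langle\phi|)^{\otimes(k+r)},
\]
with $\mu_{t_r}$ supported in $\{\|\phi\|_{H^1}\leq M\}$. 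Since each $B^\pm_{j;m}$ and each propagator $U^{(m)}(s)$ preserves rank-one tensor-product structure, iterating $J^k_\sigma$ on $(|\phi\rangle\langle\phi|)^{\otimes(k+r)}$ produces a rank-one operator $|F_\sigma(\ut_r,\phi)\rangle\langle G_\sigma(\ut_r,\phi)|$ on $L^2(\R^{3k})$ whose factors are pointwise products of free evolutions $e^{is_i\Delta}\phi$. Its trace norm $\|F_\sigma\|_{L^2}\|G_\sigma\|_{L^2}$ is bounded using the Sobolev embedding $\|e^{is\Delta}\phi\|_{L^6_x}\leq C\|\phi\|_{H^1}$ and an $L^2_t$-integration in the time variables that supplies a factor $T^{1/2}$ per Duhamel pair. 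Summing over the $\leq C^r$ boardgame classes and integrating in $\mu_{t_r}$ gives the claimed bound $2M^{2k-2}(2CM^4T)^{(r+1)/2}$.

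\textbf{Main obstacle.} The chief difficulty lies in the trace-norm estimate on the rank-one operators produced by the nested $B^\pm$-actions: repeated absorption into the same tensor factor can create pointwise powers of $\phi$ beyond $|\phi|^2\phi$, which $H^1(\R^3)$ alone does not bound. The quantum de Finetti decomposition is precisely the input that bypasses this issue, converting what would otherwise require the Klainerman-Machedon condition \eqref{eq-KMcondition-1} (an a priori space-time estimate on $B_{k+1}^\pm\gamma^{(k+1)}$) into a pointwise computation on factorized states, where each $B$-action is explicit and the resulting norms are controlled by suitable Sobolev/Strichartz estimates on $e^{is\Delta}\phi$ combined with the $H^1$ bound on the support of $\mu_{t_r}$. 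Obtaining the fractional exponent $(r+1)/2$ in $T$ seems to require pairing the Duhamel iterations so that each pair contributes one $T^{1/2}$ factor via Cauchy-Schwarz in time; verifying that the boardgame representatives $J^k_\sigma$ are organized compatibly with such a pairing is the technical heart of the argument.
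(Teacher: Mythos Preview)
Your high-level strategy is correct and matches the paper's: boardgame reduction to $O(C^r)$ upper echelon forms, quantum de Finetti at the deepest time $t_r$, then a rank-one (factorized) computation controlled by Strichartz/Sobolev and the $H^1$ support of $\mu_{t_r}$. The gap is in the ``rank-one computation'' step, which you leave as a black box and which does not work the way you describe.

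Concretely: once you fix an upper echelon form $\sigma$ (and a choice of $\pm$ in each $B$), the resulting $k$-particle kernel is a product $\prod_{j=1}^k J_j^1$ of one-particle kernels, and each $J_j^1$ is a \emph{nested} expression of the form $U B U B \cdots U B(|\phi\rangle\langle\phi|)^{\otimes(m_j+1)}$. If you try to bound $\|F_\sigma\|_{L^2}\|G_\sigma\|_{L^2}$ directly, you are forced to estimate pointwise products of many factors $e^{is_i\Delta}\phi$ at the same spatial point; in $\R^3$ this would require $L^\infty_x$ control, which $H^1$ does not give. Your own ``main obstacle'' paragraph flags this but does not resolve it, and your claim that de Finetti alone ``bypasses this issue'' is not accurate: de Finetti gives you the factorized form, but the analytic estimate on that form is still the hard part.

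The paper's resolution is a \emph{recursive} estimate organized by a binary tree $\tau_j$ attached to each factor $J_j^1$. At every internal vertex $\alpha$ one writes the associated kernel as $\Theta_\alpha = B_{1,2}\big((U\Theta_{\kappa_-(\alpha)})\otimes(U\Theta_{\kappa_+(\alpha)})\big)$, so that at each step one faces only a \emph{cubic} product of freely evolved functions. One then applies Cauchy--Schwarz in the single time variable $t_\alpha$ attached to that vertex, followed by the trilinear Strichartz bound $\|(e^{it\Delta}f_1)(e^{it\Delta}f_2)(e^{it\Delta}f_3)\|_{L^2_tL^2_x}\le C\|f_1\|_{H^1}\|f_2\|_{H^1}\|f_3\|_{L^2}$ (and its $H^1$ variant). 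This yields the recursive bounds of the paper's Lemma~\ref{lm-iterationH1L2-bd-1}, each contributing exactly one factor $T^{1/2}$, which explains the exponent $(r+1)/2$ you were unsure about. The second key structural point you are missing is that exactly \emph{one} of the $k$ trees is ``distinguished'' (it contains the rightmost interaction vertex, where no propagator separates $B$ from $(|\phi\rangle\langle\phi|)^{\otimes}$ and the bad factor $\widetilde\psi=|\phi|^2\phi$ appears); along the unique path from the root to that vertex one propagates the $L^2$ norm so that it lands on $\widetilde\psi$, while everywhere else one propagates $H^1$. So higher pointwise powers of $\phi$ never arise: the tree recursion unfolds the nesting into a sequence of cubic estimates.
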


This main lemma implies that
\eqn 
   \tr\big( \, \big|\gamma^{(k)}(t)\, \big| \,\big)  = 0  \,, \;\;\; t\in[0,T) \,,
\eeqn
and thus that $\gamma^{(k)}(t)=0$ for $t\in[0,T)$. Hence, uniqueness holds.
that the trace norm of the right hand side
converges to zero as $r\rightarrow\infty$, for $t\in[0,T)$ and $T>0$ sufficiently small (where the smallness
condition on $T$ is uniform in $k$).

A key difficulty in this approach stems from the fact that 
the interaction operator $B_{\ell+1}$ is the sum of $O(\ell)$ terms, therefore \eqref{eq-Duham-r-1-1}  
contains $O(\frac{(k+r-1)!}{(k-1)!})=O(r!)$ terms. 
The  boardgame argument allows one to control this rapid increase of 
the number of terms as $r\rightarrow\infty$, using the fact that the ordered time integrals 
$t_1\geq t_2\geq\cdots\geq t_r$ extends over a simplex of volume $O(\frac1{r!})$.
We give a short summary of the method.

Recalling that $B_{\ell+1} = \sum_{j=1}^\ell B_{j;\ell+1}$,
we write
\eqn 
J^k(\ut_{r}) = \sum_{\rho \in \cM_{k,r}}  J^k(\rho;\ut_{r}),
\eeqn
where 
\begin{align} 
    &J^k(\rho;\ut_{r}) := (i\lambda)^r U^{(k)}(t-t_1)
    B_{ \rho(k+1),k+1}U^{(k+1)}(t_1-t_2)\cdots 
    \\
    & \hspace{0.5cm}
    \cdots 
     U^{(k+\ell-1)}(t_{\ell-1}-t_{\ell})B_{\rho(k+\ell),k+\ell}\cdots
    U^{(k+r-1)}(t_{r-1}-t_{r})B_{ \rho(k+r),k+r}
    \gamma^{ (k+r)}(t_r),
    \nonumber
\end{align} 
and $\rho$ is a map   
\begin{align}\label{eq-rhodef-1}
    \rho : \{k+1,r+2,...,k+r\}&\rightarrow \{1, 2,...,k+r-1\}\,,
    \nonumber\\
    \rho(2) = 1\;\;&,\;\;\rho(j) < j \;\;\;\forall j\,.
\end{align}
Here $\cM_{k,r}$ denotes the set of all such mappings $\rho$.  

We observe that each map $\rho$ can be represented by highlighting 
one nonzero entry $B_{\rho(k+\ell),k+\ell}$ in each column of an $(k+r-1) \times r$ matrix $[A_{i,\ell}]$
with entries $A_{i,\ell}=B_{i,k+\ell}$ for $i<k+\ell$, and $A_{i,\ell}=0$ for $i\geq k+\ell$.
As an example, consider
\eqn \label{matrixB}
\left[ 
\begin{array}{cccccc}
    {\bf{B_{1,k+1}}}           & B_{1,k+2}          & ...& ... & ...& {\bf{B_{1, k+r}}} \\
    ...                                & {\bf{B_{2,k+2}}}  & ...& ... & ...& ... \\
    ...                                & ...                       & ...&  {\bf{B_{\rho(k+\ell),k+\ell}}} & ...& ... \\
    B_{k, k+1}                   & B_{k,k+2}          & ...& ... & ...& ... \\
    0                                 & B_{k+1,k+2}       & ...& ... & ...& ... \\
    ...                                & 0                        & ...& ... & ...& ... \\ 
    ...                                & ...                       & ...& ... & ...& ... \\
    ...                                & ...                       & ...& 0  & ...& ...\\
    0                                 & 0                        & ...& 0 & ...& B_{k+r-1,k+r} 
\end{array} 
\right] \,.
\eeqn
Then,
\eqn \label{muJ}
    \gamma^{(k)}(t) = \sum_{\rho \in \cM_{k,r}}
    \int_{t\geq t_1\geq \cdots\geq t_r}
     J^k(\rho,\ut_{r}) \; dt_{1} ... dt_{r}\,,
\eeqn
where the time domains are given by the same simplex $\{t>t_1>\cdots>t_r\}\subset[0,t]^r$
for all integrals in the sum over $\rho$.

We next consider the integrals with permuted time integration orders   
\eqn \label{int-musig}
    I(\rho, \pi) = \int_{ t \geq t_{\pi(1)} \geq ... \geq t_{\pi(r)} } 
    J^k(\rho;\ut_{r}) \; dt_{1} ... dt_{r},
\eeqn
where $\pi$ is a permutation of $\{1,2,...,r\}$.
This corresponds to replacing the simplex $\{t>t_1>\cdots>t_r\}\subset[0,t]^r$
by an isometric image in $[0,t]^r$.
One can associate to $I(\rho, \pi)$ the matrix 
$$ 
\left[ 
\begin{array}{cccc}
    t_{\pi^{-1}(1)} & t_{\pi^{-1}(2)} & ... & t_{\pi^{-1}(r)} \\
     {\bf{B_{1,k+1}}}     & B_{1,k+2}        & ... & {\bf{B_{1, k+r}}} \\
    ...                                & {\bf{B_{2,k+2}}} & ... & ... \\
    ...                                & ...                  & ... & ... \\
    B_{k, k+1}            & B_{k,k+2}       & ... & ... \\
    0                                 & B_{k+1,k+2}      & ... & ... \\
    ...                                 & 0      & ... & ... \\ 
    ...                                & ...                  & ... & ... \\
    0                                 & 0                    & ... & B_{k+r-1,k+r} 
\end{array} 
\right]$$
whose columns are labeled $1$ through $r$ and whose rows are labeled
$0, 1, ... ,k+r-1$, and where the highlighted entries correspond to $B_{\rho(k+\ell),k+\ell}$.

Using the combinatorial method in \cite{esy1,esy2,esy3,esy4} in the form presented in \cite{KM}, 
a {\em board game} is introduced on the set of such matrices.
A  {\em acceptable move} is characterized as follows:
If $\rho(k+\ell) < \rho(k+\ell-1)$, the player is allowed to
do the following three changes at the same time: 
\begin{itemize} 
\item exchange the highlights in columns $\ell$ and $\ell+1$,  
\item exchange the highlights in rows $k+\ell-1$ and $k+\ell$,  
\item exchange $t_{\pi^{-1}(\ell)}$ and $t_{\pi^{-1}(\ell+1)}$.
\end{itemize}
We note that the rows  $k+\ell$ and $k+\ell+1$ do not necessarily contain highlights.
A main property of the integrals $I(\rho,\pi)$ is 
{\em invariance under acceptable moves}, \cite{esy1,esy2,esy3,esy4,KM}:
 
\begin{lemma} \label{transformI}
If $(\rho,\pi)$ is transformed into $(\rho',\pi')$ by an acceptable move, 
then $I(\rho,\pi) = I(\rho',\pi')$. 
\end{lemma}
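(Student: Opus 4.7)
The strategy is to realise the three simultaneous changes of an acceptable move as, respectively, a change of integration variables that accounts for the swap of $t_{\pi^{-1}(\ell)}$ and $t_{\pi^{-1}(\ell+1)}$, and a local commutation inside the operator product $J^k(\rho;\ut_r)$ that rearranges two adjacent $B$-factors and accounts for the column and row exchanges. Putting the two together should yield $J^k(\rho';\ut_r)$ integrated over the simplex associated with $\pi'$.

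The first step is to substitute $t_\ell \leftrightarrow t_{\ell+1}$ in the integral \eqref{int-musig}. Because the ordered simplex $\{t\ge t_{\pi(1)}\ge\cdots\ge t_{\pi(r)}\}$ is transported by this transposition precisely into the simplex associated with $\pi'$, the third item of the acceptable move is automatically produced at the level of the domain of integration. What remains is to verify that the transformed integrand coincides with $J^k(\rho';\ut_r)$.

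For the integrand, I focus on the local segment of $J^k(\rho;\ut_r)$ affected by the move, namely
\begin{equation*}
 B_{\rho(k+\ell),\,k+\ell}\, U^{(k+\ell)}(t_\ell-t_{\ell+1})\, B_{\rho(k+\ell+1),\,k+\ell+1}.
\end{equation*}
The inequality that defines admissibility of the move forces the four indices $\rho(k+\ell+1)$, $\rho(k+\ell)$, $k+\ell$, $k+\ell+1$ to be pairwise distinct, so the two operators $B_{\rho(k+\ell),k+\ell}$ and $B_{\rho(k+\ell+1),k+\ell+1}$ act on disjoint tuples of coordinates. Factoring the free propagator $U^{(k+\ell)}$ as a tensor product over its one-particle constituents and invoking the commutation rule \eqref{eq-UBcommut-1} repeatedly, each one-particle factor can be slid past whichever $B$-operator does not touch its variable. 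The factors that do touch an active variable are absorbed harmlessly, because a contraction $\delta(x_j-x_{k+\ell})\delta(x_j-x_{k+\ell}')$ does not interfere with one-particle evolutions on a coordinate disjoint from $\{x_j,x_{k+\ell},x_{k+\ell}'\}$. After this bookkeeping the two $B$-factors are adjacent and may be interchanged directly, since they act on disjoint pairs of coordinates.

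The main obstacle I anticipate is the re-indexing at the end: the symbol $B_{j,m}$ encodes both the coordinate it contracts and the marginal level at which it operates, so after $B_{\rho(k+\ell+1),k+\ell+1}$ is swept past $B_{\rho(k+\ell),k+\ell}$ and across the intermediate propagator, it must be re-identified as an interaction acting at a different marginal level with a new variable label. The simultaneous exchange of the highlights in rows $k+\ell-1$ and $k+\ell$ in the description of the move is nothing but the accounting of this relabeling, while the exchange of highlights in columns $\ell$ and $\ell+1$ records the swap of the two $B$-factors. Once this identification is carried out carefully, the transformed integrand is exactly $J^k(\rho';\ut_r)$, and the equality $I(\rho,\pi)=I(\rho',\pi')$ follows.
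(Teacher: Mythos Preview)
The paper does not supply its own proof of this lemma; it is quoted from \cite{esy1,esy2,esy3,esy4,KM} without argument. Your plan is exactly the proof given by Klainerman and Machedon in \cite{KM}: perform the substitution $t_\ell\leftrightarrow t_{\ell+1}$ (which carries the $\pi$-simplex to the $\pi'$-simplex), observe that the admissibility inequality forces $\rho(k+\ell+1)<\rho(k+\ell)<k+\ell<k+\ell+1$ so that the two adjacent $B$-operators touch disjoint coordinate sets and therefore commute through the intermediate one-particle propagators and past one another, and finally relabel $x_{k+\ell}\leftrightarrow x_{k+\ell+1}$ using the bosonic symmetry of $\gamma^{(k+r)}$ to recover the standard form $J^k(\rho';\ut_r)$. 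The row exchange in the move is precisely this last symmetry relabeling, and the column exchange records the swap of the two $B$-factors, just as you say.

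One small point worth making explicit when you write this up: after the time substitution, the middle propagator becomes $U^{(k+\ell)}(t_{\ell+1}-t_\ell)$ while the flanking propagators become $U^{(k+\ell-1)}(t_{\ell-1}-t_{\ell+1})$ and $U^{(k+\ell+1)}(t_\ell-t_{\ell+2})$. Restoring the correct time arguments in $J^k(\rho';\ut_r)$ requires splitting off single-particle factors $e^{i(t_\ell-t_{\ell+1})\Delta_{x_m}}$ and sliding them across the $B$-operators via \eqref{eq-UBcommut-1}; this is straightforward since the relevant indices are disjoint, but it is the step that simultaneously fixes the propagator arguments and effects the re-indexing you describe in the last paragraph. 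Your sketch already anticipates this, so there is no gap.
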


We say that a matrix of the type \eqref{matrixB} is in {\em upper echelon form}
{\em if each highlighted entry in a row is 
to the left of each highlighted entry in a lower row}. 
For example, the following matrix is in upper echelon form  (with $k=1$ and $r=4$):
$$ 
\left[ 
\begin{array}{cccc}
{\bf{B_{1,2}}} & B_{1,3}  &  B_{1,4} & B_{1,5}\\
0 & {\bf{B_{2,3}}} & B_{2,4} & B_{2,5} \\ 
0 & 0 & {\bf{B_{3,4}}} & {\bf B_{3,5}}\\
0 & 0 & 0 & B_{4,5}
\end{array} 
\right].
$$

Then, the following {\em normal form} property holds, \cite{esy1,esy2,esy3,esy4,KM}:  
\begin{lemma} 
\label{lm-echnumber-1}
For each matrix in $\cM_{k,r}$, there is a finite number of acceptable moves that transforms the matrix 
into upper echelon form. 
Moreover, let $C_{k,r}$ denote the number of upper echelon matrices of size $(k+r-1) \times r$. 
Then, 
\eqn
     C_{k,r}  \leq 2^{k+r}\,.
\eeqn 
\end{lemma}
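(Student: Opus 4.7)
The plan is to prove Lemma \ref{lm-echnumber-1} in two independent steps: a bubble-sort argument for the existence of the normal form, and a lattice-path/injection argument for the enumeration bound.

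For the normal form assertion, the plan is to introduce a non-negative integer monovariant that strictly decreases under each acceptable move. A natural choice is the inversion count
\[
\operatorname{inv}(\rho) = \big|\{(\ell, m) : 1 \leq \ell < m \leq r, \; \rho(k+\ell) > \rho(k+m)\}\big|.
\]
An acceptable move is available precisely when there is an adjacent pair with $\rho(k+\ell) > \rho(k+\ell+1)$, i.e.\ whenever some adjacent inversion exists, in particular whenever $\operatorname{inv}(\rho) > 0$. Such a move swaps the highlights in two adjacent columns (together with the prescribed row and time exchanges), and a short check shows it un-inverts that specific adjacent pair while leaving the inversion status of every other pair unchanged, so $\operatorname{inv}$ strictly decreases by one. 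Hence the procedure terminates in finitely many steps at a configuration with no remaining inversions, which by the standard bubble-sort observation means $\rho(k+1) \leq \rho(k+2) \leq \cdots \leq \rho(k+r)$—precisely the upper echelon condition.

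For the enumeration bound, the plan is to identify each upper echelon matrix of size $(k+r-1) \times r$ with a non-decreasing map $\sigma:\{1,\dots,r\}\to\{1,\dots,k+r-1\}$ obeying the staircase bound $\sigma(\ell)\leq k+\ell-1$ (reflecting that column $\ell$ has nonzero entries only in rows $1,\dots,k+\ell-1$). I would interpret such a map as a monotone lattice path from $(0,0)$ to $(r,k+r-1)$ with $r$ east and $k+r-1$ north steps, with $\sigma(\ell)$ equal to the number of north steps preceding the $\ell$-th east step; the staircase bound becomes the ballot-type requirement that right after each east step the path lies weakly below the line $y=x+k-1$. To extract the stated bound, I would construct a direct injection from upper echelon matrices into binary strings of length $k+r$, using the staircase constraint to collapse the many monotone configurations down to the desired exponential count $2^{k+r}$.

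The step I expect to be the main obstacle is extracting the sharp exponential constant $2^{k+r}$ in the enumeration, as opposed to the looser Catalan- or binomial-type estimates $\binom{k+2r-1}{r}$ that follow from naive lattice-path counting. The sorting argument is a clean monovariant calculation; the real combinatorial content lies in using the staircase constraint $\sigma(\ell)\leq k+\ell-1$ efficiently to beat the Catalan-type growth down to the stated geometric rate.
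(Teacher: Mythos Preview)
The paper does not supply its own proof of this lemma; it is quoted with attribution to \cite{esy1,esy2,esy3,esy4,KM}, so there is no in-paper argument to compare against. That said, two remarks on your proposal are in order.

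Your termination argument is the right idea and is essentially the Klainerman--Machedon bubble-sort. One caution: an acceptable move is \emph{not} merely the transposition of $\rho(k+\ell-1)$ and $\rho(k+\ell)$; it also exchanges the highlights in rows $k+\ell-1$ and $k+\ell$, which can alter $\rho$ at later columns whose highlight sits in one of those two rows. Your claim that ``the inversion status of every other pair is unchanged'' therefore needs a line of justification (the row swap relabels two \emph{consecutive} values, and one checks that this relabeling, combined with the column swap, still strictly decreases a suitable inversion-type count). This is routine but should not be waved away.

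The enumeration step has a genuine problem: the stated inequality $C_{k,r}\le 2^{k+r}$ is \emph{false} as written, so no injection of the kind you propose can exist. For $k=1$, an upper echelon matrix is exactly a nondecreasing map $\sigma:\{1,\dots,r\}\to\{1,\dots,r\}$ with $\sigma(\ell)\le\ell$, and the number of these is the Catalan number $C_r$. Already $C_6=132>128=2^{1+6}$. The bound that Klainerman--Machedon actually prove (and that the paper uses downstream) is of the form $C_{k,r}\le C^{\,r}$ with $C=4$; a clean way to see a slightly weaker version is to drop the staircase constraint and count all nondecreasing maps $\{1,\dots,r\}\to\{1,\dots,k+r-1\}$, giving $\binom{k+2r-2}{r}\le 2^{k+2r}$. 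Your instinct that ``beating Catalan-type growth down to $2^{k+r}$'' is the hard part was correct---it is in fact impossible---and you should instead aim for $4^{k+r}$ or $2^{k+2r}$, which is all the rest of the paper needs.
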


Let   $\cN_{k,r}$ denote the subset of   matrices in $\cM_{k,r}$ which are in upper echelon form. 
Let $\sigma$ account for a matrix in $\cN_{k,r}$.  We write $\rho \sim \sigma$ 
if the matrix corresponding to $\rho$ can be transformed into that corresponding to
$\sigma$ in finitely many acceptable moves.
We note that $\sigma$ satisfies the same properties \eqref{eq-rhodef-1} as $\rho$, but in addition,
\eqn\label{eq-sigmadef-1} 
    \sigma(j)&\leq&\sigma(j') \;\;\;,\;\;\forall j<j' \,.
\eeqn  
Then, the following key theorem holds, \cite{esy1,esy2,esy3,esy4,KM}:
\begin{theorem}
\label{thm-Duhamel-iterate-bd-2} 
Suppose $\sigma \in \cN_{k,r}$. Then, there exists a subset of $[0,t]^r$, denoted by 
$D(\sigma,t)$, such that 
\eqn\label{eq-Duhamel-uppech-1}
                              \sum_{\rho \sim \sigma} \int_{t\geq t_1\geq\cdots\geq t_r}
			  J^k(\rho;\ut_{r}) \; dt_{1} ... dt_{r} 
                             = \int_{D(\sigma,t)}  J^k(\sigma;\ut_{r}) \; dt_{1} ... dt_{r} \,. 
\eeqn                          
\end{theorem}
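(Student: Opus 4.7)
The plan is to use Lemma \ref{transformI} to convert every summand on the left-hand side into an integral of the \emph{same} integrand $J^k(\sigma;\ut_r)$, but with the time-integration simplex replaced by an isometric image under an appropriate permutation. For each $\rho \sim \sigma$, Lemma \ref{lm-echnumber-1} provides a finite sequence of acceptable moves transforming $\rho$ into $\sigma$. Each acceptable move swaps the highlights in two adjacent columns while simultaneously swapping the corresponding pair of time labels $t_{\pi^{-1}(\ell)}, t_{\pi^{-1}(\ell+1)}$; thus, after composing all the swaps along this sequence, we obtain a permutation $\pi_\rho \in S_r$ such that, by iterated application of Lemma \ref{transformI},
\begin{equation*}
\int_{t\geq t_1\geq\cdots\geq t_r} J^k(\rho;\ut_r)\,dt_1\cdots dt_r
\;=\; \int_{D_{\pi_\rho}} J^k(\sigma;\ut_r)\,dt_1\cdots dt_r,
\end{equation*}
where $D_\pi:=\{\ut_r\in[0,t]^r : t\geq t_{\pi^{-1}(1)}\geq\cdots\geq t_{\pi^{-1}(r)}\}$ is the image of the ordered simplex under the permutation $\pi$.

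Next, I would verify the two combinatorial facts that make this rearrangement rigorous. First, the permutation $\pi_\rho$ is well defined, i.e.\ independent of the particular reduction sequence chosen (this follows because the normal form $\sigma$ is unique and any two reductions yield the same integral value, by Lemma \ref{transformI}, so the simplex into which $D_{\pi_\rho}$ is mapped is uniquely determined). Second, and more substantively, the map $\rho \mapsto \pi_\rho$ is injective on the equivalence class $\{\rho\sim\sigma\}$: distinct $\rho$'s in the same class produce distinct orderings of the time variables. Assuming injectivity, the simplices $D_{\pi_\rho}$ are pairwise almost-disjoint in $[0,t]^r$ (any two distinct permutation simplices meet only on the measure-zero hyperplane set $\bigcup_{i<j}\{t_i=t_j\}$). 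One therefore may set
\begin{equation*}
D(\sigma,t) \;:=\; \bigcup_{\rho\sim\sigma} D_{\pi_\rho} \;\subset\; [0,t]^r
\end{equation*}
and obtain \eqref{eq-Duhamel-uppech-1} by additivity of the integral.

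The main obstacle is the injectivity of $\rho\mapsto\pi_\rho$, together with the statement that the reduction procedure is deterministic enough to yield a canonical permutation. I would prove both by induction on $r$, following the column-by-column structure of the upper echelon form: at each step, the position of the leftmost highlight in $\sigma$ that is not yet matched forces a specific sequence of column transpositions on $\rho$, and reversing the sequence of swaps recovers $\rho$ from the pair $(\sigma,\pi_\rho)$. This is precisely the boardgame bookkeeping developed in \cite{esy1,esy2,esy3,esy4,KM}; once it is in place, the identification of $D(\sigma,t)$ as a union of permutation simplices and the formula \eqref{eq-Duhamel-uppech-1} follow immediately from Lemma \ref{transformI} and the disjointness observation above.
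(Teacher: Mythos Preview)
Your proposal is correct and matches the approach taken in the cited references \cite{esy1,esy2,esy3,esy4,KM}; the paper itself does not prove this theorem but merely quotes it, and the remark immediately following the statement confirms your description of $D(\sigma,t)$ as a union of permutation simplices with pairwise disjoint interiors. One minor notational point: the paper's convention for the permuted simplex is $\{t\geq t_{\pi(1)}\geq\cdots\geq t_{\pi(r)}\}$ rather than with $\pi^{-1}$, but this does not affect the substance of your argument.
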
 

We remark that $D(\sigma,t)$ is  the union of all 
simplices $\{t>t_{\pi(1)}>\cdots>t_{\pi(r)}\}\subset[0,t]^r$ obtained under acceptable moves for the fixed 
upper echelon form $\sigma$;
notably, the interiors of these simplices are all pairwise disjoint.  
We emphasize that the main point of Theorem \ref{thm-Duhamel-iterate-bd-2} is the reduction of 
a sum of $O(r!)$ terms to a sum of $O(C^r)$ terms.
This concludes our summary of the Erd\"os-Schlein-Yau combinatorial method  \cite{esy1,esy2,esy3,esy4}, 
formulated in boardgame form following Klainerman-Machedon \cite{KM}.

\subsection{Setup of the proof}\label{ssec-setupproof-1}
We now give a precise formulation of the framework in which we will prove Theorem \ref{thm-main-1}.
Let us assume that we have two positive semidefinite
solutions $(\gamma_j^{(k)}(t))_{k\in\N}\in L^\infty_{t\in[0,T)}\frH^1$ satisfying the same initial data,
$(\gamma_1^{(k)}(0))_{k\in\N}=(\gamma_2^{(k)}(0))_{k\in\N}\in\frH^1$. 
Then, 
\eqn 
    \gamma^{(k)}(t) := \gamma_1^{(k)}(t) - \gamma_2^{(k)}(t)
    \;\;\;,\;\;\;k\in\N\,,
\eeqn 
is a solution to the GP hierarchy  
with initial data $\gamma^{(k)}(0)=0$ $\forall k\in\N$,
and it suffices to prove that $\gamma^{(k)}(t)=0$ $\forall k\in\N$, and for all $t\in[0,T)$.
This is due to the linearity of the GP hierarchy.

We note that $\gamma^{(k)}$, as a difference of positive semidefinite marginal density matrices,
is not in general positive semidefinite.

From the assumptions of Theorem \ref{thm-main-1}, we have that
\eqn\label{eq-Skgamma-cond-1}
    \sup_{t\in[0,T)}\tr(|S^{(k,1)}[\gamma_i^{(k)}(t)] |) < M^{2k}
    \;\;\; , \;\;\; k\in\N \;\;,\;i=1,2,
\eeqn
for some finite constant $M$ independent of $k$ and $t$.

To ensure the applicability of Theorems \ref{thm-strongDeFinetti-1} and \ref{thm-weakDeFinetti-1}, we note that if  
\eqn
    \gamma_j^{(k)}=\tr_{k+1}(\gamma_j^{(k+1)}) 
    \;\;\; \forall k\in\N \;\;,\;j=1,2 \,,
\eeqn
are admissible, it follows immediately that $(\gamma^{(k)})_{k\in\N}$ is admissible.
Moreover if both $(\gamma_1^{(k)})$ and $(\gamma_2^{(k)})$ are obtained from a weak-* limit,
then so is $(\gamma^{(k)})$.

Thus, from Theorems \ref{thm-strongDeFinetti-1} and \ref{thm-weakDeFinetti-1}, we have that 
\eqn
    \gamma_j^{(k)}(t) = \int  d\mu_t^{(j)}(\phi)\big( |\phi\rangle\langle \phi |\big)^{\otimes k} 
    \;\;\;,\;\;\;j=1,2\,,
    \nonumber\\
    \gamma^{(k)}(t) = \int  d\widetilde\mu_t(\phi)\big( |\phi\rangle\langle \phi |\big)^{\otimes k} \,,
\eeqn
where $\widetilde{\mu}_t:=\mu^{(1)}_t-\mu^{(2)}_t$ is the difference of two probability measures 
on the unit ball in $L^2(\R^3)$.
We remark that \eqref{eq-Skgamma-cond-1} is equivalent to  
\eqn\label{eq-muH1bound-1}
     \int  d\mu_t^{(j)}(\phi) \|\phi\|_{H^1}^{2k} < M^{2k}
     \;\;\;,\;\;\;j=1,2\,,
\eeqn
for all $k\in\N$, where $H^1=\{f\in L^2(\R^3) \, | \, \|\langle\nabla_x\rangle f\|_{L^2}<\infty\}$,
and $\langle\nabla\rangle:=\sqrt{1-\Delta}$.

\begin{lemma}\label{lm-Chebyshev-1}
Let $\mu$ be a Borel probability measure in $L^2(\R^3)$, and assume that 
\eqn\label{eq-EnCond-1}
    \int d\mu(\phi) \|\phi\|_{H^1}^{2k} \leq M^{2k}
\eeqn
holds for some finite constant $M>0$, and all $k\in\N$. 
Then, 
\eqn
    \mu\Big(\Big\{ \, \phi\in L^2(\R^3) \, \Big| \,   \|\phi\|_{H^1} >  M \,\Big\} \Big)\, = \, 0 \,.
\eeqn
\end{lemma}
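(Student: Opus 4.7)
The plan is to argue by a straightforward Chebyshev/Markov-type inequality, exploiting the fact that the hypothesis provides a \emph{uniform} bound in $k$ on the $2k$-th moment of the random variable $\|\phi\|_{H^1}$ under $\mu$. The precise constant $M$ on the right-hand side of \eqref{eq-EnCond-1} is crucial: it matches the threshold appearing in the conclusion, so the argument has no slack and must send $k\to\infty$.

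First, I would fix an arbitrary $\epsilon>0$ and consider the set
\begin{equation*}
    A_\epsilon \, := \, \Big\{\, \phi\in L^2(\R^3) \, \Big| \, \|\phi\|_{H^1} > (1+\epsilon) M \, \Big\}.
\end{equation*}
Since $\|\phi\|_{H^1}^{2k} \geq ((1+\epsilon)M)^{2k}\,\mathbf{1}_{A_\epsilon}(\phi)$ pointwise, integrating against $\mu$ and using hypothesis \eqref{eq-EnCond-1} gives
\begin{equation*}
    ((1+\epsilon)M)^{2k} \, \mu(A_\epsilon)
    \, \leq \, \int d\mu(\phi) \,\|\phi\|_{H^1}^{2k}
    \, \leq \, M^{2k},
\end{equation*}
so $\mu(A_\epsilon)\leq (1+\epsilon)^{-2k}$. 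Letting $k\to\infty$ yields $\mu(A_\epsilon)=0$.

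Next, I would express the target set as a countable union of such sets. Writing
\begin{equation*}
    \Big\{\, \phi\in L^2(\R^3) \, \Big| \, \|\phi\|_{H^1} > M \, \Big\}
    \, = \, \bigcup_{n\in\N} A_{1/n},
\end{equation*}
countable subadditivity of $\mu$ gives the desired conclusion. Finally, a minor point I would verify is measurability of the functional $\phi\mapsto \|\phi\|_{H^1}$ on $L^2(\R^3)$ (viewed as $+\infty$ off $H^1$), which follows from lower semicontinuity: it is the supremum over a countable dense family in $H^{-1}$ of the continuous linear functionals $\phi\mapsto\langle f_n,\phi\rangle_{L^2}$ paired suitably with $\langle\nabla\rangle$. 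There is no real obstacle here; the only subtle point is that the constant in the hypothesis is sharp, which forces the two-step argument (strict inflation by $(1+\epsilon)$ followed by a countable union) rather than a one-shot Markov bound at threshold $M$ itself.
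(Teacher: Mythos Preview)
Your proof is correct and follows essentially the same Chebyshev/Markov argument as the paper: bound $\mu(\{\|\phi\|_{H^1}>\lambda\})$ by $(M/\lambda)^{2k}$ for $\lambda>M$ and send $k\to\infty$. You are in fact slightly more careful than the paper, which stops after showing the set $\{\|\phi\|_{H^1}>\lambda\}$ is null for each $\lambda>M$ and leaves the passage to $\lambda\downarrow M$ (your countable-union step) and the measurability remark implicit.
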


\begin{proof}
From Chebyshev's inequality, we have that
\eqn
    \mu\Big(\Big\{  \phi\in L^2(\R^3) \Big| \,   \|\phi\|_{H^1} > \lambda \Big\}\Big)
    \leq\frac1{\lambda^{2k}}\int d\mu(\phi) \|\phi\|_{H^1}^{2k} \leq \frac{M^{2k}}{\lambda^{2k}} \,
\eeqn
for any $k\geq0$.
Evidently, for $\lambda> M$, the r.h.s. tends to zero when $k\rightarrow\infty$.
\end{proof}

From here on, we will consider the representation of the expansion  \eqref{eq-Duham-r-1-1}  for $\gamma^{(k)}(t)$
in upper echelon normal form, given by the right hand side of \eqref{eq-Duhamel-uppech-1}.
Then,
\eqn  
    \gamma^{(k)}(t) 
    &=&\sum_{\sigma\in\cN_{k,r}} 
    \int_{D(\sigma,t)} dt_{1}\cdots dt_r   U^{(k)}(t-t_1)
    B_{\sigma(k+1),k+1}U^{(k+1)}(t_1-t_2)\cdots 
    \nonumber\\
    &&\hspace{2cm}
    \cdots
    U^{(k+r-1)}(t_{r-1}-t_{r})B_{\sigma(k+r),k+r}
    \gamma^{(k+r)}(t_r)\,
    \label{eq-DuhamEchelon-def-1-0} \,.
\eeqn
The sum with respect to $\sigma$ extends over all inequivalent upper echelon forms.  

Using the quantum de Finetti theorem (by which we henceforth refer to either the strong or the weak version), we obtain
\eqn\label{eq-DuhamEchelon-def-1-1} 
    \gamma^{(k)}(t) 
    &=&\sum_{\sigma\in\cN_{k,r}} \int_{D(\sigma,t)} dt_1,\dots,dt_r  
     \int d\widetilde{\mu}_{t_r}(\phi) \, \opJ^k(\sigma;t,t_1,\dots,t_r) \,,
\eeqn
where
\eqn
    \lefteqn{
    \opJ^k(\sigma;t,t_1,\dots,t_r;\ux_k;\ux_k')
    =\Big( \, U^{(k)}(t-t_1)
    B_{\sigma(k+1),k+1}U^{(k+1)}(t_1-t_2)\cdots 
    }
    \nonumber\\
    &&\hspace{0.5cm}
    \cdots B_{\sigma(k+\ell),k+\ell}U^{(k+\ell)}(t_{\ell}-t_{\ell+1})
    B_{\sigma(k+\ell+1),k+\ell+1}U^{(k+\ell+1)}(t_{\ell+1}-t_{\ell+2})\cdots
    \nonumber\\
    &&\hspace{1cm}
    \cdots
    U^{(k+r-1)}(t_{r-1}-t_{r})B_{\sigma(k+r),k+r}
    \big( |\phi\rangle\langle \phi |\big)^{\otimes (k+r)} \,\Big)(\ux_k;\ux_k') \,.
    \label{eq-DuhamEchelon-def-1}
\eeqn
Here, we may think of the time variable $t_\ell$ as being attached to the interaction operator $B_{\sigma(k+\ell),k+\ell}$.
For {\em fixed} $\phi$, we note that since 
\eqn\label{eq-gammaphi-prod-def-1}
    \big( |\phi\rangle\langle \phi |\big)^{\otimes (k+r)}(\ux_{k+r};\ux_{k+r}') 
    =\prod_{i=1}^{k+r}( |\phi\rangle\langle \phi |)(x_i;x_i')
\eeqn
is given by a product of 1-particle kernels, it follows that
\eqn\label{eq-gammak-prod-def-1}
    \opJ^k(\sigma;t,t_1,\dots,t_r;\ux_k;\ux_k') = \prod_{j=1}^k 
    \opJ_j^1(\sigma_j;t,t_{\ell_{j,1}},\dots,t_{\ell_{j,m_j}};x_j;x_j')
\eeqn 
likewise has product form, for each fixed $\sigma$. 
This is because in \eqref{eq-DuhamEchelon-def-1}, the 
operators $B_{\sigma(k+\ell),k+\ell}$ and $U^{(k+\ell)}(t_{\ell}-t_{\ell+1})$ each map products of 1-particle
kernels to products of 1-particle kernels (but the operators $B_{\sigma(k+\ell),k+\ell}$ do in general
not preserve positive semidefiniteness).
Each 1-particle kernel $\opJ_j^1$ can be written as a Duhamel expansion in itself, with interaction operators
inherited from those appearing in  $\opJ^k$. We label the interaction operators in  $\opJ_j^1$ 
``internally"   
with $\sigma_j$, $j=1,\dots,k$,  (which are automatically in upper echelon form relative to $\opJ_j^1$). 
More details are given in Section \ref{sec-treegraph-1} below.

For a fixed $k$, the number of  inequivalent echelon forms  is bounded by $C^r$,  using Lemma 
\ref{lm-echnumber-1}. Hence,
\eqn\label{eq-gammak-L2bound-1}
    \lefteqn{
    \tr( \, |\gamma^{(k)}| \, ) 
    }
    \\
    &\leq& C^r \sum_{i=1,2}
    \sup_\sigma \int_{[0,t]^r}dt_1\cdots dt_r 
    \int d{\mu}_{t_r}^{(i)}(\phi) 
    \prod_{j=1}^k\tr\Big( \, \Big| \,\opJ_j^1(\sigma_j;t,t_{\ell_{j,1}},\dots,t_{\ell_{j,m_j}}) \, \Big| \, \Big)\,.
    \nonumber
\eeqn  
The time variable $t_{\ell_{j,\alpha}}$ corresponds to the one attached to the $\alpha$-th interaction operator
(counting from the left) appearing in the factor $\opJ_j^1$ (every  $t_{\ell_{j,\alpha}}$ corresponds uniquely to one of the 
time variables $t_\ell$ in \eqref{eq-DuhamEchelon-def-1}).
Based on the expression \eqref{eq-gammak-L2bound-1}, our goal is to prove the estimate
\eqn 
    \tr\big( \, \big|\gamma^{(k)}(t)\, \big| \,\big)     
    \; < \; 2 \, M^{2k-2} \, (2 C M^4 T)^{(r+1)/2}   
\eeqn
asserted in Lemma \ref{lm-mainlm-trest-1}; it implies 
that for any $k\in\N$,
the right hand side tends to zero as $r\rightarrow\infty$, for $t\in[0,T)$, and sufficiently small $T>0$
(independent of $k$).
Since $r$ is arbitrary, this implies that the left hand side equals zero, thus establishing uniqueness.
By iterating this argument on the union of intervals $[0,T)\cup[T,2T)\cup\dots$, uniqueness extends to the 
entire time of existence for a given solution.
We note that in \eqref{eq-gammak-L2bound-1}, the distinction between focusing and defocusing GP hierarchy has 
disappeared, since $|\lambda|=1$ in both cases.

\section{Binary tree graphs}
\label{sec-treegraph-1}

Because every interaction operator contracts precisely two factors, one can conveniently organize the above
expansions for $\opJ^k$ and $\opJ_j^1$ with the help of binary tree graphs, 
for arbitrary values of $k$ and $r$.  
For the convenience of the reader, we first discuss the factorization \eqref{eq-gammak-prod-def-1} with an example. 

\subsection{An example for $k=3$}
\label{ssec-k3trees-ex-1}
As an example, let $k=3$ and $r=4$, and let us consider
\eqn\label{eq-Iexample34-1}
    \opJ^3(\sigma;t,t_1,\dots,t_4) = U_{0,1}^{(3)}B_{2,4}U_{1,2}^{(4)}B_{2,5}
    U_{2,3}^{(5)}B_{3,6}U_{3,4}^{(6)}B_{5,7}(|\phi\rangle\langle\phi|)^{\otimes 7}
\eeqn
where  $U^{(j)}_{i,i'}:=U^{(j)}(t_{i}-t_{i'})$ with $t_0:=t$,
and where $\sigma$ corresponds to the upper echelon matrix
\eqn \label{matrixB-ex-2}
\left[ 
\begin{array}{cccccc}
    B_{1,4}                    & B_{1,5}          & B_{1,6}          &B_{1,7}  \\
    {\bf{B_{2,4}}}          & {\bf{B_{2,5}}}   & B_{2,6}          &B_{2,7} \\
    B_{3,4}                   & B_{3,5}           & {\bf B_{3,6} }  &B_{3,7} \\
    0                             & B_{4,5}           & B_{4,6}         &B_{4,7} \\
    0                             & 0                    & B_{5,6}          &{\bf B_{5,7}} \\ 
    0                             & 0                    & 0                    &B_{6,7} 
\end{array} 
\right] \,.
\eeqn
We now read off from this matrix which terms are ``connected" via contractions, starting with the rightmost interaction operator.
Although all 7 factors in $(|\phi\rangle\langle\phi|)^{\otimes 7}$ are indistinguishable, we enumerate
the factors, and write the product 
in the form $\otimes_{i=1}^7u_i$, ordered with increasing index $i$
(where $u_i=|\phi\rangle\langle\phi|$ for every $i=1,\dots,7$). 
\begin{itemize}
\item
Clearly, $B_{5,7}$ contracts the factors $u_5$ and $u_7$, and acts trivially as the identity on all other factors $u_i$,
\eqn\label{eq-ex-B57term-1}
    B_{5,7}(\otimes_{i=1}^7u_i ) = (\otimes_{i=1}^4 u_i) \otimes \Theta_4 \otimes u_6\,,
\eeqn
where
\eqn
     \Theta_4:=B_{1,2}(u_5\otimes u_7) \,.
\eeqn
The index $\alpha$ in $\Theta_\alpha$ associates it to the 
$\alpha$-th interaction operator from the left in \eqref{eq-Iexample34-1} 
(in case of $\Theta_4$, the 4th interaction operator is given by $B_{5,7}$). 
\item
The interaction operator $B_{3,6}$ contracts $U^{(1)}_{3,4}u_3$ and $U^{(1)}_{3,4}u_6$, while it leaves
all remaining factors untouched. In particular, it does not affect $\Theta_4$.
\eqn\label{eq-ex-B36term-1}
    B_{3,6}U_{3,4}^{(6)}\Big(\eqref{eq-ex-B57term-1}\Big)
    = (U_{3,4}^{(2)}(u_1\otimes u_2)) \otimes \Theta_3\otimes  (U_{3,4}^{(1)}u_4)\otimes (U_{3,4}^{(1)}\Theta_4 )
\eeqn
where
\eqn
    \Theta_3:=B_{1,2}(U_{3,4}^{(2)}(u_3\otimes u_6)) \,.
\eeqn
\item
The interaction operator $B_{2,5}$ contracts $U^{(1)}_{2,4}u_2$ with $U^{(1)}_{2,4}\Theta_4$
(where we used the group property $U^{(j)}_{2,3}U^{(j)}_{3,4}=U^{(j)}_{2,4}$)
corresponding to the 2nd and 5th factor in \eqref{eq-ex-B36term-1}, while it leaves 
all remaining factors untouched.
\eqn\label{eq-ex-B25term-1}
    B_{2,5}U_{2,3}^{(5)}\Big(  \eqref{eq-ex-B36term-1} \Big)
    =  (U_{2,4}^{(1)}u_1) \otimes \Theta_2 \otimes (U^{(1)}_{2,3}\Theta_3)\otimes  (U_{2,4}^{(1)}u_4)
\eeqn
where 
\eqn
    \Theta_2:= B_{1,2}\big((U_{2,4}^{(1)}u_2)\otimes (U_{2,4}^{(1)}\Theta_4)\big)\,.
\eeqn
\item
Finally, the interaction operator $B_{2,4}$ contracts $U_{1,2}^{(1)}\Theta_2$ with $(U_{1,4}^{(1)}u_4)$
corresponding to the 2nd and 4th factor in \eqref{eq-ex-B25term-1},
while leaving all other factors untouched,
\eqn
    B_{2,4}U_{1,2}^{(4)}\Big(  \eqref{eq-ex-B25term-1} \Big) 
    =   (U_{1,4}^{(1)}u_1) \otimes \Theta_1 \otimes (U^{(1)}_{1,3}\Theta_3) 
\eeqn
where
\eqn
    \Theta_1:=B_{1,2}\big((U_{1,2}^{(1)}\Theta_2)\otimes (U_{1,4}^{(1)}u_4)\big)\,.
\eeqn
\end{itemize}
In conclusion, we have found the factorized expression for \eqref{eq-Iexample34-1},  
\eqn
    \opJ^3 =  \underbrace{(U_{0,4}^{(1)}u_1)}_{=\opJ_1^1} \otimes 
    \underbrace{(U_{0,1}^{(1)}\Theta_1)}_{=\opJ_2^1} \otimes 
    \underbrace{(U^{(1)}_{0,3}\Theta_3)}_{=\opJ_3^1} \,.
\eeqn
We may now write the factors $\opJ_j^1$ as one-particle matrices, and substitute back $u_i=|\phi\rangle\langle\phi|$ for all $i=1,\dots,7$.
\begin{itemize}
\item
$\opJ_1^1$ corresponds to a free propagation without any interaction operators,
\eqn
    \opJ_1^1 = U_{0,4}^{(1)}|\phi\rangle\langle\phi|
\eeqn
\item
Moreover,
\eqn
    \opJ_2^1 = U_{0,1}^{(1)}B_{1,2}U^{(2)}_{1,2}B_{1,3}U^{(1)}_{2,4}B_{3,4}(|\phi\rangle\langle\phi|)^{\otimes4}
\eeqn
where the interaction operators correspond to $B_{2,4}$, $B_{2,5}$, $B_{5,7}$; they are re-indexed in a manner
that leaves the connectivity structure between contractions invariant.
The labeling of interaction operators $B_{\sigma_2(\ell),\ell}$ 
here is obtained from a labeling function $\sigma_2$ (corresponding to $\sigma_j$ in 
\eqref{eq-gammak-prod-def-1}) where $\sigma_2(2)=1$, $\sigma_2(3)=1$, $\sigma_2(4)=3$.
\item
Finally,
\eqn
    \opJ_3^1 = U_{0,3}^{(1)}B_{1,2}U^{(2)}_{3,4} (|\phi\rangle\langle\phi|)^{\otimes2}
\eeqn
where the interaction operator corresponds to $B_{3,6}$, and can be labeled with $\sigma_3(2)=1$.
\end{itemize}
We observe that for $\ell<\ell'$, the interaction operators $B_{\sigma(\ell),\ell}$ 
and $B_{\sigma(\ell'),\ell'}$ in $\opJ^3$ (which are highlighted in \eqref{matrixB-ex-2}) 
belong to the same factor $\opJ_j^1$ if either $\sigma(\ell)=\sigma(\ell')$, or $\ell=\sigma(\ell')$.
In this case, we can think of them as being ``connected"; below, we will introduce binary tree graphs
that encode this connectivity structure.

This example also illustrates how the internal labeling functions $\sigma_j$ in \eqref{eq-gammak-prod-def-1} 
are deduced from the ``global" labeling function $\sigma$. In the sense outlined above, we may think of
$\sigma_j$ as the restriction of $\sigma$ to $\opJ_j^1$.

We note that in $\opJ_1^1$ and $\opJ_3^1$, there is a free propagator applied to each $\phi$;
this will allow for a straightforward application of Strichartz estimates to control contractions due to interaction operators.
However, the term $\opJ_2^1$  involves the contraction of factors $|\phi\rangle\langle\phi|$ without 
any free propagation term inbetween.  
We will call  the factor $\opJ_2^1$ {\em distinguished} while $\opJ_1^1$ and $\opJ_3^1$ are regular.

\subsection{Definition of binary trees}
\label{subsec-trees-def-1}

We now introduce binary tree graphs as a bookkeeping  device to keep track of the complicated
contraction structures  imposed by the interaction operators inside  the
iterated Duhamel formula
\eqref{eq-DuhamEchelon-def-1}.

To this end, we
associate  \eqref{eq-DuhamEchelon-def-1} to the union of $k$ disjoint binary tree graphs, $(\tau_j)_{j=1}^k$.
We note that these appear as ``skeleton graphs" for the more complicated
graphs in \cite{esy1,esy2,esy3,esy4}.
We assign:
\begin{itemize}
\item 
An {\em internal vertex} $v_\ell$, $\ell=1,\dots,r$, to each operator $B_{\sigma(k+\ell),k+\ell}$.
Accordingly, the time variable $t_\ell$ in \eqref{eq-DuhamEchelon-def-1}
is thought of as being attached to the vertex $v_\ell$.
\item
A {\em root vertex} $w_j$, $j=1,\dots,k$ 
to each factor $\opJ_j^1(\cdots;x_j;x_j')$ in \eqref{eq-gammak-prod-def-1}.
\item
A {\em  leaf vertex} $u_i$, $i=1,\dots,k+r$, to the factor $( |\phi\rangle\langle \phi |)(x_i;x_i')$ 
in \eqref{eq-gammaphi-prod-def-1}.
\end{itemize}
For the sake of concreteness, we draw graphs as follows:
We consider the strip in $(x,y)\in\R^2$ given by $x\in[0,1]$.
We draw all root vertices $(w_j)_{j=1}^k$, ordered vertically,   on the line $x=0$,
all internal vertices $(v_\ell)_{\ell=1}^r$ in the region $x\in(0,1)$, where $v_{\ell'}$ is on the right of 
$v_\ell$ if $\ell'>\ell$. Finally, we draw all leaf vertices $(u_i)_{i=1}^{k+r}$, ordered vertically, on the line
$x=1$.   

Next, we introduce the equivalence relation ``$\sim$" of {\em connectivity} between vertices. Between any 
pair of connected vertices, we draw a connecting line, which we refer to as an {\em edge}:
\begin{itemize}
\item 
Let $v_\ell$ be the internal vertex with smallest value of $\ell$ such that $\sigma(\ell)=j$;
then, we say that $v_\ell$ is connected to the root vertex $w_j$,  that is, $w_j\sim v_\ell$.
\item 
If there is no internal vertex connected to $w_j$, we draw an edge connecting $w_j$ to the
leaf vertex $u_j$, and say that they are connected, $w_j\sim u_j$.
\item 
Given $k<\ell\leq k+r$, if there exists $\ell'>\ell$ such that
$\ell=\sigma(\ell')$ or $\sigma(\ell)=\sigma(\ell')$,  we say that $v_\ell\sim v_{\ell'}$ are {\em connected}.

$v_\ell$ is then called a {\em parent vertex} of $v_{\ell'}$, and 
$v_{\ell'}$ is called a {\em child vertex} of $v_\ell$. We denote the two child vertices of $v_\ell$
by $v_{\kappa_-(\ell)}$ and $v_{\kappa_+(\ell)}$, using the condition
$\kappa_-(\ell)<\kappa_+(\ell)$.

If there exists no internal vertex $v_{\ell'}$ with $\ell'>\ell$ such that $\ell=\sigma(\ell')$,
we say that $v_\ell$ is connected to the leaf vertex $u_\ell$, $v_\ell\sim u_\ell$; if there exists
no  internal vertex $v_{\ell'}$ with $\ell'>\ell$ such that $\sigma(\ell)=\sigma(\ell')$,
we say that $v_\ell$ is connected to the leaf vertex $u_\sigma(\ell)$, $v_\ell\sim u_{\sigma(\ell)}$.
In these cases, $v_\ell$ is the {\em parent vertex} of $u_\ell$ (or $u_{\sigma(\ell)}$), and 
$u_\ell$ (or $u_{\sigma(\ell)}$) is a {\em child vertex} of $v_\ell$.
\end{itemize}

This implies that every internal vertex has precisely two child vertices, which can
either be internal or leaf vertices (they do not need to be of the same type).
Every root vertex has precisely one child vertex, which could be of internal or of leaf type.
Every internal or leaf vertex has exactly one parent vertex.

We conclude that the graph thus obtained is the disjoint union of $k$ binary trees, which we denote
by $(\tau_j)_{j=1}^k$, where the root of $\tau_j$ is the root vertex $w_j$ (if $w_j\sim u_j$ without
internal vertices inbetween, then the
binary tree consists trivially only of a single edge connecting one root and one leaf vertex).

We say that the tree $\tau_j$ is {\em distinguished} if $v_r\in\tau_j$,
and {\em regular} if $v_r\not\in\tau_j$. We call the two leaf vertices connected to $v_r$ {\em distinguished
leaf vertices}, and all others {\em regular leaf vertices}.
Clearly, there are $k-1$ regular trees, and one distinguished tree in this construction.\noindent

\centerline{\epsffile{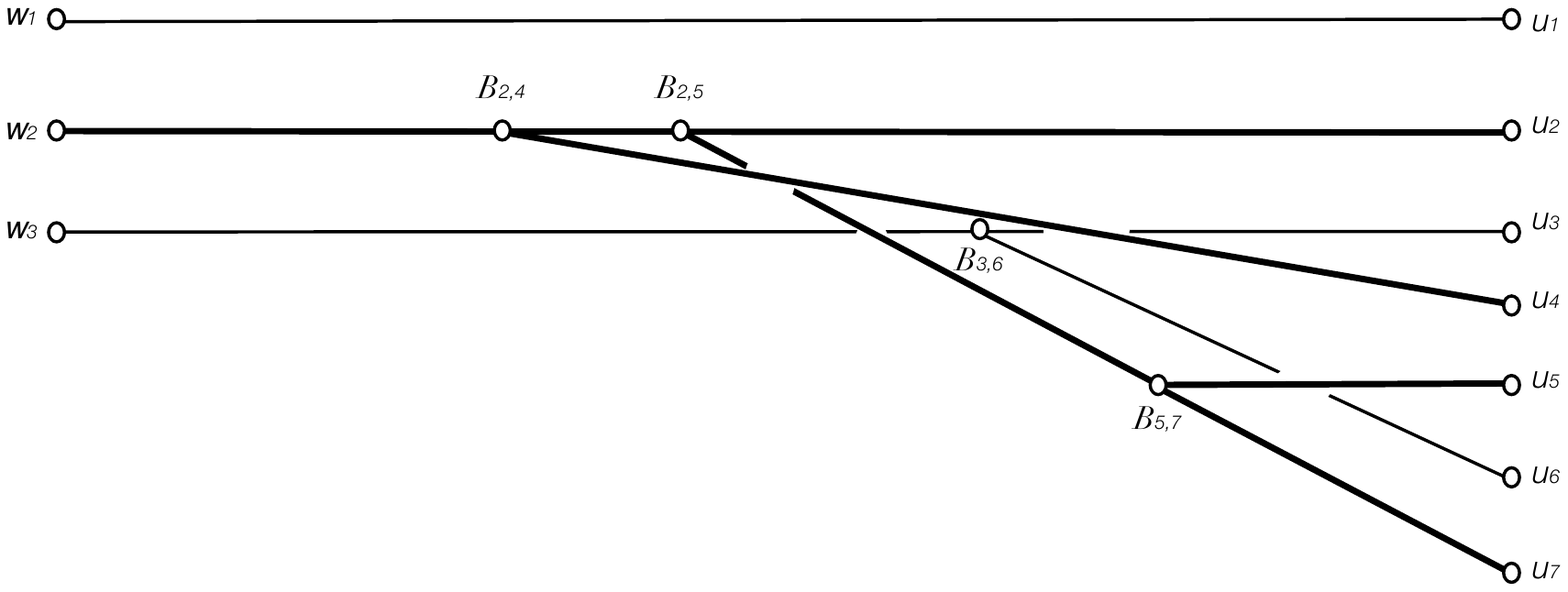} }
\noindent
Figure 1. The disjoint union of three tree graphs $\tau_j$, $j=1,2,3$, 
corresponding to the example discussed in Section \ref{ssec-k3trees-ex-1},
for $k=3$, $r=4$, and
\eqn\label{eq-Iexample34-2}
    \opJ^3(\sigma;t,t_1,\dots,t_4) = U_{0,1}^{(3)}B_{2,4}U_{1,2}^{(4)}B_{2,5}
    U_{2,3}^{(5)}B_{3,6}U_{3,4}^{(6)}B_{5,7}(|\phi\rangle\langle\phi|)^{\otimes 7} \,.
\eeqn 
The root vertex $w_j$ belongs to the tree $\tau_j$, $j=1,2,3$.
The internal vertices correspond to $v_1\sim B_{2,4}$, $v_2\sim B_{2,5}$, $v_3\sim B_{3,6}$, and $v_4\sim B_{5,7}$.
The leaf vertices $u_5$ and $u_7$, and the internal vertex  $v_4\sim B_{5,7}$ are distinguished.
The distinguished tree $\tau_2$ is drawn with thick edges.

\section{The distinguished tree graph}
\label{sec-disttree-1}

In this section, we further refine the combinatorial organization of terms corresponding to the
distinguished tree. We note that regular trees can be treated in a similar manner, with obvious modifications.
Let $\tau_j$ denote the distinguished tree graph.
We assume that it
contains $m_j$ internal vertices $(v_{\ell_{j,\alpha}})_{\alpha=1}^{m_j}$, and $\Lj$ leaf  vertices 
$(u_{j,i})_{i=1}^{\Lj}$.
We recall that the internal vertices are enumerated with $\alpha\in\{1,\dots,m_j\}$,
where $\alpha=m_j$ is the distinguished vertex, and that $\alpha$ corresponds to the
interaction operator $B_{\sigma_j(\alpha+1),\alpha+1}$ (notice the shift by 1 in the index) in \eqref{eq-Ijdisting-def-1}.
For notational simplicity, we will from here on label leaf vertices with $\alpha\in\{m_j+1,\dots,2m_j+2\}$
(corresponding to $u_{j,\alpha-m_j}$), 
and will often refer to the vertex $v_{j,\alpha}$ by its label $\alpha$.

To determine the contribution to \eqref{eq-gammak-prod-def-1} corresponding to  $\tau_j$,
we use the commutativity relation \eqref{eq-UBcommut-1}, 
and straightforwardly find that
\begin{align}\label{eq-Ijdef-0-1}
    &\opJ_j^1(\sigma_j;t,t_{\ell_{j,1} },\dots,t_{\ell_{j,m_j}};x_j;x_j')
    \\
    &=
    U^{(1)}(t-t_{\ell_{j,1} })\cdots U^{(1)}(t_{\ell_{j,1}-1}-t_{\ell_{j,1}})
    B_{\sigma_j(2),2}\cdots 
    \nonumber\\
    &\hspace{0.5cm}
    \cdots B_{\sigma_j(\alpha),\alpha}U^{(\alpha)}(t_{\ell_{j,\alpha-1}}-t_{\ell_{j,\alpha-1}+1})
    \cdots U^{(\alpha)}(t_{\ell_{j,\alpha}-1}-t_{\ell_{j,\alpha}})
    B_{\sigma_j(\alpha+1),\alpha+1}\cdots 
    \nonumber\\
    &\hspace{1cm}
    \cdots
    U^{(m_j) }(t_{\ell_{j,m_j}-1}-t_{\ell_{j,m_j}})B_{\sigma_j(m_j+1),m_j+1} 
    \big( |\phi\rangle\langle \phi |\big)^{\otimes (\Lj)} \,.
    \nonumber 
\end{align}
Here, the interaction operators $B_{\sigma_j(\alpha),\alpha}$ are adapted to  the
1-particle   kernel $\opJ_j^1$, and   $(\sigma_j)_{j=1}^{m_j}$ is an ``internal" labeling
of the interaction operators that preserves  
the structure of $\tau_j$. In this sense, $\sigma_j$ corresponds to
the restriction of $\sigma$ to the tree $\tau_j$. Clearly, $\sigma_j(2)=1$.

Between any two consecutive interaction operators 
$B_{\sigma_j(\alpha),\alpha}$
and $B_{\sigma_j(\alpha+1),\alpha+1}$,
with $\alpha<m_j$, there is a composition of $\ell_{j,\alpha}-\ell_{j,\alpha-1}$
free propagators at consecutive time steps, so that
\eqn 
    U^{(\alpha)}(t_{\ell_{j,\alpha-1}}-t_{\ell_{j,\alpha-1}+1})
    \cdots U^{(\alpha)} (t_{{\ell_{j,\alpha}}-1}-t_{\ell_{j,\alpha}})
    = U^{(\alpha)} (t_{\ell_{j,\alpha-1}}-t_{\ell_{j,\alpha}} ) \,,
\eeqn
due to the group property of the free propagators. 
Hence,     \eqref{eq-Ijdef-0-1} reduces to 
\eqn\label{eq-Ijdisting-def-1}
    \lefteqn{
    \opJ_j^1(\sigma_j;t,t_{\ell_{j,1}},\dots,t_{\ell_{j,m_j} })=U^{(1)}(t-t_{\ell_{j,1} })
    B_{1,2}\cdots 
    }
    \\
    &&
    \cdots B_{\sigma_j(\alpha),\alpha}U^{(\alpha)}(t_{\ell_{j,\alpha-1}}-t_{\ell_{j,\alpha}})
    B_{\sigma_j(\alpha+1),\alpha+1}\cdots 
    \nonumber\\
    &&\hspace{1cm}
    \cdots
    U^{(m_j)}(t_{\ell_{j,m_j-1}}-t_{\ell_{j,m_j}})B_{\sigma_j(m_j+1),m_j+1} 
    \big( |\phi\rangle\langle \phi |\big)^{\otimes (\Lj) }
    \nonumber
\eeqn
where $\ell_{j,m_j}=r$. 
We observe that on the last line, there is no free propagator in front of 
$\big( |\phi\rangle\langle \phi |\big)^{\otimes (\Lj)}$ (since $\tau_j$ is distinguished).
As a consequence, the Strichartz estimate  cannot be 
applied in the last step. 
Resolving this issue is the main task of
the construction presented in the sequel.

Our goal is to bound
\eqn 
     \int_{[0,T)^{m_j-1}}  dt_{\ell_{j,1}}\cdots dt_{\ell_{j,m_j-1} }
     \tr\Big( \, \Big| \, \opJ_j^1(\sigma_j;t,t_{\ell_{j,1}},\dots,t_{\ell_{j,m_j} }) \, \Big| \,\Big) \, .
\eeqn 
We will employ a recursion that takes into account the structure
of interactions and free evolutions occurring between interactions.
We will explain the strategy based on an example in the next section.

\subsection{Example calculation for a distinguished tree} 
\label{ssec-disttreecalc-ex-1}
We consider an example of  a distinguished tree, which we obtain from setting $k=1$, $r=3$
in \eqref{eq-DuhamEchelon-def-1-1} (if $k=1$, there is only one tree, and it is necessarily distinguished).
From \eqref{eq-DuhamEchelon-def-1-1} and \eqref{eq-DuhamEchelon-def-1}, we have
\eqn 
    \gamma^{(1)}(t)
    &=&(i\lambda)^3\sum_{\sigma\in\cN_{1,3}} 
    \int_{D(\sigma,t)} dt_{1}dt_2 dt_3 \int d\widetilde{\mu}_{t_3}(\phi) U^{(1)}(t-t_1)
    B_{1,2}U^{(2)}(t_1-t_2)\ 
    \nonumber\\
    &&\hspace{0.5cm}
     B_{\sigma(3),3}U^{(3)}(t_{2}-t_{3})
    B_{\sigma(4),4}  
    \big( |\phi\rangle\langle \phi |\big)^{\otimes 4}   \,
    \label{eq-DuhamEchelon-ex-1} \,,
\eeqn
where $D(\sigma,t)\subseteq[0,t]^3$. For a fixed $\sigma$ (with, say, $\sigma(3)=2$ and $\sigma(4)=3$), 
we consider, as an example, the contribution
to the bound \eqref{eq-gammak-L2bound-1} of the form
\begin{align}
\label{eq-ex-bound-1} 
    &\int_{[0,T)^3}dt_1 dt_2 dt_3
    \int d{\mu}_{t_3}^{(i)}(\phi) \tr\Big( \, \Big| \, \Big( U^{(1)}(t-t_1)
    B_{1,2}U^{(2)}(t_1-t_2)\  
    \nonumber\\
    & \hspace{3cm}
     B_{2,3}U^{(3)}(t_{2}-t_{3})
    B_{3,4}  
    \big( |\phi\rangle\langle \phi |\big)^{\otimes 4} \Big|\, \Big)\,,
\end{align}
where $t\in[0,T)$, and noting that $|i\lambda|=1$.

\subsubsection{Recursive determination of contraction structure}

Clearly, $\big( |\phi\rangle\langle \phi |\big)^{\otimes 4} $ is a product of 1-particle density matrices.
We observe that the interaction operators $B_{i,j}$ preserve the product structure 
(while changing the explicit expressions for each factor), and contract 
two factors at a time (the $i$-th and the $j$-th). On all other factors,  $B_{i,j}$ acts as the identity.
Similarly as in the example of Section \ref{ssec-k3trees-ex-1}, we introduce kernels $\Theta_\alpha$, $\alpha=1,\dots,3$ 
that account for the contractions performed by $B_{\sigma(\alpha+1),\alpha+1}$, which we write in the
normal form 
\eqn\label{eq-Thetaalpha-ex-def-1}
    \Theta_\alpha(x,x')=\sum_{\beta_\alpha} c_{\beta_\alpha}^\alpha \chi_{\beta_\alpha}^\alpha(x)\overline{\psi_{\beta_\alpha}^\alpha}(x')
\eeqn
where $\chi_{\beta_\alpha}^\alpha$, $\psi_{\beta_\alpha}^\alpha$ are certain functions that will be recursively determined, and $c_{\beta_\alpha}^\alpha$
are coefficients with values in $\{1,-1\}$.

\noindent
$\bullet$ {\em The kernel $\Theta_3$:}
We start at the last interaction operator $B_{3,4}$ in \eqref{eq-ex-bound-1}. 
It acts nontrivially only on the 3-rd and 4-th factor in $(|\phi\rangle\langle\phi|)^{\otimes 4}$,
\eqn\label{eq-B34-ex-1}
    B_{3,4}(|\phi\rangle\langle\phi|)^{\otimes 4} = (|\phi\rangle\langle\phi|)^{\otimes 2}\otimes\Theta_3 \,.
\eeqn
The kernel $\Theta_3$ is obtained from contracting a two particle density matrix to a one particle density matrix via 
the interaction operator $B_{1,2}$ (which acts on a two-particle kernel $f(x,y;x',y')$ by $(B_{1,2}f)(x,x')=f(x,x;x',x)-f(x,x';x',x')$),
\eqn\label{eq-ex-Theta3-def-1}
    \Theta_3(x,x')&:=&B_{1,2}\Big(\big( |\phi\rangle\langle \phi |\big)^{\otimes 2}\Big)(x,x')
    =\psidistex(x)\overline{\phi(x')}-\phi(x)\overline{\psidistex(x')}
    \nonumber\\
    &=:&\sum_{\beta_3=1}^2 c_{\beta_3}^3 \chi_{\beta_3}^3(x)\overline{\psi_{\beta_3}^3}(x')
\eeqn
where  
\eqn\label{eq-psir-def-1}
    \psidistex:= |\phi|^2\phi \,.
\eeqn
Here, we have $c_1^3=1$, $c_2^3=-1$, $\chi_1^3=\psidistex$, $\chi_2^3=\phi$, $\psi_1^3=\phi$, $\psi_2^3=\psidistex$.

\noindent
{\bf Main difficulty:} 
{\em 
The main difficulty  in estimating \eqref{eq-ex-bound-1} 
stems from the fact that the term $\psidistex=|\phi|^2\phi$ can only be controlled in $L^2$, where by 
Sobolev embedding, $\|\psidistex\|_{L^2}\leq C \|\phi\|_{H^1}^3$, which can then
be controlled by \eqref{eq-muH1bound-1}, see \eqref{eq-ex-laststep-1} below. 
Our objective thus is to apply the triangle inequality to the trace norm  inside  \eqref{eq-ex-bound-1},
and  to    recursively ``propagate" the resulting $L^2$ norm  
through all intermediate terms until we reach $\psidistex$, see \eqref{eq-ex-bound-2} below.
We remark that if $\|\psidistex\|_{H^1}$  could be controlled by $\|\phi\|_{H^1}$ (which is not the case), 
a straightforward application of the method of  \cite{KM} would suffice 
to carry out our analysis.
}

We now re-interpret $\psidistex$ in \eqref{eq-ex-Theta3-def-1} as a function 
that is {\em independent} of $\phi$, $\overline{\phi}$.
Only at the end of our analysis, we will substitute $\psidistex:= |\phi|^2\phi$.  
We call a factor $\chi_{\beta_\alpha}^\alpha$, $\psi_{\beta_\alpha}^\alpha$ in the sum \eqref{eq-Thetaalpha-ex-def-1} 
{\em distinguished} if it is a function of $\psidistex$.
In the first step, it is clear that for every $\beta_3$, only one out of the two factors $\chi_{\beta_3}^3$, $\psi_{\beta_3}^3$
in the sum \eqref{eq-ex-Theta3-def-1} is distinguished (and in fact equal to $\psidistex$). The property of being distinguished then propagates from there, i.e., in the next step the distinguished term is the one containing the distinguished term as a factor 
from the previous step, and so on.
\\

\noindent
$\bullet$ {\em The kernel $\Theta_2$:}
Next, we consider the terms contracted by $B_{2,3}$ in \eqref{eq-ex-bound-1},
\eqn
    B_{2,3}U^{(3)}(t_{2}-t_{3}) \Big((|\phi\rangle\langle\phi|)^{\otimes 2}\otimes\Theta_3\Big)
    = \big(U^{(1)}(t_2-t_3)|\phi\rangle\langle\phi|\big)
    \otimes \Theta_2\,,
\eeqn
using \eqref{eq-B34-ex-1}, 
which defines the kernel  
\eqn\label{eq-ex-Theta2-def-1}
    \Theta_2(x,x')&=&
    B_{1,2}\Big(\big(U^{(1)}(t_2-t_3)|\phi\rangle\langle\phi|\big)\otimes(U^{(1)}(t_2-t_3)\Theta_3)\Big)(x,x')
    \nonumber\\
    &=&(U_{2,3}\phi)(x)\overline{(U_{2,3}\phi)(x')}
    \sum_{\beta_3=1}^2 c_{\beta_3}^3 \Big[(U_{2,3}\chi_{\beta_3}^3)(x)\overline{(U_{2,3}\psi_{\beta_3}^3)}(x)
    \nonumber\\
    &&\hspace{2cm}-(U_{2,3}\chi_{\beta_3}^3)(x')\overline{(U_{2,3}\psi_{\beta_3}^3)}(x')\Big]
    \nonumber\\
    &=:&\sum_{\beta_2=1}^4 c_{\beta_2}^2 \chi_{\beta_2}^2(x)\overline{\psi_{\beta_2}^2}(x') \,,
\eeqn
where
\eqn
    U_{i,j}:= e^{i(t_i-t_j)\Delta} \,.
\eeqn 
Since  for every $\beta_3$, only one out of the two factors $\chi_{\beta_3}^3$, $\psi_{\beta_3}^3$ is distinguished,  
it follows from \eqref{eq-ex-Theta2-def-1}
that for every $\beta_2\in\{1,\dots,4\}$, only one out of the two factors $\chi_{\beta_2}^2$, $\psi_{\beta_2}^2$ is distinguished.
The coefficients $c_{\beta_2}^2$ again have values in $\{1,-1\}$.
\\

\noindent
$\bullet$ {\em The kernel $\Theta_1$:}
Finally, we consider the terms contracted by $B_{1,2}$ in \eqref{eq-ex-bound-1}, corresponding to 
\eqn\label{eq-Theta1-ex-def-1}
    \Theta_1(x,x')&=&
    B_{1,2}\Big(\big(U^{(1)}(t_1-t_2)U^{(1)}(t_2-t_3)|\phi\rangle\langle\phi|\big)\otimes(U^{(1)}(t_1-t_2)\Theta_2)\Big)(x,x')
    \nonumber\\
    &=&(U_{1,3}\phi)(x)\overline{(U_{1,3}\phi)(x')}
    \sum_{\beta_2=1}^4 c_{\beta_2}^2 \Big[(U_{1,2}\chi_{\beta_2}^2)(x)\overline{(U_{1,2}\psi_{\beta_2}^2)}(x)
    \nonumber\\
    &&\hspace{2cm}-(U_{1,2}\chi_{\beta_2}^2)(x')\overline{(U_{1,2}\psi_{\beta_2}^2)}(x')\Big]
    \nonumber\\
    &=:&\sum_{\beta_1=1}^8 c_{\beta_1}^1 \chi_{\beta_1}^1(x)\overline{\psi_{\beta_1}^1}(x') \,.
\eeqn
Again, since  for every $\beta_2$, only one out of the two functions $\chi_{\beta_2}^2$, $\psi_{\beta_2}^2$ is distinguished,
it follows that for every $\beta_1\in\{1,\dots,8\}$, only one out of the two functions $\chi_{\beta_1}^1$, $\psi_{\beta_1}^1$ 
is distinguished. 
The coefficients $c_{\beta_1}^1$ again have values in $\{1,-1\}$. 
 
\subsubsection{Recursive bounds}

We may now return to \eqref{eq-ex-bound-1}, and perform the following recursive bounds with respect to time 
integration.

\noindent
$\bullet$ \underline{\em Integral in $t_1$.}
Applying Cauchy-Schwarz with respect to the integral in $t_1$ and the triangle inequality for the trace norm, we obtain that 
\eqn\label{eq-ex-bound-2}
    \eqref{eq-ex-bound-1}
    &=&\int_{[0,T)^3}dt_1 dt_2 dt_3
    \int d{\mu}_{t_3}^{(i)}(\phi) \tr\Big(\,\Big| \, U^{(1)}(t-t_1)\Theta_1\, \Big|\,\Big)
    \nonumber\\ 
    &\leq&\sum_{\beta_1=1}^8 T^{1/2}\int_{[0,T)^2}  dt_2 dt_3
    \int d{\mu}_{t_3}^{(i)}(\phi) 
     \Big\| \,  \|  \chi_{\beta_1}^1 \|_{L^2_x} \| \psi_{\beta_1}^1 \|_{L^2_x} \, \Big\|_{L^2_{t_1\in[0,T)}} \,, \;\;\;\;\;\;\;
\eeqn
using that $|c_{\beta_1}^1|=1$.
From \eqref{eq-Theta1-ex-def-1}, we see that given $\beta_1\in\{1,\dots,8\}$, there exists $\beta_2$ such that
\eqn
    \chi_{\beta_1}^1(x)&=&  (U_{1,3}\phi)(x)
    \nonumber\\
   \psi_{\beta_1}^1(x)& =&  (U_{1,3}\phi)(x) \overline{(U_{1,2}\chi_{\beta_2}^2)}(x) 
   (U_{1,2}\psi_{\beta_2}^2)(x) 
\eeqn
(or with a cubic expressions for $\chi_{\beta_1}^1$ and a linear expression for $ \psi_{\beta_1}^1$).
Therefore, 
\eqn\label{eq-ex-bound-3}
    \lefteqn{
    \Big\| \,  \|  \chi_{\beta_1}^1 \|_{L^2_x} \| \psi_{\beta_1}^1 \|_{L^2_x} \, \Big\|_{L^2_{t_1\in[0,T)}}
    }
    \nonumber\\ 
    &=&
    \|\phi\|_{L^2_{x}}\Big\| \, (U_{1,3}\phi)(x) \overline{ (U_{1,2}\chi_{\beta_2}^2)}(x)
    (U_{1,2}\psi_{\beta_2}^2)(x) \, \Big\|_{L^2_{t_1\in[0,T)}L^2_{x}}\,,
\eeqn
using that $U_{1,3}$ is unitary, and that $\phi$ does not depend on $t_1$.

Next, we observe that 
\eqn\label{eq-basicStrichartz-1}
    \lefteqn{
    \Big\|(e^{it\Delta}f_1)(x)\overline{(e^{it\Delta}f_2)(x)}(e^{it\Delta}f_3)(x) \Big\|_{L^2_{t}(\R)L^2_{x}(\R^3)}
    }
    \nonumber\\
    &\leq&
    \|e^{it\Delta}f_1\|_{L^\infty_t L^6_x}\|e^{it\Delta}f_2\|_{L^\infty_t L^6_x} 
    \|e^{it\Delta}f_3\|_{L^2_t L^6_x}
    \nonumber\\
    &\leq&
    C \|f_1\|_{H^1_x}\|f_2\|_{ H^1_x}   \|f_3\|_{L^2_x}
\eeqn    
using the H\"older inequality, the Sobolev inequality, and the Strichartz estimate
 $\|e^{it\Delta}f\|_{L^2_tL^6_x}\leq C \|f\|_{L^2}$  for
the free Schr\"odinger evolution. 
We make the important observation that in \eqref{eq-basicStrichartz-1},
we can place the $L^2_x$-norm 
on any of the three functions $f_j$, $j=1,2,3$, and not only on $f_3$.
Similarly,  if a derivative is included, 
\eqn\label{eq-basicStrichartz-1-2-1}
    \lefteqn{
    \Big\| \nabla_{x}( \, (e^{it\Delta}f_1)(x)  \overline{(e^{it\Delta}f_2)}(x)
   (e^{it\Delta}f_3)(x) \, ) \, \Big\|_{L^2_{t}(\R)L^2_{x}(\R^3)}
    }
    \nonumber\\
    &\leq& \sum_{j=1}^3
    \|e^{it\Delta}\nabla_x f_j\|_{L^2_t L^6_{x}}\prod_{1\leq i \leq 3\atop i\neq j}\|e^{it\Delta}f_i\|_{L^\infty_t L^6_{x}}  
    \nonumber\\
    &\leq&C\,
    \|f_1\|_{H^1_x}\|f_2\|_{H^1_{x}} \|f_3\|_{H^1_{x}} \,,
\eeqn  
which, together with  \eqref{eq-basicStrichartz-1}, implies that
\eqn\label{eq-basicStrichartz-1-2}
    \Big\|   (e^{it\Delta}f_1)(x)  \overline{(e^{it\Delta}f_2)}(x)
    (e^{it\Delta}f_3)(x) \, ) \, \Big\|_{L^2_{t}(\R)H^1_{x}(\R^3)}\leq C
    \prod_{j=1}^3\|f_j\|_{H^1_x}  \,.
\eeqn 
 
Only one of the factors $\chi_{\beta_2}^2$, $\psi_{\beta_2}^2$ is distinguished, say for instance $\psi_{\beta_2}^2$. 
We then use \eqref{eq-basicStrichartz-1} in such a way that the $L^2_x$-norm is applied to this term. All terms in
\eqref{eq-ex-bound-2} can be treated in the same manner, thus obtaining
\eqn\label{eq-ex-bound-1-2} 
    \eqref{eq-ex-bound-1}\leq C T^{1/2}
    \sum_{\beta_1=1}^8\int_{[0,T)^2}  dt_2 dt_3
    \int d{\mu}_{t_3}^{(i)}(\phi)  \|\phi\|_{H^1_x}^2\|\chi_{\beta_2}^2\|_{ H^1_x}  \|\psi_{\beta_2}^2\|_{L^2_x}\,,
\eeqn
where the indices $\beta_2$ depend on $\beta_1$.
Next, we use the defining relation \eqref{eq-ex-Theta2-def-1} for the functions $\chi_{\beta_2}^2$, $\psi_{\beta_2}^2$,
and consider the integral in $t_2$.

\noindent
$\bullet$ \underline{\em Integral in $t_2$.}
By assumption, the factor $\psi_{\beta_2}^2$ is distinguished, while $\chi_{\beta_2}^2$ is not.
Moreover, one of the functions $\chi_{\beta_2}^2$, $\psi_{\beta_2}^2$ is a linear, while the other one
is a cubic expression in the functions after the second equality sign in \eqref{eq-ex-Theta2-def-1}
(the distinguished factor could be either). Our goal is to bound the distinguished factor in $L^2$. 
From   comparing terms in  \eqref{eq-ex-Theta2-def-1}, one possible combination is
\eqn
  \chi_{\beta_2}^2(x)=(U_{2,3}\phi)(x) 
  \;\;\;,\;\;\;
  \psi_{\beta_2}^2(x) = 
  (U_{2,3}\phi)(x) \overline{(U_{2,3}\chi_{\beta_3}^3)}(x) (U_{2,3}\psi_{\beta_3}^3)(x)
  \,,
\eeqn
that is, the distinguished factor $\psi_{\beta_2}^2$ is a cubic expression.
We apply Cauchy-Schwarz in the $t_2$-integral in such a way that the $L^2_{t_2}$-norm falls on 
the cubic term.
(If, on the other hand, $\chi_{\beta_2}^2$ is the cubic term, we use Cauchy-Schwarz in $t_2$ to get
$\|\psi_{\beta_2}^2\|_{ L^\infty_{t_2\in[0,T)}L^2_x}$ and 
$\|\chi_{\beta_2}^2\|_{ L^2_{t_2\in[0,T)}H^1_x}\leq C \|\phi\|_{H^1_x}^3$ from  \eqref{eq-basicStrichartz-1-2}.)
We then get
\eqn 
    \eqref{eq-ex-bound-1-2} 
    &\leq& C T\sum_{\beta_1=1}^8
    \int_{[0,T)}  dt_3
    \int d{\mu}_{t_3}^{(i)}(\phi)  \|\phi\|_{H^1_x}^2
    \|\chi_{\beta_2}^2\|_{ L^\infty_{t_2\in[0,T)}H^1_x} \|\psi_{\beta_2}^2\|_{L^2_{t_2\in[0,T)}L^2_x}
    \nonumber\\
    &=&
    C T\sum_{\beta_1=1}^8
    \int_{[0,T)}  dt_3
    \int d{\mu}_{t_3}^{(i)}(\phi)  \|\phi\|_{H^1_x}^3
    \nonumber\\
    &&\hspace{1cm}
     \|(U_{2,3}\phi)(x)\overline{(U_{2,3}\chi_{\beta_3}^3)}(x) (U_{2,3}\psi_{\beta_3}^3)(x)\|_{L^2_{t_2\in[0,T)}L^2_x} \,,
\eeqn
where only one of the three factors inside the norm on the last line is distinguished.
We may assume it is $\psi_{\beta_3}^3$.
By comparing terms in \eqref{eq-ex-Theta3-def-1}, we then find that $\psi_{\beta_3}^3=\psidistex$, and  
$\chi_{\beta_3}^3=\phi$.
We then apply  \eqref{eq-basicStrichartz-1} again, and use the $L^2_x$-bound for $\psi_{\beta_3}^3=\psidistex$.
At this point, we substitute $\psidistex=|\phi|^2\phi$.

\noindent
$\bullet$ \underline{\em Using de Finetti for the last step.}
Subsequently, we obtain 
\eqn\label{eq-ex-laststep-1}
    \eqref{eq-ex-bound-1} &\leq&   C T \sum_{\beta_1=1}^8
    \int_{[0,T)}  dt_3
    \int d{\mu}_{t_3}^{(i)}(\phi)  \|\phi\|_{H^1}^5
     \| \psidistex\|_{ L^2_x} 
     \nonumber\\
     &\leq& 8 C T^2
    \sup_{t_3\in[0,T)}
    \int d{\mu}_{t_3}^{(i)}(\phi)  \|\phi\|_{H^1}^8
     \nonumber\\
     &\leq& 8 C T^2 M^4 \,,
\eeqn
where we used $\| \psidistex\|_{ L^2_x} \leq C\|\phi\|_{H^1}^3$ from Sobolev embedding, and  
the bound
\eqref{eq-muH1bound-1} related to the de Finetti theorem, which is uniform in $t_3$. 
This is the desired estimate in our example calculation.

The strategy presented in this example can be applied in the general case.

\centerline{\epsffile{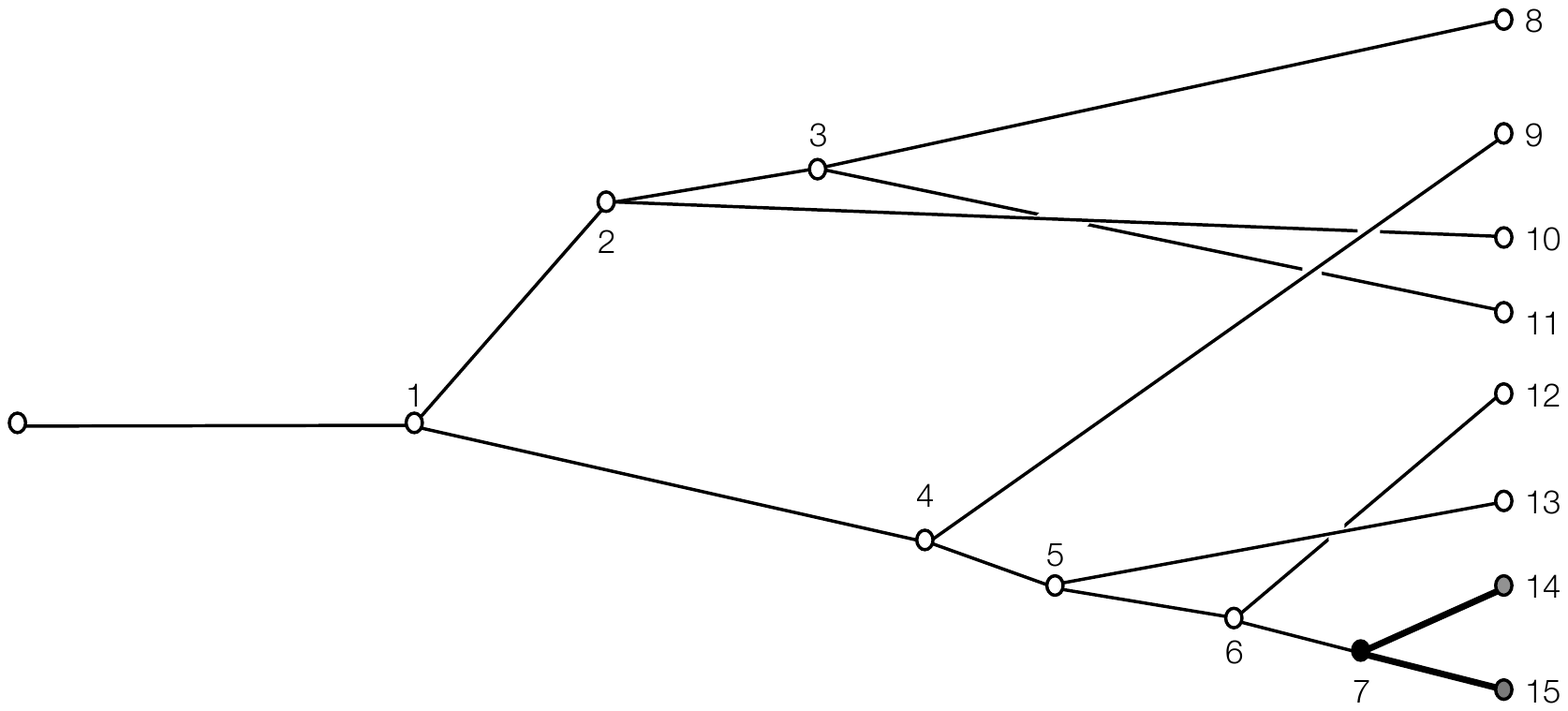} }

\noindent
Figure 2. An example of a distinguished tree $\tau_j$ with $m_j=7$. The number next to an internal vertex corresponds to its
label $\alpha\in\{1,\dots,7\}$, ordered increasingly from left to right. The vertex with label $\alpha$ corresponds to 
$B_{\sigma_j(\alpha+1),\alpha+1}$ (note the shift by 1 in the index). The time variable attached to it is $t_{\ell_{j,\alpha}}$.  
The leftmost vertex is the root vertex,
while the rightmost vertices are leaf vertices labeled with $\alpha\in\{8,15\}$,
corresponding to $u_{j,\alpha-7}$. The leaf vertices $8$ to $13$ are regular, while the 
leaf vertices $14,15$ are distinguished. 
The distinguished internal vertex has label $7$. Here, for example, $\kappa_-(1)=2$, $\kappa_+(1)=4$,
$\kappa_+(\kappa_+(\kappa_+(\kappa_+(1))))=7$, $\kappa_-(2)=3$.
This tree corresponds to the term 
\eqn
     U^{(1)}_{0,1}B_{1,2}U^{(2)}_{1,2}B_{1,3}U^{(3)}_{2,3}B_{1,4}U^{(4)}_{3,4}
     B_{2,5}U^{(5)}_{4,5}B_{5,6}U^{(6)}_{5,6}B_{5,7}U^{(7)}_{6,7}B_{7,8}(|\phi\rangle\langle\phi|)^{\otimes 8}
\eeqn
 (in upper echelon form) where $U^{(j)}_{i,j}=U^{(j)}(t_{i}-t_j)$ and $t_0=t$.

\subsection{Recursive definition of kernels at vertices}

As in the above example, we recursively assign a kernel $\Theta_\alpha$
to each vertex $\alpha$.
As the root of this induction, we associate the kernel 
\eqn\label{eq-Thetauij-def-1}
    \Theta_{\alpha}(x;x') := \phi(x)\overline{\phi(x')} 
\eeqn
to the leaf vertex with label $\alpha\in\{m_j+1,\dots,2m_j+2\}$ (corresponding to $u_{j,\alpha-m_j}$). 

In the first recursion step, we determine $\Theta_{m_j}$ at the distinguished vertex $\alpha=m_j$ from the term on
the last line of  \eqref{eq-Ijdisting-def-1}, given by
\eqn  
    B_{\sigma_j(m_j+1),m_j+1}\big( |\phi\rangle\langle \phi |\big)^{\otimes (\Lj)} 
    &=& \big( |\phi\rangle\langle \phi |\big)^{\otimes (\sigma(\Lj)-1)}
    \otimes \Theta_{m_j} 
    \nonumber\\
    &&\hspace{1cm}
    \otimes\big( |\phi\rangle\langle \phi |\big)^{\otimes (\Lj-\sigma(\Lj)-1)}
\eeqn
where 
\eqn\label{eq-Thetadist-def-1} 
    \Theta_{m_j}(x;x') := 
    \psidist(x)\overline{ \phi(x')} - \phi(x)\overline{\psidist(x')} 
\eeqn
with $\psidist$ as in \eqref{eq-psir-def-1}.  It is obtained from contracting the two
copies of $|\phi\rangle\langle\phi|$ at the two leaf vertices $\kappa_-(m_j)$, $\kappa_+(m_j)$
which have $m_j$ as their parent vertex. 

For the induction step, we
let $\alpha\in\{1,\dots,m_j-1\}$ label a regular internal vertex, and assume that the kernels $\Theta_{\alpha'}$
have been determined for all $\alpha'>\alpha$.  
Let $\kappa_-(\alpha)$, $\kappa_+(\alpha)$ label the two 
child vertices  (of internal or leaf type)  of $\alpha$,
\eqn 
    \sigma_j(\alpha)=\sigma_j(\kappa_-(\alpha)) 
    \;\;\;,\;\;\;
    \alpha=\sigma_j(\kappa_+(\alpha)) \,.
\eeqn
Then, by induction assumption, $\Theta_{\kappa_-(\alpha)}$, $\Theta_{\kappa_+(\alpha)}$ are given, and we define
\eqn\label{eq-Theta-def-1}
    \lefteqn{
    \Theta_{\alpha}(x;x') 
    }
    \nonumber\\
    &=& B_{1,2}\Big( \,
    \Big( \, U^{(1)}(t_{\alpha}-t_{\kappa_-(\alpha)}) \Theta_{\kappa_-(\alpha)} \, \Big)        
    \otimes \Big( \, U^{(1)}(t_{\alpha}-t_{\kappa_+(\alpha)})\Theta_{\kappa_+(\alpha)} \, \Big)\,\Big)(x;x')
    \nonumber\\
    &=&\Big( \, U^{(1)}(t_{\alpha}-t_{\kappa_-(\alpha)}) \Theta_{\kappa_-(\alpha)} \, \Big) (x;x')
    \Big[
    \Big( \, U^{(1)}(t_{\alpha}-t_{\kappa_+(\alpha)})\Theta_{\kappa_+(\alpha)} \, \Big)(x;x)
    \nonumber\\
    &&\hspace{3cm}
    -\Big( \, U^{(1)}(t_{\alpha}-t_{\kappa_+(\alpha)})\Theta_{\kappa_+(\alpha)} \, \Big)(x';x') 
    \Big] \,.
\eeqn 
Clearly, if $\kappa_\pm(\alpha)$ corresponds to a regular leaf vertex, then 
\eqn\label{eq-Thetaleaf-def-1}
    \Theta_{\kappa_\pm(\alpha)}(x;x') = \phi(x)\overline{\phi(x')}\,,
\eeqn
and $t_{\ell_{j,\kappa_\pm(\alpha)}}=t_{r}$. If  $\kappa_\pm(\alpha)=m_j$  is the
distinguished vertex,  we use  \eqref{eq-Thetadist-def-1}.

We iterate this procedure until we obtain the kernel $\Theta_1$ at $\alpha=1$, which is the unique child vertex of the
root vertex.

\subsection{Factorization structure of kernels}

We will now determine the structure of $\Theta_\alpha$.

\begin{lemma}
Let $\alpha\in\{1,\dots,m_j\}$. Then, every
kernel $\Theta_\alpha$ can  be written as a sum of differences of factorized kernels,
\eqn\label{eq-Theta-def-2}
    \Theta_\alpha(x;x') 
    &=&
    \sum_{\beta_\alpha }c_{\beta_\alpha }^\alpha  
    \chi_{\beta_\alpha}^\alpha(x) \overline{\psi_{\beta_\alpha}^\alpha(x')}
\eeqn
with at most $2^{m_j-\alpha}$ nonzero coefficients $c^\alpha_{\beta_\alpha}\in\{1,-1\}$.
\end{lemma}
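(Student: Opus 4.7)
The plan is a downward induction on $\alpha$, from the distinguished vertex $\alpha = m_j$ down to $\alpha = 1$, based on the recursive definition \eqref{eq-Theta-def-1} and the closed-form expression \eqref{eq-Thetadist-def-1}. The base case is immediate: \eqref{eq-Thetadist-def-1} already displays $\Theta_{m_j}$ as a sum of two rank-one kernels $\psidist(x)\overline{\phi(x')} - \phi(x)\overline{\psidist(x')}$ with coefficients in $\{+1,-1\}$, matching the form \eqref{eq-Theta-def-2}.

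For the inductive step, fix $\alpha < m_j$ and assume the claim for every internal descendant of $\alpha$. Each of the two children $\kappa_\pm(\alpha)$ is either internal (in which case the inductive hypothesis supplies a factorized expansion $\Theta_{\kappa_\pm(\alpha)} = \sum_{\beta_\pm} c^{\pm}_{\beta_\pm} \chi^{\pm}_{\beta_\pm}(x) \overline{\psi^{\pm}_{\beta_\pm}(x')}$) or a regular leaf (for which \eqref{eq-Thetaleaf-def-1} gives the trivially factorized $\phi(x)\overline{\phi(x')}$, i.e.\ a single term with coefficient $+1$). The crucial observation is that the free one-particle propagator preserves the factorized form, $U^{(1)}(s)[\chi(x)\overline{\psi(x')}] = (e^{is\Delta}\chi)(x)\overline{(e^{is\Delta}\psi)(x')}$, so inserting the two expansions into \eqref{eq-Theta-def-1} and expanding the $B_{1,2}$ contraction via $B_{1,2}(f\otimes g)(x;x') = f(x;x')[g(x;x) - g(x';x')]$ yields a double sum over $(\beta_-, \beta_+)$ in which each pair contributes exactly two signed summands, each of the required rank-one form.

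The coefficient bound then follows from the multiplicative recursion $N(\alpha) \leq 2\, N(\kappa_-(\alpha))\, N(\kappa_+(\alpha))$ with $N = 1$ at each regular leaf: unraveling through the subtree rooted at $\alpha$, every internal descendant contributes a single factor of two, and all coefficients remain in $\{+1,-1\}$ because they are products of the inherited signs and the sign from the $B_{1,2}$ expansion. The only place where some care is required is the complex-conjugation bookkeeping in the ``minus'' summand: the factor $\overline{(U\psi^-_{\beta_-})(x')}\,(U\chi^+_{\beta_+})(x')\,\overline{(U\psi^+_{\beta_+})(x')}$ has to be rearranged as $\overline{\psi(x')}$ with $\psi(x') := (U\psi^-_{\beta_-})(x')\,\overline{(U\chi^+_{\beta_+})(x')}\,(U\psi^+_{\beta_+})(x')$ so as to match the canonical form $\chi(x)\overline{\psi(x')}$. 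This is routine algebra and is the only nontrivial bookkeeping step; the power-of-two count is otherwise a transparent consequence of the recursion.
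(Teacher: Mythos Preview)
Your proof is correct and follows essentially the same downward induction as the paper: both start from the explicit rank-two form \eqref{eq-Thetadist-def-1} at the distinguished vertex (together with the trivially factorized leaves \eqref{eq-Thetaleaf-def-1}), and both propagate the factorized structure through the recursion \eqref{eq-Theta-def-1} by expanding $B_{1,2}$ into its two signed contractions. Your write-up is in fact slightly more explicit than the paper's, which does not spell out the multiplicative counting recursion $N(\alpha)\le 2\,N(\kappa_-(\alpha))\,N(\kappa_+(\alpha))$ or the conjugation bookkeeping you flag; the paper simply displays the expansion \eqref{eq-Thetaexpansion-1} and declares it to be of the desired form.
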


\begin{proof} 
The kernels at the leaf vertices \eqref{eq-Thetauij-def-1} have the form \eqref{eq-Theta-def-2}.
If $\alpha=m_j$ is the distinguished vertex, $\Theta_{m_j}$ is given by \eqref{eq-Thetadist-def-1}, and evidently
has the form \eqref{eq-Theta-def-2}.
For the induction step, let us assume that given $\alpha\in\{1,\dots,m_j\}$,
the kernels $\Theta_{\alpha'}$ have the form \eqref{eq-Theta-def-2} for all $\alpha'>\alpha$, thus in particular
$\Theta_{\kappa_+(\alpha)}$, $\Theta_{\kappa_-(\alpha)}$ have this form. Then, from \eqref{eq-Theta-def-1}, we find that 
\eqn\label{eq-Thetaexpansion-1}
    \Theta_\alpha(x;x') 
    &=&\sum_{\beta_{\kappa_-(\alpha)},\beta_{\kappa_+(\alpha)}}
    c_{\beta_{\kappa_-(\alpha)}}^{\kappa_-(\alpha)}c_{\beta_{\kappa_+(\alpha)}}^{\kappa_+(\alpha)}
    (U_{\alpha;\kappa_-(\alpha)} \chi_{\beta_{\kappa_-(\alpha)}}^{\kappa_-(\alpha)})(x)
    \overline{(U_{\alpha;\kappa_-(\alpha)} \psi_{\beta_{\kappa_-(\alpha)}}^{\kappa_-(\alpha)})(x')}
    \nonumber\\
    &&\hspace{1cm}
    \Big[
    (U_{\alpha;\kappa_+(\alpha)} \chi_{\beta_{\kappa_+(\alpha)}}^{\kappa_+(\alpha)})(x)
    \overline{(U_{\alpha;\kappa_+(\alpha)} \psi_{\beta_{\kappa_+(\alpha)}}^{\kappa_+(\alpha)})(x)}
    \nonumber\\
    &&\hspace{2cm}
    -
    (U_{\alpha;\kappa_+(\alpha)} \chi_{\beta_{\kappa_+(\alpha)}}^{\kappa_+(\alpha)})(x')
    \overline{(U_{\alpha;\kappa_+(\alpha)} \psi_{\beta_{\kappa_+(\alpha)}}^{\kappa_+(\alpha)})(x')} \,
    \Big] \,,
\eeqn
where for brevity, we write 
\eqn 
    U_{\alpha;\alpha'} := 
    e^{i(t_{\ell_{j,\alpha}}-t_{\ell_{j,\alpha'}}) \Delta_x } 
     \;\;\;,\;\;\alpha,\alpha'\in\{0,1,\dots,m_j\}\,,
\eeqn 
(with $t_{0}:=t$) for free one-particle propagators.
We write the sum on the right hand side in an arbitrary but fixed order, and use the individual terms 
as definitions for the terms $c_{\beta_\alpha }^\alpha $,
$\chi_{\beta_\alpha}^\alpha$, $\psi_{\beta_\alpha}^\alpha$ in
\eqn\label{eq-Thetaalpha-prod-1}
    \Theta_\alpha(x;x') =
    \sum_{\beta_\alpha }c_{\beta_\alpha }^\alpha  
    \chi_{\beta_\alpha}^\alpha(x) \overline{\psi_{\beta_\alpha}^\alpha(x')} \,,
\eeqn
which is of the form \eqref{eq-Theta-def-2}.
This iteration terminates when we reach $\alpha=1$. 
\end{proof}

In particular, we have that
\eqn\label{eq-Ifirststep-1}
    \opJ_j^1(\sigma_j;t,t_{\ell_{j,1}},\dots,t_{\ell_{j,m_j}}) =
    U^{(1)}(t-t_{\ell_{j,1}})
    \Theta_1 \,.
\eeqn
For convenience, we will notationally suppress the dependence of the functions $ \chi_{\beta_\alpha}^\alpha$, 
$\psi_{\beta_\alpha}^\alpha$
on the time variables $t_{\ell_{\alpha'}}$, but we note that they do not depend on any  $t_{\ell_{\alpha'}}$ with $\alpha'<\alpha$.

\begin{definition}\label{def-distfact-1}
From here on, we re-interpret $\psidist$ in $\Theta_{m_j}$ (see   \eqref{eq-Thetadist-def-1}) as a function  
which is {\em independent} of $\phi$, $\overline{\phi}$.
Only at the end of our analysis, we will substitute $\psidist:=|\phi|^2\phi$. 
For each $\alpha=1,\dots,m_j$, we call a factor 
$\chi_{\beta_\alpha}^\alpha$,  $\psi_{\beta_\alpha}^\alpha$ in the expansion \eqref{eq-Theta-def-2} {\em distinguished} if it 
is a function of $\psidist$.
\end{definition}

Next, we derive recursive bounds on the functions $\chi_{\beta_\alpha}^\alpha$, $\psi_{\beta_\alpha}^\alpha$.

\subsection{Key properties of the kernels $\Theta_\alpha$}

We make the following key observations which will be crucial for the next steps of our proof:
\begin{itemize}
\item
The only dependence of $\Theta_\alpha$ on the time variable $t_{\ell_{j,\alpha}}$  is via the propagators 
\eqn 
    U_{\alpha;\kappa_\pm(\alpha)} = e^{i(t_{\ell_{j,\alpha}}-t_{\ell_{j,\kappa_\pm(\alpha)}}  )\Delta}
\eeqn
appearing on the right hand side of \eqref{eq-Thetaexpansion-1}. The kernels $\Theta_{\kappa_\pm(\alpha)}$ at 
the two child vertices $\kappa_\pm(\alpha)$ of $\alpha$ do not depend on  $t_{\ell_{j,\alpha}}$.
This will be crucial for the application of Strichartz estimates below.
\item
The product $\chi_{\beta_\alpha}^\alpha(x) \overline{\psi_{\beta_\alpha}^\alpha(x')} $ in \eqref{eq-Thetaalpha-prod-1}
either has the form
\begin{align}\label{eq-chipsi-prod-1}
    &\chi^\alpha_{\beta_{\alpha}}(x) \overline{\psi^\alpha_{\beta_{\alpha}} }(x')
    = (U_{\alpha;\kappa_-(\alpha)}\chi_{\beta_{\kappa_-(\alpha)}}^{\kappa_-(\alpha)})(x)
    \overline{ (U_{\alpha;\kappa_-(\alpha)}\psi_{\beta_{\kappa_-(\alpha)}}^{\kappa_-(\alpha)}) }(x')
    \nonumber\\
    &\hspace{3cm}
    (U_{\alpha;\kappa_+(\alpha)}\chi_{\beta_{\kappa_+(\alpha)}}^{\kappa_+(\alpha)})(x)
    \overline{ (U_{\alpha;\kappa_+(\alpha)} \psi_{\beta_{\kappa_+(\alpha)}}^{\kappa_+(\alpha)} ) }(x)
\end{align}
or
\begin{align}\label{eq-chipsi-prod-2}
    &\chi^\alpha_{\beta_{\alpha}}(x) \overline{\psi^\alpha_{\beta_{\alpha}} }(x')
    = (U_{\alpha;\kappa_-(\alpha)}\chi_{\beta_{\kappa_-(\alpha)}}^{\kappa_-(\alpha)})(x)
    \overline{ (U_{\alpha;\kappa_-(\alpha)}\psi_{\beta_{\kappa_-(\alpha)}}^{\kappa_-(\alpha)} ) }(x')
    \nonumber\\
    &\hspace{3cm}
    (U_{\alpha;\kappa_+(\alpha)}\chi_{\beta_{\kappa_+(\alpha)}}^{\kappa_+(\alpha)} )(x')
    \overline{ (U_{\alpha;\kappa_+(\alpha)}\psi_{\beta_{\kappa_+(\alpha)}}^{\kappa_+(\alpha)} ) }(x') \,,
\end{align}
for some values of $\beta_{\kappa_-(\alpha)}$, $\beta_{\kappa_+(\alpha)}$ that depend on $\beta_\alpha$.
Comparing the left and right hand sides, the function $\chi_{\beta_\alpha}^\alpha$ either  
has the cubic form
\eqn\label{eq-chi-kappa2-def-2}
    \lefteqn{
    \chi_{\beta_\alpha}^\alpha(x)  
    = 
    (U_{\alpha;\kappa_-(\alpha)} \chi_{\beta_{\kappa_-(\alpha)}}^{\kappa_-(\alpha)})(x) 
    }
    \nonumber\\
    &&\hspace{2cm}
    (U_{\alpha;\kappa_+(\alpha)} \chi_{\beta_{\kappa_+(\alpha)}}^{\kappa_+(\alpha)})(x)
    \overline{(U_{\alpha;\kappa_+(\alpha)} \psi_{\beta_{\kappa_+(\alpha)}}^{\kappa_+(\alpha)})(x)}\, ,
\eeqn
or the linear form
\eqn\label{eq-chi-kappa2-def-1}
    \chi_{\beta_\alpha}^\alpha(x)  
    = (U_{\alpha;\kappa_-(\alpha)} \chi_{\beta_{\kappa_-(\alpha)}}^{\kappa_-(\alpha)})(x) \,.
\eeqn
Accordingly, $\psi_{\beta_\alpha}^\alpha$  respectively has either linear or cubic form.
It is important that the product $\chi_{\beta_\alpha}^\alpha\overline{\psi_{\beta_\alpha}^\alpha}$
is always of quartic form \eqref{eq-chipsi-prod-1} or \eqref{eq-chipsi-prod-2}.
This fact will again be crucial for the application of Strichartz estimates below.
\item
In the product on the right hand side of \eqref{eq-chipsi-prod-1}, respectively \eqref{eq-chipsi-prod-2},
at most one of the four factors is distinguished (see Definition \eqref{def-distfact-1}).
This follows straightforwardly from  an induction along decreasing values of
$\alpha$, using the fact that the statement is true for all regular leaf vertices, and for the
distinguished vertex \eqref{eq-Thetadist-def-1}.
\end{itemize}

We may therefore make the following assumption, which leads to notational simplifications, but to
no loss of generality.

\begin{hypothesis}\label{hyp-L2assump-1}
In all that follows, we   assume for notational convenience that only the 
functions $\psi_{\beta_1}^{1}$, and recursively, 
$(\psi_{\beta_{\kappa_+^q(1) }}^{\kappa_+^q(1) })_{q=1}^{Q} $,
are distinguished (i.e., are a function of $\psidist$ in  \eqref{eq-Thetadist-def-1}).
Here,
\eqn
     \kappa_+^q(1):=\underbrace{\kappa_+(\kappa_+(\cdots(\kappa_+}_{q\;times}(1))\cdots)) \,,
\eeqn
is the $q$-th iterate of the index $\alpha=1$ under $\kappa_+$, and $Q$ is the number of edges
linking $\alpha=1$ to the distinguished vertex with label $\alpha=m_j$.
This is one special case, but all cases can be treated in the same way.
\end{hypothesis}

\section{Recursive  $L^2$- and $H^1$-bounds for the distinguished tree}
\label{sec-RecBounds-1}

From here on, we abbreviate the notation by writing $t_\alpha$ for $t_{\ell_{j,\alpha}}$,
and by referring to the vertex $v_{j,\alpha}$ by its label $\alpha$.
Also, we will say that the time variable $t_\alpha$ is attached to the vertex $\alpha$.

We have that
\begin{align}\label{eq-IjInt-def-1}
     &\int_{[0,T)^{m_j-1}}dt_1\cdots dt_{m_j-1}
     \tr\Big(\,\Big| \, \opJ_j^1(\sigma_j;t,t_{1},\dots,t_{m_j}) \, \Big| \, \Big) 
      \nonumber\\
    &\hspace{1cm}=
    \int_{[0,T)^{m_j-1}}dt_1\cdots dt_{m_j-1} 
     \tr\big( \, \big| \, U^{(1)}(t-t_1) \Theta_1 \, \big| \, \big) 
     \nonumber\\ 
    &\hspace{1cm}\leq \sum_{\beta_{1}} 
    \int_{[0,T)^{m_j-1}}dt_1\cdots dt_{m_j-1} 
     \|\psi_{\beta_{1}}^1\|_{L^2}
    \|\chi_{\beta_{1}}^1\|_{L^2}\,.
\end{align}
We will estimate the last term on the right hand side
based on the recursion formula \eqref{eq-Thetaexpansion-1}, using recursive bounds
adapted to a hierarchy of subtrees of $\tau_j$.

Our main goal is to propagate the $L^2$-norm in \eqref{eq-IjInt-def-1} along edges of $\tau_j$ that connect the 
vertex $\alpha=1$ 
to the distinguished vertex $\alpha=m_j$, 
in order to obtain a bound 
\eqn\label{eq-psimj-bound-1}
    \|\psidist\|_{L^2}=\|\,|\phi|^2\phi\|_{L^2}\leq C \|\phi\|_{H^1}^3
\eeqn 
which can be controlled
by the growth condition \eqref{eq-muH1bound-1}.

\subsection{Recursive bounds}

We let $\tau_{j,\alpha}$ denote the subtree of $\tau_j$ with root at 
the vertex labeled by $\alpha$. 
Moreover, we denote by
\eqn
    \int\Big[\prod_{\alpha'\in \tau_{j,\alpha}} dt_{\alpha'}\Big] \equiv
    \int_{[0,T)^{d_\alpha}}\Big[\prod_{\alpha'\in \tau_{j,\alpha}} dt_{\alpha'}\Big]
\eeqn 
integration with respect
to all time variables attached to the internal and root vertices of the subtree $\tau_{j,\alpha}$ with root at $\alpha$.
Here, $d_\alpha$ denotes the total number of internal and root vertices of $\tau_{j,\alpha}$.

\begin{lemma}\label{lm-iterationH1L2-bd-1}
Let $\kappa_-(\alpha)$ and $\kappa_+(\alpha)$ label the two child vertices of 
the vertex labeled by $\alpha$. Assume that either \eqref{eq-chipsi-prod-1} or \eqref{eq-chipsi-prod-2} is given.
Then, the following recursive bounds hold:
\begin{itemize}

\item
Bound on $L^2$-level
\begin{align}\label{eq-L2levelbd-1}
    &\int \Big[\prod_{\alpha'\in \tau_{j,\alpha}} dt_{\alpha'}\Big]\|\psi_{\beta_{\alpha}}^\alpha\|_{L^2}
    \|\chi_{\beta_{\alpha}}^\alpha\|_{H^1}
    \leq C T^{\frac12}
    \int \Big[\prod_{\alpha'\in \tau_{j,\kappa_-(\alpha)}} dt_{\alpha'}\Big] 
    \|\psi_{\beta_{\kappa_-(\alpha)}}^{\kappa_-(\alpha)}\|_{H^1}
    \|\chi_{\beta_{\kappa_-(\alpha)}}^{\kappa_-(\alpha)}\|_{H^1} 
    \nonumber\\
    &\hspace{2cm}\cdot
    \int \Big[\prod_{\alpha'\in \tau_{j,\kappa_+(\alpha)}} dt_{\alpha'}\Big]
    \|\psi_{\beta_{\kappa_+(\alpha)}}^{\kappa_+(\alpha)}\|_{L^2}
    \|\chi_{\beta_{\kappa_+(\alpha)}}^{\kappa_+(\alpha)}\|_{H^1} 
\end{align}
 
\item
Bound on $H^1$-level
\begin{align}\label{eq-H1levelbd-1}
    &\int \Big[\prod_{\alpha'\in \tau_{j,\alpha}} dt_{\alpha'}\Big]\|\psi_{\beta_{\alpha}}^\alpha\|_{H^1}
    \|\chi_{\beta_{\alpha}}^\alpha\|_{H^1}
    \leq  C T^{\frac12}
    \int \Big[\prod_{\alpha'\in \tau_{j,\kappa_-(\alpha)}} dt_{\alpha'}\Big] 
    \|\psi_{\beta_{\kappa_-(\alpha)}}^{\kappa_-(\alpha)}\|_{H^1}
    \|\chi_{\beta_{\kappa_-(\alpha)}}^{\kappa_-(\alpha)}\|_{H^1} 
    \nonumber\\
    &\hspace{2cm}\cdot
    \int \Big[\prod_{\alpha'\in \tau_{j,\kappa_+(\alpha)}} dt_{\alpha'}\Big]
    \|\psi_{\beta_{\kappa_+(\alpha)}}^{\kappa_+(\alpha)}\|_{H^1}
    \|\chi_{\beta_{\kappa_+(\alpha)}}^{\kappa_+(\alpha)}\|_{H^1}
\end{align}

\end{itemize}
\end{lemma}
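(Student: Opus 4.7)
The plan rests on two structural observations about the kernels constructed above. First, the time variable $t_\alpha$ attached to vertex $\alpha$ enters $\chi^\alpha_{\beta_\alpha}$ and $\psi^\alpha_{\beta_\alpha}$ only through the pair of free propagators $U_{\alpha;\kappa_\pm(\alpha)}=e^{i(t_\alpha-t_{\kappa_\pm(\alpha)})\Delta}$, while the child kernels $\chi^{\kappa_\pm(\alpha)},\psi^{\kappa_\pm(\alpha)}$ are themselves $t_\alpha$-independent and depend only on times attached to vertices of the subtrees $\tau_{j,\kappa_\pm(\alpha)}$. Second, the quartic product $\chi^\alpha_{\beta_\alpha}(x)\overline{\psi^\alpha_{\beta_\alpha}(x')}$ involves only two spatial positions --- either the pattern $(x,x',x,x)$ or $(x,x',x',x')$ as in \eqref{eq-chipsi-prod-1}--\eqref{eq-chipsi-prod-2} --- so separately $\chi^\alpha_{\beta_\alpha}$ and $\psi^\alpha_{\beta_\alpha}$ are each either a linear or a cubic combination of propagated child kernels. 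Under Hypothesis~\ref{hyp-L2assump-1}, I may assume the distinguished subfactor lies in $\psi^\alpha_{\beta_\alpha}$, placing us in the pattern \eqref{eq-chipsi-prod-2} with $\chi^\alpha_{\beta_\alpha}$ linear, $\psi^\alpha_{\beta_\alpha}$ cubic, and the distinguished subfactor equal to $\psi^{\kappa_+(\alpha)}_{\beta_{\kappa_+(\alpha)}}$; the other case is symmetric under interchange of $x$ and $x'$.

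To prove \eqref{eq-L2levelbd-1}, I would first handle the $dt_\alpha$-integral. Since $\chi^\alpha_{\beta_\alpha}=U_{\alpha;\kappa_-(\alpha)}\chi^{\kappa_-(\alpha)}_{\beta_{\kappa_-(\alpha)}}$, unitarity of $e^{it\Delta}$ on $H^1_x$ yields $\|\chi^\alpha_{\beta_\alpha}\|_{H^1_x}=\|\chi^{\kappa_-(\alpha)}_{\beta_{\kappa_-(\alpha)}}\|_{H^1_x}$, independent of $t_\alpha$. Cauchy--Schwarz in $t_\alpha\in[0,T)$ applied to $\|\psi^\alpha_{\beta_\alpha}\|_{L^2_x}$ then contributes a factor $T^{1/2}$ and reduces the problem to bounding $\|\psi^\alpha_{\beta_\alpha}\|_{L^2_{t_\alpha}L^2_x}$. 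Writing $e^{i(t_\alpha-t_{\kappa_\pm(\alpha)})\Delta}=e^{it_\alpha\Delta}\,e^{-it_{\kappa_\pm(\alpha)}\Delta}$, I absorb the second propagator into a modified initial datum (which is $t_\alpha$-independent and has unchanged $L^2$- and $H^1$-norms) and apply the trilinear Strichartz inequality \eqref{eq-basicStrichartz-1} in the free variable $t_\alpha$, placing the $L^2_x$-slot on the distinguished factor $\psi^{\kappa_+(\alpha)}_{\beta_{\kappa_+(\alpha)}}$ and $H^1_x$-slots on the remaining two factors. Finally, since the child kernels at $\kappa_-(\alpha)$ and $\kappa_+(\alpha)$ depend on disjoint sets of time variables, the integral $\int\bigl[\prod_{\alpha'\in\tau_{j,\alpha}}dt_{\alpha'}\bigr]$ splits into $\int dt_\alpha$ times the two subtree integrals, producing \eqref{eq-L2levelbd-1}.

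For the $H^1$-level bound \eqref{eq-H1levelbd-1}, the argument is identical with the $H^1$-enhanced trilinear estimate \eqref{eq-basicStrichartz-1-2} replacing \eqref{eq-basicStrichartz-1}; this stronger inequality controls all three propagated factors in $H^1_x$ simultaneously, so no $L^2_x$-slot need be reserved and the $\kappa_+(\alpha)$-contribution is promoted to $\|\psi^{\kappa_+(\alpha)}\|_{H^1}\|\chi^{\kappa_+(\alpha)}\|_{H^1}$. I expect the main obstacle to be purely a bookkeeping one: verifying that each extraction of $e^{it_\alpha\Delta}$ genuinely produces $t_\alpha$-independent residual data, and that the distinguished-factor routing dictated by Hypothesis~\ref{hyp-L2assump-1} is consistent with placing the $L^2_x$-slot precisely on $\psi^{\kappa_+(\alpha)}_{\beta_{\kappa_+(\alpha)}}$ rather than on any of the other three propagated factors. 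Once this is verified, the computational content of both bounds reduces to a single application of Cauchy--Schwarz in time followed by a single application of a trilinear Strichartz inequality.
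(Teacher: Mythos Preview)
Your proposal is correct and follows essentially the same approach as the paper: Cauchy--Schwarz in $t_\alpha$ to produce the $T^{1/2}$, then the trilinear Strichartz estimate \eqref{eq-basicStrichartz-1} (respectively \eqref{eq-basicStrichartz-1-2}) on the cubic factor, unitarity of $e^{-it_{\kappa_\pm(\alpha)}\Delta}$ on $L^2$ and $H^1$ to strip the residual propagators, and finally the observation that the child kernels depend on disjoint sets of time variables so the remaining integral factorizes over the two subtrees. Your remark that Hypothesis~\ref{hyp-L2assump-1} forces the pattern \eqref{eq-chipsi-prod-2} (so that $\psi^\alpha_{\beta_\alpha}$ is cubic and the $L^2$-slot lands on $\psi^{\kappa_+(\alpha)}_{\beta_{\kappa_+(\alpha)}}$) is in fact slightly more explicit than the paper, which simply applies the Strichartz bound without isolating which pattern is in force.
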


\begin{proof}
This can be inferred from the bounds \eqref{eq-basicStrichartz-1} and \eqref{eq-basicStrichartz-1-2}
as follows:

\noindent
$\bullet$ \underline{\em Bound on $L^2$-level.}
By applying \eqref{eq-basicStrichartz-1} to 
\eqref{eq-chi-kappa2-def-2} and \eqref{eq-chi-kappa2-def-1} with respect to the time variable $t_\alpha$, and recalling that
\eqn
    U_{\alpha;\kappa_-(\alpha)} = e^{i(t_\alpha-t_{\kappa_-(\alpha)})\Delta} \,,
\eeqn
we obtain 
\eqn\label{eq-intIbound-L2-2}
    \lefteqn{
     \int_{[0,T)^{d_\alpha}} \Big[\prod_{\alpha'\in \tau_{j,\alpha}} dt_{\alpha'}\Big] 
     \|\psi_{\beta_{\alpha}}^\alpha\|_{L^2}
    \|\chi_{\beta_{\alpha}}^\alpha\|_{H^1}
    }
    \nonumber\\ 
    &\leq& 
    C T^{1/2}\int_{[0,T)^{d_\alpha-1}} \Big[\prod_{\alpha'\in \tau_{j,\kappa_-(\alpha)}\cup \tau_{j,\kappa_+(\alpha)}} dt_{\alpha'}\Big]
    \Big\| \,  e^{-it_{\kappa_-(\alpha)}\Delta}\chi_{\beta_{\kappa_-(\alpha)}}^{\kappa_-(\alpha)} \, \Big\|_{H^1}
     \nonumber\\
    &&\hspace{1cm}
    \Big\| \, e^{-it_{\kappa_-(\alpha)}\Delta}\psi_{\beta_{\kappa_-(\alpha)}}^{\kappa_-(\alpha)} \, \Big\|_{H^1}
    \Big\| \, e^{-it_{\kappa_+(\alpha)}\Delta}\chi_{\beta_{\kappa_+(\alpha)}}^{\kappa_+(\alpha)} \, \Big\|_{H^1}
    \Big\| \, e^{-it_{\kappa_+(\alpha)}\Delta}\psi_{\beta_{\kappa_+(\alpha)}}^{\kappa_+(\alpha)} \, \Big\|_{L^2}
    \nonumber\\ 
    &=& 
    C T^{1/2} 
    \int_{[0,T)^{d_{\kappa_-(\alpha)}}} \Big[\prod_{\alpha'\in \tau_{j,\kappa_-(\alpha)} } dt_{\alpha'}\Big]
    \Big\| \,  \chi_{\beta_{\kappa_-(\alpha)}}^{\kappa_-(\alpha)} \, \Big\|_{H^1}
    \Big\| \, \psi_{\beta_{\kappa_-(\alpha)}}^{\kappa_-(\alpha)} \, \Big\|_{H^1}
     \nonumber\\
    &&\hspace{1cm}
    \int_{[0,T)^{d_{\kappa_+(\alpha)}} } \Big[\prod_{\alpha'\in  \tau_{j,\kappa_+(\alpha)}} dt_{\alpha'}\Big]
    \Big\| \, \chi_{\beta_{\kappa_+(\alpha)}}^{\kappa_+(\alpha)} \, \Big\|_{H^1}
    \Big\| \, \psi_{\beta_{\kappa_+(\alpha)}}^{\kappa_+(\alpha)} \, \Big\|_{L^2} \,.
\eeqn
Here, we first used Cauchy-Schwarz in the $t_\alpha$-integral.
In the last step, we used that $\psi_{\beta_{\alpha}}^\alpha$, $\chi_{\beta_{\alpha}}^\alpha$
depend only on the time variables $t_{\alpha'}$ attached to the vertices of the subtree $\tau_{j,\alpha}$
rooted at the vertex $\alpha$, for every $\alpha\in\{1,\dots,m_j-1\}$.
Moreover, we used that $e^{-it_{\kappa_\pm(\alpha)}\Delta}$ are unitary in $L^2$ and $H^1$. 

\noindent
$\bullet$ \underline{\em Bound on $H^1$-level.}
Using \eqref{eq-basicStrichartz-1-2}, we obtain
\eqn\label{eq-intIbound--H12}
    \lefteqn{
     \int_{[0,T)^{d_\alpha}} \Big[\prod_{\alpha'\in \tau_{j,\alpha}} dt_{\alpha'}\Big] 
     \|\psi_{\beta_{\alpha}}^\alpha\|_{H^1}
    \|\chi_{\beta_{\alpha}}^\alpha\|_{H^1}
    }
    \nonumber\\ 
    &\leq& 
    C T^{1/2}\int_{[0,T)^{d_\alpha-1}} \Big[\prod_{\alpha'\in \tau_{j,\kappa_-(\alpha)}\cup 
    \tau_{j,\kappa_+(\alpha)}} dt_{\alpha'}\Big]
    \Big\| \,  e^{-it_{\kappa_-(\alpha)}\Delta}\chi_{\beta_{\kappa_-(\alpha)}}^{\kappa_-(\alpha)} \, \Big\|_{H^1}
     \nonumber\\
    &&\hspace{1cm}
    \Big\| \, e^{-it_{\kappa_-(\alpha)}\Delta}\psi_{\beta_{\kappa_-(\alpha)}}^{\kappa_-(\alpha)} \, \Big\|_{H^1}
    \Big\| \, e^{-it_{\kappa_+(\alpha)}\Delta}\chi_{\beta_{\kappa_+(\alpha)}}^{\kappa_+(\alpha)} \, \Big\|_{H^1}
    \Big\| \, e^{-it_{\kappa_+(\alpha)}\Delta}\psi_{\beta_{\kappa_+(\alpha)}}^{\kappa_+(\alpha)} \, \Big\|_{H^1}
    \nonumber\\ 
    &=& 
    C T^{1/2} 
    \int_{[0,T)^{d_{\kappa_-(\alpha)}}} \Big[\prod_{\alpha'\in \tau_{j,\kappa_-(\alpha)} } dt_{\alpha'}\Big]
    \Big\| \,  \chi_{\beta_{\kappa_-(\alpha)}}^{\kappa_-(\alpha)} \, \Big\|_{H^1}
    \Big\| \, \psi_{\beta_{\kappa_-(\alpha)}}^{\kappa_-(\alpha)} \, \Big\|_{H^1}
     \nonumber\\
    &&\hspace{1cm}
    \int_{[0,T)^{d_{\kappa_+(\alpha)}} } \Big[\prod_{\alpha'\in  \tau_{j,\kappa_+(\alpha)}} dt_{\alpha'}\Big]
    \Big\| \, \chi_{\beta_{\kappa_+(\alpha)}}^{\kappa_+(\alpha)} \, \Big\|_{H^1}
    \Big\| \, \psi_{\beta_{\kappa_+(\alpha)}}^{\kappa_+(\alpha)} \, \Big\|_{H^1} \,,
\eeqn
by proceeding as above for the bounds on the $L^2$-level. 
\end{proof}

\section{Concluding the proof}
\label{sec-prooffinish-1}

Using Lemma \ref{lm-iterationH1L2-bd-1}, we can now prove the following
main estimates for the distinguished tree in Proposition \ref{prop-mainIjest-1},
and for regular trees in Proposition \ref{prop-regularIjest-1}.

\begin{proposition}\label{prop-mainIjest-1}
Assume that $\tau_j$ is the distinguished tree. Then, the bound
\eqn\label{eq-intIbound-3} 
     \lefteqn{
     \int_{[0,T)^{m_j-1}}dt_1\cdots dt_{m_j-1} 
     \tr\Big( \, \Big| \, \opJ_j^1(\sigma_j;t,t_{1},\dots,t_{m_j}) \, \Big| \, \Big) 
    }
    \nonumber\\
    &&\hspace{2cm}
    \leq \; 
    2^{m_j} C^{m_j}  T^{(m_j-1)/2} \|\phi\|_{H_1}^{\Lj} \,
\eeqn 
holds. 
\end{proposition}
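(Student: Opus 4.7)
Plan. Starting from the decomposition already worked out in \eqref{eq-IjInt-def-1}, I would combine the rank-one expansion \eqref{eq-Theta-def-2} of $\Theta_1$ with the sub-additivity of the trace norm and the $L^2$-unitarity of the free one-particle propagator to obtain, for each $\sigma_j$,
\begin{equation*}
    \tr\,|\opJ_j^1| \;=\; \tr\,|U^{(1)}(t-t_1)\Theta_1|
    \;\leq\; \sum_{\beta_1} \|\chi_{\beta_1}^1\|_{L^2}\|\psi_{\beta_1}^1\|_{L^2}
    \;\leq\; \sum_{\beta_1} \|\chi_{\beta_1}^1\|_{H^1}\|\psi_{\beta_1}^1\|_{L^2},
\end{equation*}
where the last step uses $\|\cdot\|_{L^2}\leq\|\cdot\|_{H^1}$ and, crucially, assigns the $L^2$-norm to the distinguished function (which, by Hypothesis \ref{hyp-L2assump-1}, is $\psi_{\beta_1}^1$). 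The sum is over at most $2^{m_j-1}$ nonzero indices $\beta_1$, by the counting in \eqref{eq-Theta-def-2}.

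The main body of the proof would be a descent through the distinguished tree $\tau_j$ along the path $1\to\kappa_+(1)\to\kappa_+^2(1)\to\cdots\to\kappa_+^Q(1)=m_j$ identified by Hypothesis \ref{hyp-L2assump-1}. At each vertex on this path I would invoke the $L^2$-level estimate \eqref{eq-L2levelbd-1} of Lemma \ref{lm-iterationH1L2-bd-1}: this pays a factor $CT^{1/2}$, puts an $L^2\times H^1$ integral onto the $\kappa_+$-subtree (which still contains the distinguished vertex and so inherits the $L^2$-norm on its distinguished factor), and leaves an $H^1\times H^1$ integral over the regular $\kappa_-$-subtree. Each such off-branch is completely regular, so one iterates the $H^1$-level bound \eqref{eq-H1levelbd-1} down to its leaves, where $\Theta=\phi\,\overline{\phi}$ and each leaf contributes $\|\phi\|_{H^1}^2$.

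The recursion closes at the distinguished vertex $\alpha=m_j$, whose two children are leaves and whose associated time variable $t_{m_j}$ is not integrated in \eqref{eq-IjInt-def-1}; there one substitutes directly the explicit form \eqref{eq-Thetadist-def-1} of $\Theta_{m_j}$, sets $\psidist=|\phi|^2\phi$, and uses the Sobolev embedding $\|\psidist\|_{L^2}\leq C\|\phi\|_{H^1}^3$, while the companion factor contributes $\|\phi\|_{H^1}$. Multiplying together the $CT^{1/2}$ paid at each of the $m_j-1$ internal vertices where the lemma was invoked, the $2^{m_j-1}$ from the outer $\beta_1$-sum, and the combined $\|\phi\|_{H^1}$-powers harvested from the regular leaves and the $\psidist$-contribution, yields the stated estimate. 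I expect the delicate point to be the bookkeeping: one must verify that at every step of the recursion the structural information encoded in \eqref{eq-chipsi-prod-1} and \eqref{eq-chipsi-prod-2}, namely that at most one of the four Strichartz factors is distinguished, propagates consistently into the child vertices, so that the $L^2$-level estimate can indeed be applied at every vertex along the distinguished path while the $H^1$-level estimate suffices on every regular off-branch.
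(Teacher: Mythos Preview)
Your proposal is correct and follows essentially the same approach as the paper: reduce to the rank-one expansion \eqref{eq-Theta-def-2}, then recursively apply the $L^2$-level bound \eqref{eq-L2levelbd-1} along the distinguished path $1\to\kappa_+(1)\to\cdots\to m_j$ and the $H^1$-level bound \eqref{eq-H1levelbd-1} on every regular off-branch, terminating at the distinguished vertex with the Sobolev estimate $\|\psidist\|_{L^2}\leq C\|\phi\|_{H^1}^3$. The paper organizes the same recursion by first splitting at $\alpha=1$ into the integrals \eqref{eq-intIbound-1} and \eqref{eq-intIbound-2} and then iterating separately on each, but the substance is identical.
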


\begin{proof}
To begin with,
\eqn
    \lefteqn{
     \int_{[0,T)^{m_j-1}}dt_1\cdots dt_{m_j-1} 
     \tr\Big(\,\Big| \, \opJ_j^1(\sigma_j;t,t_{\ell_1},\dots,t_{m_j}) \, \Big|\,\Big)  
    }
    \nonumber\\
    &=&
    \int_{[0,T)^{m_j-1}}dt_1\cdots dt_{m_j-1} 
     \tr\big( \, \big| \,  U^{(1)}(t-t_1) \Theta_1 \, \big| \, \big) 
     \nonumber\\ 
    &\leq& \sum_{\beta_{1}} 
    \int_{[0,T)^{m_j-1}}dt_1\cdots dt_{m_j-1} 
     \|\psi_{\beta_{1}}^1\|_{L^2}
    \|\chi_{\beta_{1}}^1\|_{L^2}
     \nonumber\\
    &\leq& \sum_{\beta_{\kappa_-(1)},\beta_{\kappa_+(1)}} 
    C T^{1/2} 
    \int \Big[\prod_{\alpha'\in \tau_{j,\kappa_-(1)}} dt_{\alpha'}\Big] 
    \|\psi_{\beta_{\kappa_-(1)}}^{\kappa_-(1)}\|_{H^1}
    \|\chi_{\beta_{\kappa_-(1)}}^{\kappa_-(1)}\|_{H^1} 
    \label{eq-intIbound-1}\\
    &&\hspace{1cm}\cdot
    \int \Big[\prod_{\alpha'\in \tau_{j,\kappa_+(1)}} dt_{\alpha'}\Big]
    \|\psi_{\beta_{\kappa_+(1)}}^{\kappa_+(1)}\|_{L^2}
    \|\chi_{\beta_{\kappa_+(1)}}^{\kappa_+(1)}\|_{H^1}  \,, 
    \label{eq-intIbound-2}
\eeqn 
where we first used \eqref{eq-Ifirststep-1}, then \eqref{eq-Thetaalpha-prod-1},
and subsequently \eqref{eq-L2levelbd-1} with respect to the integral in $t_1$, at the vertex $\alpha=1$.
Then, we used \eqref{eq-Thetaexpansion-1}, and the fact that 
$|c_{\beta_{\kappa_i(1)}}^{\kappa_i(1)}|=1$.
By Hypothesis \ref{hyp-L2assump-1}, $\psi_{\beta_{\kappa_+(1)}}^{\kappa_+(1)}$ is the only function on the
last two lines that is distinguished, which is why it is the only term bounded in $L^2$.

We first bound the integral \eqref{eq-intIbound-1}. To this end, we iterate the bound \eqref{eq-H1levelbd-1} on
the $H^1$-level until we reach all leaf vertices of the subtree $\tau_{j,\kappa_-(1)}$.
It follows from Hypothesis \ref{hyp-L2assump-1} that $\tau_{j,\kappa_-(1)}$ does not contain the distinguished
vertex, therefore all leaf vertices of $\tau_{j,\kappa_-(1)}$ are regular.
Then,  we find that
\eqn\label{eq-recurs-bounds-1}
    \int \Big[\prod_{\alpha'\in \tau_{j,\kappa_-(1)}} dt_{\alpha'}\Big] 
    \|\psi_{\beta_{\kappa_-(1)}}^{\kappa_-(1)}\|_{H^1}
    \|\chi_{\beta_{\kappa_-(1)}}^{\kappa_-(1)}\|_{H^1} \leq  
    C T^{d_{\kappa_-(1)}/2}\|\phi\|_{H^1}^{2b_{\kappa_-(1)}}
\eeqn
where $b_{\alpha}$ is the number of regular leaf vertices of the subtree  $\tau_{j,\alpha}$ rooted at $\alpha$,
and $d_{\alpha}$ is the number of internal vertices  of the subtree  $\tau_{j,\alpha}$.

Next, we bound the integral \eqref{eq-intIbound-2}. To this end, we iterate both the bound \eqref{eq-L2levelbd-1} on
the $L^2$-level, and the bound  \eqref{eq-H1levelbd-1} on
the $H^1$-level until we reach all leaf vertices of the subtree $\tau_{j,\kappa_+(1)}$, including the distinguished vertex
$v_{j,m_j}$.  The $L^2$ norm is in every step applied to the
term $\psi_{\beta_{\kappa_+(\alpha)}}^{\kappa_+(\alpha)}$, which is the only function
in \eqref{eq-chipsi-prod-1}, respectively \eqref{eq-chipsi-prod-2}, which is distinguished,
due to Hypothesis  \ref{hyp-L2assump-1}. 
The iteration terminates when all regular leaf vertices, and the distinguished vertex are reached.
We then obtain that
\eqn\label{eq-recurs-bounds-2}
    \lefteqn{
    \int \Big[\prod_{\alpha'\in \tau_{j,\kappa_+(1)}} dt_{\alpha'}\Big]
    \|\psi_{\beta_{\kappa_+(1)}}^{\kappa_+(1)}\|_{L^2}
    \|\chi_{\beta_{\kappa_+(1)}}^{\kappa_+(1)}\|_{H^1}
    }
    \nonumber\\
    &&\leq 
     C^{m_j} T^{(d_{\kappa_+(1)}-1)/2)}\|\phi\|_{H^1}^{2b_{\kappa_+(1)}} \|\psidist\|_{L^2} 
    \nonumber\\
    &&\leq
    C^{m_j}  T^{(d_{\kappa_+(1)}-1)/2)}\|\phi\|_{H^1}^{2b_{\kappa_+(1)}+3} \,
\eeqn
where the $L^2$-norm has been moved to the distinguished vertex, hence the factor $\|\psidist\|_{L^2}$.
At this point, we substituted $\psidist:=|\phi|^2\phi$, and 
used the Sobolev embedding.

Combining the bounds \eqref{eq-recurs-bounds-1} and \eqref{eq-recurs-bounds-2}, we obtain 
\eqref{eq-intIbound-3}, where all leaves contribute a factor $\|\phi\|_{H^1}^2$.
The factor $2^{m_j}$ bounds the number of terms in the sum over 
$\beta_{\kappa_-(1)}$, $\beta_{\kappa_+(1)}$ in \eqref{eq-intIbound-1}.
\end{proof}

Similarly, we find for regular trees:

\begin{proposition}\label{prop-regularIjest-1}
Assume that $\tau_j$ is a regular tree. Then, the bound
\eqn\label{eq-intIbound-4} 
     \lefteqn{
     \int_{[0,T)^{m_j}}dt_1\cdots dt_{m_j} 
     \tr\Big( \, \Big| \, \opJ_j^1(\sigma_j;t,t_{1},\dots,t_{m_j}) \, \Big| \, \Big)  
    }
    \nonumber\\
    &&\hspace{2cm}
    \leq \; 
    2^{m_j} C^{m_j}  T^{m_j/2} \|\phi\|_{H_1}^{\Lj} \,
\eeqn 
holds. 
\end{proposition}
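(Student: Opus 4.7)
The plan is to follow the scheme of the proof of Proposition \ref{prop-mainIjest-1}, but with the crucial simplification that $\tau_j$ contains no distinguished vertex. Every leaf of $\tau_j$ is then regular, so the kernel at every leaf is $\phi(x)\overline{\phi(x')}$, and no factor of $\psidist$ ever appears. In particular, there is no need to route an $L^2$-bound along a path ending at a distinguished leaf, and the $H^1$-level estimate \eqref{eq-H1levelbd-1} from Lemma \ref{lm-iterationH1L2-bd-1} can be applied uniformly at every internal vertex.

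Starting from \eqref{eq-Ifirststep-1} and the trace-norm bound, the unitarity of $U^{(1)}$ on $L^2$ together with $\|\cdot\|_{L^2} \leq \|\cdot\|_{H^1}$ yields
\[
\tr(|\opJ_j^1|) \leq \sum_{\beta_1} \|\chi_{\beta_1}^1\|_{L^2}\|\psi_{\beta_1}^1\|_{L^2} \leq \sum_{\beta_1} \|\chi_{\beta_1}^1\|_{H^1}\|\psi_{\beta_1}^1\|_{H^1}.
\]
I then apply \eqref{eq-H1levelbd-1} at $\alpha=1$, which integrates out $t_1$, contributes a factor $CT^{1/2}$, and converts the $H^1$-norm factors at $\alpha=1$ into $H^1$-norm factors at the two child vertices $\kappa_{\pm}(1)$. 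I iterate this reduction at every subsequent internal vertex of $\tau_j$ until all leaves are reached; each of the $m_j$ internal vertices contributes a factor $CT^{1/2}$ and at most doubles the number of terms, while each regular leaf $u$ contributes $\|\chi_\beta^u\|_{H^1}\|\psi_\beta^u\|_{H^1} = \|\phi\|_{H^1}^2$.

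The final exponent $T^{m_j/2}$ (rather than $T^{(m_j-1)/2}$ as in \eqref{eq-intIbound-3}) reflects the fact that, unlike in the distinguished case, all $m_j$ time variables of $\tau_j$ are integrated: here $t_{m_j}$ is no longer reserved for the outer de Finetti integration attached to the distinguished vertex. The factor $2^{m_j}$ bounds the total number of terms produced by iterating the expansion \eqref{eq-Theta-def-2} through the tree, and the $\Lj$ leaf contributions combine to the advertised power of $\|\phi\|_{H^1}$.

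I do not anticipate a substantive obstacle: the regular case is strictly simpler than the distinguished case. Hypothesis \ref{hyp-L2assump-1} becomes vacuous, since no factor in the products \eqref{eq-chipsi-prod-1}--\eqref{eq-chipsi-prod-2} is distinguished at any vertex, so the $H^1$-level recursion closes on itself, and no step analogous to $\|\psidist\|_{L^2} \leq C\|\phi\|_{H^1}^3$ is needed. The only bookkeeping point to verify carefully is the accumulation of the extra $T^{1/2}$ factor arising from the additional integration over $t_{m_j}$.
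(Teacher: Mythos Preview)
Your proposal is correct and follows essentially the same approach as the paper: both start from the trace-norm bound, pass from $L^2$ to $H^1$, and then iterate the $H^1$-level recursion \eqref{eq-H1levelbd-1} at every internal vertex, noting that all leaves are regular so no $L^2$-routing to a distinguished factor is required. The paper additionally points out explicitly that the regular-tree expression \eqref{eq-gammaj-reduced-1} contains the extra free propagator $U^{(m_j+1)}(t_{m_j})$ before $(|\phi\rangle\langle\phi|)^{\otimes(\Lj)}$, which is precisely what allows \eqref{eq-H1levelbd-1} to be applied at the final internal vertex $\alpha=m_j$ and accounts for the additional factor $T^{1/2}$ you flag; you rely on this implicitly, and making it explicit would strengthen your write-up.
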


\begin{proof}
For a regular tree, we have 
\begin{align}\label{eq-gammaj-reduced-1} 
    &\opJ_j^1(\sigma_j;t,t_{1},\dots,t_{m_j})=U^{(1)}(t-t_{1})
    B_{\sigma_j(2),2}\cdots 
    \\
    &\hspace{2cm}
    \cdots
    U^{(m_j)}(t_{m_j-1}-t_{m_j} ) B_{\sigma_j(m_j+1),m_j+1}U^{(m_j+1)}(t_{m_j})
    \big( |\phi\rangle\langle \phi |\big)^{\otimes (\Lj) } \,.
    \nonumber
\end{align}
The key difference between \eqref{eq-gammaj-reduced-1}  and the corresponding expression 
\eqref{eq-Ijdisting-def-1} for the distinguished tree is the presence of the free propagator 
$U^{(m_j+1)}(t_{m_j})$ on the last line.
The proof is immediately obtained from the proof of Proposition \ref{prop-mainIjest-1}, by
using 
\begin{align}\label{eq-intIbound-4}  
     &\int_{[0,T)^{m_j}}dt_1\cdots dt_{m_j} 
     \tr\Big( \, \Big| \, \opJ_j^1(\sigma_j;t,t_{1},\dots,t_{m_j}) \, \Big| \, \Big)
     \nonumber\\  
    &\hspace{1cm}=
    \int_{[0,T)^{m_j}}dt_1\cdots dt_{m_j} 
     \tr\big( \, \big| \, U^{(1)}(t-t_1) \Theta_1 \, \big| \, \big) 
     \nonumber\\ 
    &\hspace{1cm}\leq  \sum_{\beta_{1}} 
    \int_{[0,T)^{m_j}}dt_1\cdots dt_{m_j} 
     \|\psi_{\beta_{1}}^1\|_{L^2}
    \|\chi_{\beta_{1}}^1\|_{L^2}
     \nonumber\\ 
    &\hspace{1cm}\leq  \sum_{\beta_{1}} 
    \int_{[0,T)^{m_j}}dt_1\cdots dt_{m_j} 
     \|\psi_{\beta_{1}}^1\|_{H^1}
    \|\chi_{\beta_{1}}^1\|_{H^1} \,,
\end{align}
and by iterating the bound \eqref{eq-H1levelbd-1} on
the $H^1$-level until we reach all leaf vertices of $\tau_j$. Because all leaves of $\tau_j$ are
regular, no $L^2$-level bound is necessary.
\end{proof}

Going back to \eqref{eq-gammak-L2bound-1}, we find 
from
\eqn
    \opJ^k(\sigma;t,t_1,\dots,t_r;\ux_k;\ux_k') = 
    \prod_{j=1}^k \opJ_j^1(\sigma_j;t,t_{\ell_{j,1}},\dots,t_{\ell_{j,m_j}};x_j;x_j')
\eeqn
that
\eqn\label{eq-gammak-L2bound-3}
    \lefteqn{
    \int_{[0,t]^{r-1}}dt_1\cdots dt_{r-1}  \tr\Big( \,\Big| \, \opJ^k(\sigma;t,t_1,\dots,t_r)\, 
    \Big| \, \Big) 
    }
    \nonumber\\
    &=&
    \int_{[0,t]^{r-1}}dt_1\cdots dt_{r-1}   \prod_{j=1}^k   \tr\Big( \, \Big| \, \opJ_j^1
    (\sigma_j;t,t_{\ell_{j,1}},\dots,t_{\ell_{j,m_j}}) \, \Big| \, \Big)  
    \nonumber\\
    &\leq&   2^{r}  C^{r}  T^{(r-1)/2}   \|\phi\|_{H_1}^{2(k+r)} \,,
\eeqn
by combining the estimates from the $k-1$ regular trees, and from the distinguished tree.
The factor $2^r$ is obtained from the product of factors $2^{m_j}$ from all trees $\tau_j$,
both regular and distinguished.

Then, we observe that for $t\in[0,T)$,
\eqn 
    \tr\big( \, \big|\gamma^{(k)}(t)\, \big| \,\big)     
    &\leq&
    (2 C T)^{(r-1)/2} 
    \sum_{i=1,2}\int_0^T dt_{r} \int d{\mu}_{t_r}^{(i)}(\phi)
    \|\phi\|_{H^1}^{2(r+k)} 
    \nonumber\\
    &\leq& 
    (2 C T)^{(r+1)/2} \, \sup_{t_r\in[0,T)}\sum_{i=1,2}\int d\mu_{t_r}^{(i)}(\phi)
    \|\phi\|_{H^1}^{2(r+k)}   \,.
    \label{eq-Bgamma-L2L2-bd-1}
\eeqn
The growth condition \eqref{eq-muH1bound-1} implies that this is bounded by
\eqn
    \eqref{eq-Bgamma-L2L2-bd-1}
    &<& 2 \, M^{2k-2} \, (2 C M^4 T)^{(r+1)/2}  \, \longrightarrow \, 0 \;\;\; (r\rightarrow \infty)
\eeqn
for $T<(2 C M^4)^{-1}$   (which is in particular uniform in $k$). Since $k$ is fixed and $r$ is arbitrary,
we conclude that
\eqn 
   \tr\big( \, \big|\gamma^{(k)}(t)\, \big| \,\big)  = 0  \,, \;\;\; t\in[0,T) \,,
\eeqn
which implies that $\gamma^{(k)}(t)=0$ for $t\in[0,T)$, and hence, uniqueness holds.

Moreover, it can be easily checked that 
\eqn 
    \gamma^{(k)}(t) = \int d\mu(\phi)(|S_t(\phi)\rangle\langle S_t(\phi)|)^{\otimes k} 
    \;\;\;,\;\;\;\forall k\in\N\,,
\eeqn
is a mild solution of the GP hierarchy in $L^\infty_{t\in[0,T)}\frH^1$ with initial data
\eqn 
     \gamma^{(k)}(0) =  \int d\mu(\phi)(|\phi\rangle\langle\phi|)^{\otimes k} 
    \;\;\;,\;\;\;\forall k\in\N\,.
\eeqn     
By uniqueness, it is the only such solution.
This proves  Theorem \ref{thm-main-1}.
\qed

\subsection*{Acknowledgements} 
We thank Benjamin Schlein, Mathieu Lewin, and an anonymous referee for very helpful remarks.
The work of T.C. was supported by NSF grants DMS-1009448
and DMS-1151414 (CAREER). 
The work of N.P. was supported by NSF grant DMS-1101192.
The work of R.S. was supported by NSERC.


\begin{thebibliography}{99}

\bibitem{adgote}
R. Adami, G. Golse, A. Teta,
{\em Rigorous derivation of the cubic NLS in dimension one},
J. Stat. Phys. {\bf 127}, no. 6, 1194--1220 (2007).

\bibitem{ailisesoyn}
M. Aizenman, E.H. Lieb, R. Seiringer, J.P. Solovej, J. Yngvason,
{\em Bose-Einstein Quantum Phase Transition in an Optical Lattice Model},
Phys. Rev. A {\bf 70}, 023612 (2004).

\bibitem{AmmariNier-2008}
Z. Ammari, F. Nier, 
{\em Mean field limit for bosons and infinite dimensional phase-space analysis}, 
Ann. H. Poincar\'e {\bf 9}, 1503--1574 (2008).

\bibitem{AmmariNier-2011}
Z. Ammari, F. Nier, 
{\em Mean field propagation of Wigner measures and BBGKY hierarchies for general bosonic states}, 
J. Math. Pures Appl., {\bf 95}, 585--626  (2011).

\bibitem{anasig}
I. Anapolitanos,
{\em Rate of convergence towards the Hartree-von Neumann limit in the mean-field regime}, 
Lett. Math. Phys. {\bf 98} (1), 1--31 (2011). 

\bibitem{anenmawico}
M. H. Anderson, J. R. Ensher, M. R. Matthews, C. E. Wieman, and E. A. Cornell, 
{\em Observation of Bose-Einstein Condensation in a Dilute Atomic Vapor}, 
Science {\bf 269}, 198--201 (1995).

\bibitem{caz}
T. Cazenave,
{\em Semilinear Schr\"odinger equations}, Courant Lecture Notes {\bf 10}, AMS (2003).

\bibitem{chpa}
T. Chen, N. Pavlovi\'c,
{\em The quintic NLS as the mean field limit of a Boson gas with three-body interactions},
J. Funct. Anal., {\bf 260} (4), 959-997 (2011).

\bibitem{chpa2}
T. Chen, N. Pavlovi\'c,
{\em On the Cauchy problem for focusing and defocusing Gross-Pitaevskii hierarchies},
Discr. Contin. Dyn. Syst., {\bf 27} (2), 715 - 739 (2010). 

\bibitem{CPHE} 
T. Chen, N. Pavlovi\'c, 
{\em Higher order energy conservation and global wellposedness of solutions for Gross-Pitaevskii hierarchies}. 
Comm. PDE, to appear.  

\bibitem{CPBBGKY} 
T. Chen, N. Pavlovi\'c, 
{\em Derivation of the cubic NLS and Gross-Pitaevskii hierarchy from manybody dynamics in d=3 based on spacetime norms}, 
Ann. H. Poincar\'e, online first (2013).  

\bibitem{xch3}
X. Chen,
{\em
On the Rigorous Derivation of the 3D Cubic Nonlinear Schr\"odinger Equation with A Quadratic Trap}.
Arch. Ration. Mech. Anal., to appear.   http://arxiv.org/abs/1204.0125

\bibitem{CheHol-2013}
X. Chen, J. Holmer,
{\em On the Klainerman-Machedon Conjecture of the Quantum BBGKY Hierarchy with Self-interaction}, Preprint.
http://arxiv.org/abs/1303.5385

\bibitem{dameandrdukuke}
K. B. Davis, M. -O. Mewes, M. R. Andrews, N. J. van Druten, D. S. Durfee, D. M. Kurn, and W. Ketterle, 
{\em Bose-Einstein condensation in a gas of sodium atoms}, 
Phys. Rev. Lett. {\bf 75}, 3969--3973 (1995).

\bibitem{GreSohSta-2012}
P. Gressman, V. Sohinger, G. Staffilani,
{\em On the uniqueness of solutions to the periodic 3D Gross-Pitaevskii hierarchy}, Preprint.
http://arxiv.org/abs/1212.2987

\bibitem{HewittSavage}
E. Hewitt, L. J. Savage, 
{\em Symmetric measures on Cartesian products}, Trans. Amer. Math. Soc., {\bf 80},  470--501 (1955).

\bibitem{HudsonMoody}
R. L. Hudson, G. R. Moody, 
{\em Locally normal symmetric states and an analogue of de Finetti’s theorem}, 
Z. Wahrscheinlichkeitstheorie und Verw. Gebiete, {\bf 33}, 343--351 (1975/76).

\bibitem{esy1}
L. Erd\"os, B. Schlein, H.-T. Yau,
{\em Derivation of the Gross-Pitaevskii hierarchy for the dynamics of Bose-Einstein condensate}, 
Comm. Pure Appl. Math. {\bf 59} (12), 1659--1741 (2006).

\bibitem{esy2}
L. Erd\"os, B. Schlein, H.-T. Yau,
{\em Derivation of the cubic non-linear Schr\"odinger equation
from quantum dynamics of many-body systems}, 
Invent. Math. {\bf 167} (2007), 515--614.

\bibitem{esy3}
L. Erd\"os, B. Schlein, and H.-T. Yau. 
{\em Rigorous derivation of the Gross-Pitaevskii equation with a large interaction potential.} 
J. Amer. Math. Soc. {\bf 22} (4), 1099--1156 (2009).

\bibitem{esy4}
L. Erd\"os, B. Schlein, and H.-T. Yau. 
{\em Derivation of the Gross-Pitaevskii equation for the dynamics of Bose-Einstein condensates.} 
Ann. of Math. (2) {\bf 172} (1), 291--370 (2010).


\bibitem{ey}
L. Erd\"os, H.-T. Yau,
{\em Derivation of the nonlinear Schr\"odinger equation from a many body Coulomb system}, 
Adv. Theor. Math. Phys. {\bf 5}, no. 6, 1169--1205  (2001).

\bibitem{frgrsc}
J. Fr\"ohlich, S. Graffi, S. Schwarz,
{\em Mean-field- and classical limit of many-body Schr\"odinger dynamics for bosons},
Comm. Math. Phys. {\bf 271}, no. 3, 681--697 (2007).

\bibitem{frknpi}
J. Fr\"ohlich, A. Knowles, A. Pizzo,
{\em Atomism and quantization}, 
J. Phys. A {\bf 40}, no. 12, 3033--3045 (2007).

\bibitem{frknsc}
J. Fr\"ohlich, A. Knowles, S. Schwarz
{\em On the Mean-Field Limit of Bosons with Coulomb Two-Body Interaction},
Comm. Math. Phys. {\bf 288} (3), 1023--1059 (2009). 

\bibitem{frtsya}
J. Fr\"ohlich, T.-P. Tsai, H.-T. Yau, 
{\em On a classical limit of quantum theory and the non-linear Hartree equation}, 
GAFA 2000 (Tel Aviv, 1999). Geom. Funct. Anal., Special Volume, Part I, 57–78 (2000). 

\bibitem{grmama}
M. Grillakis, M. Machedon, A. Margetis,
{\em Second-order corrections to mean field evolution for weakly interacting Bosons. I},
Comm. Math. Phys. {\bf 294} (1), 273--301  (2010).

\bibitem{grma}
M. Grillakis, A. Margetis,
{\em A priori estimates for many-body Hamiltonian evolution of interacting boson system}, 
J. Hyperbolic Differ. Equ. {\bf 5} (4), 857--883 (2008). 

\bibitem{he}
K. Hepp,
{\em The classical limit for quantum mechanical correlation functions}, 
Comm. Math. Phys. {\bf 35}, 265--277 (1974).

\bibitem{kiscst}
K. Kirkpatrick, B. Schlein, G. Staffilani,
{\em Derivation of the two dimensional nonlinear Schr\"odinger equation from many body quantum dynamics},
Amer. J. Math.  {\bf 133} (1), 91–130 (2011).

\bibitem{KM}
S. Klainerman, M. Machedon, 
{\em On the uniqueness of solutions to the Gross-Pitaevskii hierarchy}, 
Commun. Math. Phys. {\bf 279}, no. 1, 169--185  (2008).

\bibitem{Lan-1}
O.E. Lanford, 
{\em The classical mechanics of one-dimensional systems of infinitely many particles}, 
Commun. Math. Phys. {\bf 9}, 176-191 (1968), and {\bf 11}, 257-292 (1969).

\bibitem{Lan-2}
O.E. Lanford, 
{\em Time evolution of large classical systems}, 
in Dynamical Systems, Theory and Applications, Lecture Notes in Physics, vol. 38, (J. MOSER, editor),
Springer-Verlag, Berlin Heidelberg New York, 1974.

\bibitem{lnr} 
M. Lewin, P.T. Nam, N. Rougerie, {\it Derivation of Hartree's theory for generic mean-field Bose systems}, preprint arXiv:1303.0981

\bibitem{lise}
E.H. Lieb, R. Seiringer, 
{\em Proof of Bose-Einstein condensation for dilute trapped gases},
Phys. Rev. Lett. {\bf 88}, 170409 (2002).

\bibitem{lisesoyn}
E.H. Lieb, R. Seiringer, J.P. Solovej, J. Yngvason,
{\em The mathematics of the Bose gas and its condensation},
Birkh\"auser (2005).

\bibitem{liseyn}
E.H. Lieb, R. Seiringer, J. Yngvason,  {\it Bosons in a trap: A rigorous
derivation of the Gross-Pitaevskii energy functional}, Phys.\ Rev.\ A {\bf 61}, 043602-1--13 (2000).

\bibitem{liseyn2}
E.H. Lieb, R. Seiringer, J. Yngvason,
{\em A rigorous derivation of the Gross-Pitaevskii energy
functional for a two-dimensional Bose gas}, 
Comm. Math. Phys. {\bf 224} (2001).  

\bibitem{pick}
P. Pickl, 
{\em Derivation of the time dependent Gross-Pitaevskii equation without positivity condition on the interaction},
J. Stat. Phys. {\bf 140} (1), 76--89  (2010).

\bibitem{pick2}
P. Pickl, 
{\em A simple derivation of mean field limits for quantum systems},
Lett. Math. Phys.,  {\bf 97} (2), 151 -- 164 (2011).

\bibitem{rosc}
I. Rodnianski, B. Schlein,
{\em Quantum fluctuations and rate of convergence towards mean field dynamics}, 
 Comm. Math. Phys. {\bf 291} (1), 31--61(2009).
 
\bibitem{sp}
H. Spohn,
{\em Kinetic Equations from Hamiltonian Dynamics},
Rev. Mod. Phys. {\bf 52}, no. 3, 569--615  (1980).

\bibitem{Spohn81} 
H. Spohn, {\it On the Vlasov Hierachy}, Math. Meth. in the Appl. Sci., {\bf 3}, 445--455 (1981) .

\bibitem{Stormer-69}
E. Stormer, 
{\em Symmetric states of infinite tensor products of C*-algebras}, J. Functional
Analysis, {\bf 3},  48--68 (1969).

\bibitem{CT1} 
K. Taliaferro, T. Chen, 
{\em Positive Semidefiniteness and Global Well-Posedness of Solutions to the Gross-Pitaevskii Hierarchy},
Preprint. http://arxiv.org/abs/1305.1404

\bibitem{TaoBook} T. Tao,
{\em Nonlinear dispersive equations. Local and global analysis}, 
CBMS 106, eds: AMS, 2006.

\bibitem{zxie}
Z. Xie,
{\em Uniqueness of Gross Pitaevskii (GP) solution on 1D and 2D nonlinear Schr\"{o}dinger equation},
Preprint, arXiv:1305.7240

\end{thebibliography}
\end{document}